\title{
Efficient Federated Learning against Heterogeneous
and Non-stationary Client Unavailability
}
\author{%
Ming Xiang$^1$ \quad
Stratis Ioannidis$^1$ \quad 
Edmund Yeh$^1$ \quad 
Carlee Joe-Wong$^2$ \quad 
Lili Su$^1$ \\
$^1$Northeastern University, Boston, MA \quad 
$^2$Carnegie Mellon University, Pittsburgh, PA\\
\texttt{\{xiang.mi,l.su\}@northeastern.edu} \\
\texttt{\{ioannidis,eyeh\}@ece.neu.edu}\\
\texttt{cjoewong@andrew.cmu.edu}
}
\begin{document}

\newpage

\maketitle

\begin{abstract}
Addressing intermittent client availability is critical for the real-world deployment of federated learning algorithms. 
Most prior work either overlooks the potential non-stationarity in the dynamics of client unavailability %
or requires substantial memory/computation overhead.
We study federated learning in the presence of 
heterogeneous and non-stationary client availability, which may occur when the deployment environments are uncertain, or the clients are mobile. 
The impacts of heterogeneity and non-stationarity on client unavailability can be significant, as we illustrate using~\FedAvg, the most widely adopted federated learning algorithm. 
We propose~\FedAPM, which includes novel algorithmic structures that (i) compensate for missed computations due to unavailability with only $O(1)$ additional memory and computation with respect to standard \FedAvg, and (ii) evenly diffuse local updates within the federated learning system through implicit gossiping, despite being agnostic to non-stationary dynamics.
We show that~\FedAPM~converges to a stationary point of even non-convex 
objectives while achieving the desired linear speedup property.
We corroborate our analysis with numerical experiments over diversified client unavailability dynamics on real-world data sets. 
\end{abstract}

\section{Introduction}
\label{sec: intro}
Federated learning is a %
distributed machine learning 
approach that enables training global models without disclosing raw local data 
\cite{mcmahan2017communication,kairouz2021advances}. 
It has been adopted in commercial applications such as autonomous vehicles 
\cite{chen2021bdfl,zeng2022federated,peng2023privacy}, the Internet of things \cite{nguyen2019diot}, and natural language processing \cite{yang2018applied,ramaswamy2019federated}.

Heterogeneous data 
and massive client populations
are two of the defining characteristics of cross-device federated learning systems \cite{mcmahan2017communication,kairouz2021advances}.
Despite intensive efforts \cite{mcmahan2017communication,Li2020,yuan2022,ruan2021towards,kairouz2021advances}, 
several key challenges that arise from the involvement of large-scale
client populations
are often overlooked in the existing literature \cite{perazzone2022communication}.
One of the primary hurdles is the issue of client unavailability.
Intuitively, more active clients drive the global model to their local optima by overfitting their local data,
which biases the training. 
In addition, the higher 
the uncertainty
in client unavailability, the larger the performance degradation. 
Concrete examples that confirm these intuitions in the context of~\FedAvg~- the most widely adopted federated learning algorithm - can be found in Section \ref{sec: obj inconsistency}.  
Client unavailability issues
can arise from internal factors such as
different working schedules %
and heterogeneous hardware/software constraints.
External factors, 
such as
poor network coverage and
frequent handovers of base stations due to fast movements, 
only exacerbate these problems \cite{tse2005fundamentals,wen2024communication,Ye2022TSP,bonawitz2019towards,kairouz2021advances}.
The intricate interplay of internal and external factors results in the {\em non-stationarity} and {\em heterogeneity} of client unavailability.

\begin{wrapfigure}[11]{r}{0.35\textwidth} 
\centering
\resizebox{.9\linewidth}{!}{
\includegraphics[width=\linewidth, trim=0 0 0 1.3cm,clip]{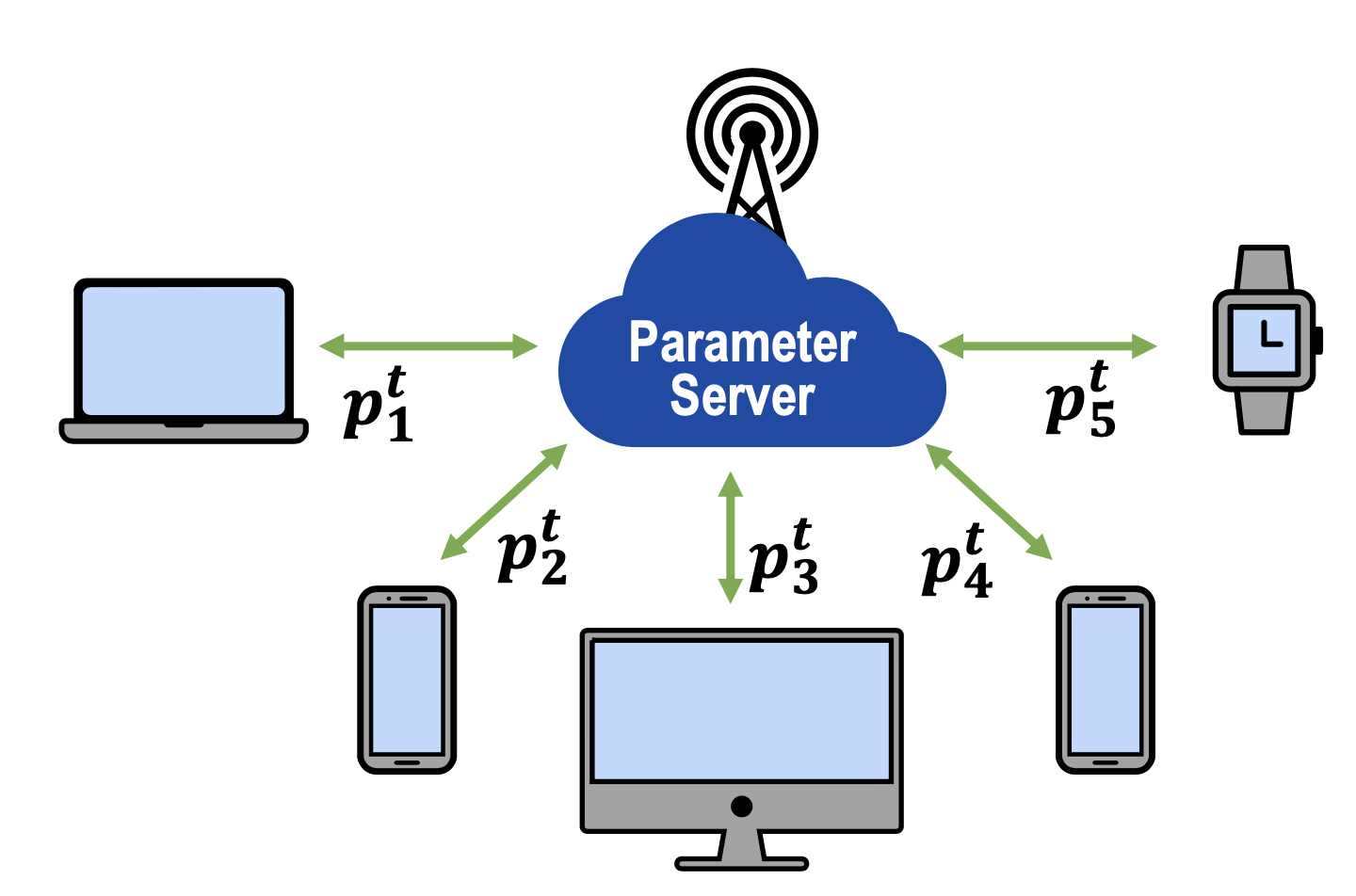} }
\caption{\footnotesize 
Client $i$'s available probabilities $p_i^t$'s are heterogeneous and are subject to {\em non-stationary} dynamics. 
}
\label{fig: system setup}
\end{wrapfigure} 
Most prior work either assumes
exact knowledge of the clients' available dynamics or
requires their dynamics to be benignly stationary
\cite{mcmahan2017communication,li2020federated,perazzone2022communication,wang2023lightweight,wang2022}. 
A related line of work studies asynchronous 
federated learning wherein clients are vulnerable to delays in 
message transmission
and the reported 
model updates
may be stale \cite{xie2019asynchronous,nguyen2022federated,toghani2022unbounded,koloskova2022sharper}.
The proposed methods therein
assume the availability of all clients or uniformly sampled clients, making them inapplicable to our settings. 
A few recent works \cite{ribero2022federated,xiang2023towards} study non-stationary dynamics. 
Ribero et al.~\cite{ribero2022federated} consider the settings where the available probabilities follow a homogeneous Markov chain.  
Xiang et al.~\cite{xiang2023towards} require that clients be capable of continuous local optimization regardless of communication failures. 
A handful of other works \cite{gu2021fast,yan2023federated} memorize the old gradients of the unavailable clients to compensate for their unavailability.
However,
the added memory burdens the federated learning system with substantial memory proportional to the product of the number of clients and the model dimension.

{\bf Contributions}. 
In this work, we focus on stochastic client unavailability, where
client $i$ is available for federated learning model training with probability $p_i^t$
at any time $t$.   
An illustration can be found in~\prettyref{fig: system setup}.
Our contributions are four-fold:
\begin{itemize}[leftmargin=*]
\item 
In~\prettyref{sec: obj inconsistency},
via constructing concrete examples, 
we demonstrate that %
both heterogeneity and non-stationarity of $p_i^t$ will result in 
bias and thus
significant performance degradation of~\FedAvg.  
\item 
In~\prettyref{sec: algorithm FedAPM},
we propose an algorithm named~\FedAPM, 
which features computational and memory efficiency:
only $O(1)$ additional computation and memory per client will be used
when compared with~\FedAvg.
The design of~\FedAPM~introduces two novel algorithmic structures:
{\em adaptive innovation echoing} and {\em implicit gossiping}.
At a high level,
these novel algorithmic structures
(i) help clients catch up on the missed computation, %
and 
(ii) simultaneously enable a balanced information mixture through implicit client-client gossip,  
which ultimately corrects the remaining bias. 
Notably, no direct neighbor information exchanges are used, 
and the client unavailability dynamics remains unknown to all clients and the parameter server. 
\item 
In~\prettyref{sec: convergence analysis},
we show that~\FedAPM~converges to a stationary point of even non-convex global objective and achieves the linear speedup property
without conditions on second-order partial derivatives of the loss function in analysis.
\item 
In~\prettyref{sec: numerical}, 
we validate our analysis with numerical experiments over diversified client unavailability dynamics on real-world data sets.
\end{itemize}

\section{Related Work}
\label{sec: related work} 

\noindent{\bf Dynamical client availability.} 
There is a recent surge of efforts to study time-varying client availability
\cite{ruan2021towards,ribero2022federated,chen2022optimal,wang2022,ribero2022federated,perazzone2022communication,xiang2023towards}, which  
can be roughly classified into two categories depending on whether the parameter server can unilaterally determine the participating clients. %

(i) %
{\em Controllable participation.}
Earlier research \cite{mcmahan2017communication,Li2020} presumes that, 
in each round, 
the parameter server could select a small set of clients either uniformly at random or in proportion to the volume of local data held by clients.  
More recently, 
Cho et al.\,\cite{cho2022towards} design adaptive and non-uniform client sampling to accelerate learning convergence, albeit at the cost of introducing a non-zero residual error.
In another work,
Cho et al.\,\cite{pmlr-v202-cho23b} study the convergence of~\FedAvg~with cyclic client participation.
Yet, %
the set of available clients is sampled uniformly at random per cyclic round and is decided unilaterally by the parameter server.
Perazzone et al. \cite{perazzone2022communication} consider heterogeneous and time-varying response rates $p_i^t$ under the assumptions that $p_i^t$ is known a priori and that the stochastic gradients are bounded in expectation. 
Furthermore, the dynamics of $p_i^t$ are determined by the parameter server by solving a stochastic optimization problem.     
Chen et al. \cite{chen2022optimal} propose a client sampling scheme wherein only the clients with the most ``important" updates communicate back to the parameter server. 
This sampling method can achieve performance comparable to that of full client participation, provided that  
$p_i^t$ is globally known to both the parameter server and the clients. %
Departing from this line of literature, our setup neither assumes any side information or prior knowledge of the response rates $p_i^t$ nor assumes that the parameter server has any influence on $p_i^t$. 

(ii) %
{\em Uncontrollable participation.}
There is a handful of work on building resilience against arbitrary client availability \cite{ribero2022federated,wang2022,yan2023federated,gu2021fast,yang2022anarchic,wang2023lightweight}.  
Ribero et al. \cite{ribero2022federated} consider random client availability whose underlying response rates are also heterogeneous and time-varying with unknown dynamics. 
However,
the underlying dynamics of $p_i^t$ in \cite{ribero2022federated} are assumed to follow a homogeneous Markov chain. 
Wang et al.\,\cite{wang2022} propose a generalized~\FedAvg
~that amplifies parameter updates every $P$ rounds for some carefully tuned $P$. 
Despite its elegant unified analysis and potential to accommodate non-independent unavailability dynamics, 
to reach a stationary point, 
$p_i^t$ needs to satisfy some assumptions to ensure roughly equal availability of all clients over every $P$ rounds.
Yang et al. \cite{yang2022anarchic} analyze a setting where clients participate in the training at their will. 
Yet, their convergence is shown to be up to a non-zero residual error. 
The algorithms proposed in \cite{gu2021fast,yan2023federated} 
share the same idea of using the memorized latest updates from unavailable clients for global aggregation. 
Despite superior numerical performance, 
both algorithms demand a substantial amount of additional memory \cite{wang2023lightweight}. 
For non-convex objectives, both \cite{yan2023federated} and \cite{gu2021fast} require an absolute bounded inactive period, and share similar technical assumptions such as almost surely bounded stochastic gradients \cite{yan2023federated} or Lipschitz Hessian \cite{gu2021fast}. 
Though bounded inactive periods are relevant for applications wherein the sensors wake up on a periodic schedule, this assumption is not satisfied even for the simple stochastic setting when clients are selected uniformly at random.  
Wang and Ji consider unknown heterogeneous $p_i$' in a concurrent work~\cite{wang2023lightweight};
however, $p_i$'s are assumed to be fixed over time. 

{\bf Asynchronous federated learning.}
Another related line of work is asynchronous federated learning. 
To the best of our knowledge, 
Xie et al.\,\cite{xie2019asynchronous} initialize the study of asynchronous federated learning, 
wherein the parameter server revises the global model every time it receives an update from a client.  
Convergence is shown under some technical assumptions such as weakly-convex global objectives, bounded delay, and bounded stochastic gradients.  
Zakerinia et al.\,\cite{zakerinia2023communication} propose QuAFL which is shown to be resilient to computation asynchronicity and quantized communication yet under the bounded and stationary delay assumption.  
Nguyen et al. \cite{nguyen2022federated} propose FedBuff, which uses additional memory to buffer asynchronous aggregation to achieve scalability and privacy. 
Convergence is shown under bounded gradients and bounded staleness assumptions. 
In fact, 
most convergence guarantees in the asynchronous federated learning literature rely on bounded staleness \cite{xie2019asynchronous,nguyen2022federated,toghani2022unbounded,koloskova2022sharper}, or bounded gradients \cite{xie2019asynchronous,nguyen2022federated,koloskova2022sharper}.
Recently, 
arbitrary delay is considered in the context of distributed SGD with bounded stochastic gradients and $(0, \zeta)$-bounded inter-client heterogeneity \cite{mishchenko2022asynchronous} (see Assumption \ref{ass: bounded similarity} for the definition).  
The convergence suffers from a non-zero residual term $O(\zeta^2)$. 
In contrast,
our convergence guarantee is free from non-zero residual terms and does not require gradients 
to be bounded.

\section{Problem Formulation}
\label{sec: problem formulation}

A federated learning system consists of a parameter server and $m$ clients that collaboratively minimize
\begin{align}
\label{eq: global obj}
\min\limits_{\x\in\reals^d} F(\x) \triangleq \frac{1}{m}\sum_{i=1}^m F_i(\x),
\end{align} 
where $F_i(\x) \triangleq \expects{\ell_i(\x;\xi_i)}{\xi_i \sim \calD_i}$ is the local objective and can be non-convex,
$\calD_i$ is the local distribution,
$\xi_i$ is a stochastic sample that client $i$ has access to,
$\ell_i$ is the local loss function,
and $d$ is the model dimension.

We use~\prettyref{ass: prob lower bound}
to capture the uncertain {\em non-stationary} dynamics and heterogeneity. 
Let $\calA^t $ denote the set of active clients,
$\Indc_{\{\cdot\}}$ an indicator function,
$T$ the number of total training rounds.
\begin{assumption}
\label{ass: prob lower bound}
    There exists 
    a %
    $\delta \in (0,1]$ such that
    $ p_i^t \triangleq \mathbb{E}[\Indc_{\{i\in\calA^t\}}] \ge \delta$, 
    where the events $\{i\in\calA^t\}$ are independent across clients $i$ and across rounds $t\in [T]$.  
\end{assumption}

    \prettyref{ass: prob lower bound}
    subsumes uniform availability \cite{li2020federated,yang2022anarchic} and
    stationary availability considered in
    \cite{wang2023lightweight}.
    Independent client unavailability is widely adopted by federated learning research \cite{li2020federated,Li2020,karimireddy2020scaffold,yang2021achieving,yang2022anarchic,wang2023lightweight}.
    Analyzing non-independent unavailability,
    together with uncertain and non-stationary dynamics in~\prettyref{ass: prob lower bound}, is in general challenging. 
    Specifically, the involved entanglement of stochastic gradient and availability statistics fundamentally complicates the theoretical analysis.
    However,
    we conjecture that independence and strictly positive probabilities are only necessary for the technical convenience of our analysis. %
    Our experiments in~\prettyref{sec: numerical} suggest that our algorithm offers notable improvement even in the presence of non-independent and occasionally zero-valued probabilities. 
    Future work will investigate how to provably accommodate correlated or zero-valued probabilities of arbitrary probabilistic trajectories.

\section{Heterogeneity and Non-stationarity May Lead to Significant Bias} %
\label{sec: obj inconsistency}
\begin{wrapfigure}[13]{r}{0.38\textwidth} 
\centering
\vspace*{-\baselineskip}
\resizebox{\linewidth}{!}{
\includegraphics[width=\linewidth,trim=0 0 0 0.2cm, clip]{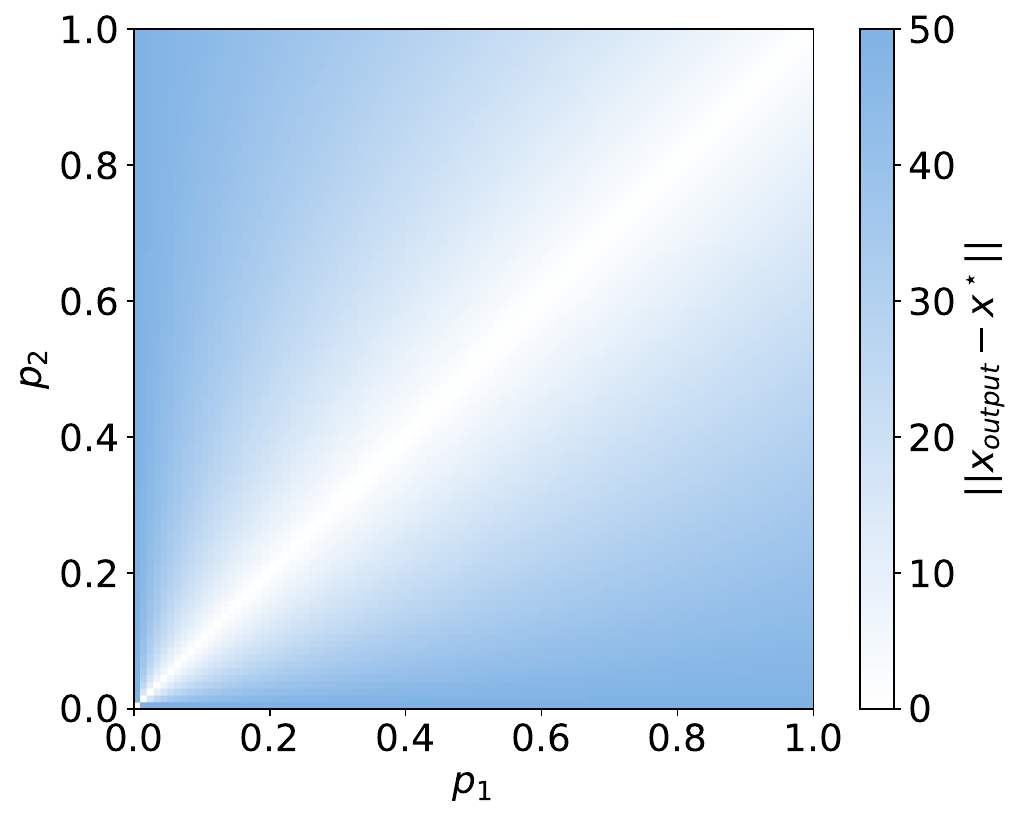} }
\vskip -0.6\baselineskip
\caption{\footnotesize 
Let $x_{\text{output}}\triangleq \lim_{t\diverge} \expect{x^t}$.
Under most of the choices of $p_1, p_2$, $x_{\text{output}}$ is far from $x^*$. %
}
\label{fig: obj inconsistency quadratic}
\end{wrapfigure}
In this section, we illustrate the impacts of heterogeneity and non-stationarity of client availability under the classic~\FedAvg. 
We use two examples to showcase the significant bias incurred.

\begin{example}[Heterogeneity]
\label{example: obj shift}
Suppose that $m=2$ and $p_i^t = p_i$ for $i \in [2]$. 
Let $F_i\pth{x} \triangleq \norm{x - u_i}^2 / 2$, where $x, u_i\in \reals$.   
The global objective~\eqref{eq: global obj} is  
\begin{align}
F\pth{x} = \frac{1}{2} (\norm{x - u_1}^2 + \norm{x - u_2}^2),
\label{eq: counterexample global objective}
\end{align}
with unique minimizer 
$x^\star = (u_1+u_2)/ 2$.  
Let $u_1=0$ and $u_2=100$. 
\prettyref{fig: obj inconsistency quadratic} 
illustrates %
how the heterogeneity 
in $p_i$ affects the expected output of~\FedAvg.
\end{example}
Example~\ref{example: obj shift} matches \cite[Theorem 1]{wang2023lightweight}, which shows that~\FedAvg~leads to a biased global objective~\eqref{eq: static biased obj} under heterogeneous $p_i$'s, 
and that~\eqref{eq: static biased obj} may be significantly away from \eqref{eq: global obj} depending on $p_i$'s.
\begin{align}
    \label{eq: static biased obj}
    \tilde{F} (\x) \triangleq 
    \sum_{i=1}^m
    \frac{p_i}{\sum_{j=1}^m p_j}
    F_i(\x).
\end{align}
When the probabilistic dynamics of $p_i^t$'s is non-stationary, obtaining an exact biased objective similar to~\eqref{eq: static biased obj} in a neat analytical form becomes challenging, if not impossible, due to the unstructured non-stationary dynamics. Fortunately, Example~2 helps us confirm that the complex interplay between $p_i^t$'s across rounds and clients will inevitably further degrade the performance of~\FedAvg~algorithm.

{\bf Example 2} (Non-stationarity){\bf .}
In~\prettyref{fig: motivating example non-stationary},
a total of $m = 100$ clients perform an image classification task on the SVHN dataset \cite{netzer2011readingdigits} under the~\FedAvg~algorithm,
whose local dataset distribution follows $\mathsf{Dirichlet}(0.1)$ \cite{hsu2019measuring}.
Clients become available with probability $p_i^t = p \cdot [\gamma \cdot \sin (0.1 \pi \cdot t) + (1 - \gamma)],~\forall i \in [m]$.
The hyperparameter details are deferred to~\prettyref{app: numerical}.
Observations can be found in the caption.
\begin{figure}[H]
    \centering
    \begin{subfigure}[b]{.45\textwidth}
    \includegraphics[width=\linewidth]{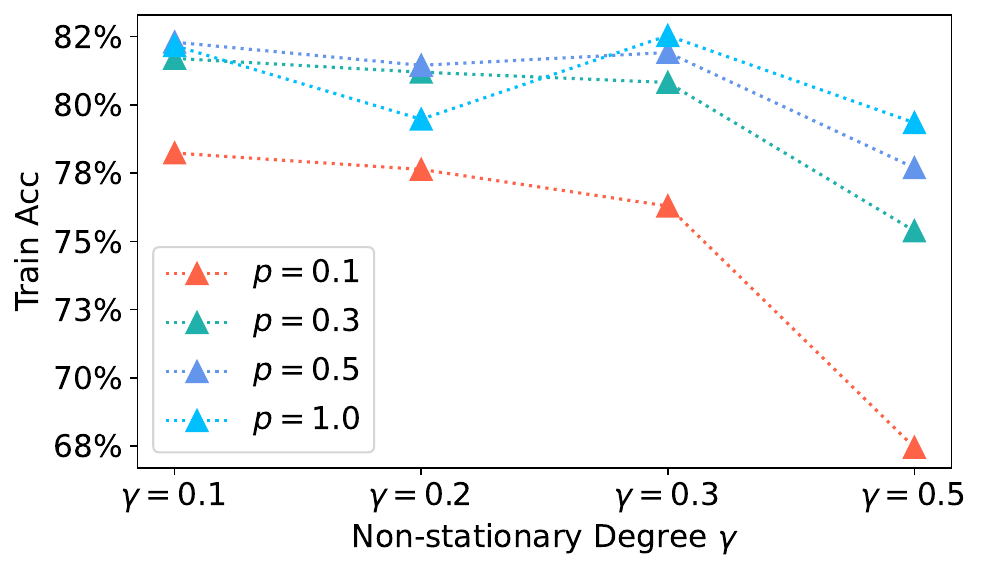}
    \vspace{-1.5em}
    \caption{\footnotesize Train accuracy.}
    \end{subfigure}
    \begin{subfigure}[b]{.45\textwidth}
    \includegraphics[width=\linewidth]{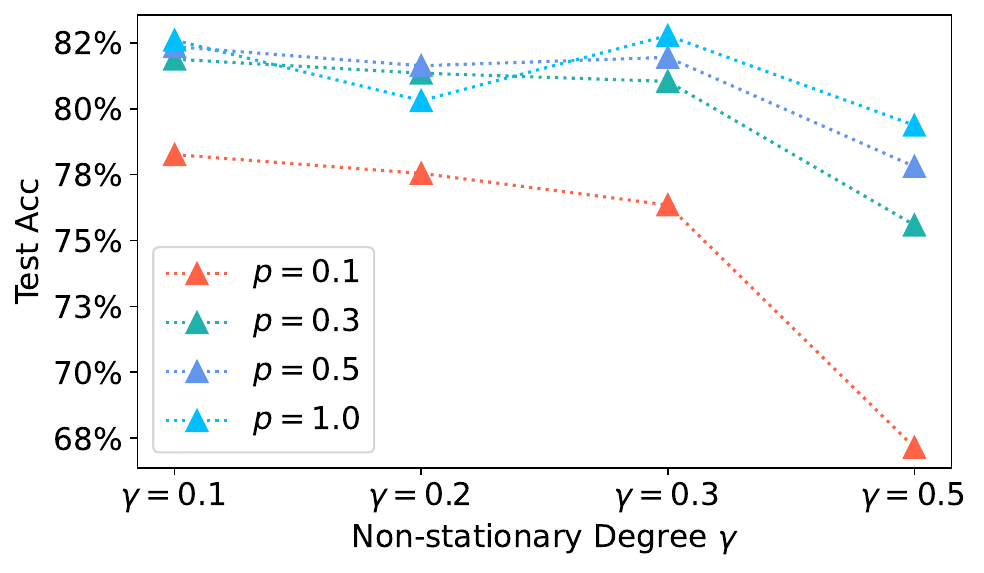}    
    \vspace{-1.5em}
    \caption{\footnotesize Test accuracy}
    \end{subfigure}
    \vspace{-0.5em}
    \caption{\footnotesize
    Train and test accuracy results in percentage (\%).
    In particular,
    the parameter $\gamma$ signifies the degree of non-stationary.
    Notice that,
    as the client availability becomes more non-stationary
    (a larger $\gamma$),
    \FedAvg~experiences a significant drop in accuracy.
    For example,
    both the train and test accuracies drop by over $10\%$ when $p=0.1$,
    and $\gamma$ increases from $0.1$ to $0.5$.
    }
    \label{fig: motivating example non-stationary}
\end{figure}

\section{
{\bf Fed}erated Agile Weight Re-Equalization (\FedAPM) 
}
\label{sec: algorithm FedAPM}

To minimize \eqref{eq: global obj}, 
one natural idea is to have the entire client population
performs the same number of local updates and mixes these updates carefully to ensure they are weighted equally.
Unfortunately,
when clients are available only intermittently,
they will miss some rounds.
A naive approach to equalizing the number of local updates is to have clients catch up by performing their missed local computations immediately when they become available. 
However, this approach requires a daunting amount of resources and may not be possible due to hardware/software constraints. 
Formally, recall that $\calA^t$ is the set of available clients at time $t$.  
Let $\tau_i(t) \triangleq \{t^{\prime}:~ t^{\prime}<t ~\text{and}~ i\in \calA^{t^{\prime}}\}$ denote the most recent (with respect to time $t$) round that client $i$ is available.
Compared with standard~\FedAvg, the naive ``catch-up'' procedure will consume $\pth{t-\tau_i(t)-1} \cdot s$ local stochastic gradient descent updates and $(t - \tau_i(t)-1)$ additional stochastic samples, where $s$ is the number of local updates per round when a client is available in standard~\FedAvg.    

In this work, we target computation-light algorithms that,  compared with~\FedAvg, only take $O(1)$ additional computation without additional stochastic samples. 
We propose {\bf Fed}erated {\bf A}gile {\bf W}eight Re-{\bf E}qualization
 (\FedAPM), which is formally described in~\prettyref{alg: fedpbc+}.
 It involves %
two novel algorithmic structures:
 {\em adaptive innovation echoing} and {\em implicit gossiping}. 
At a high level,
these novel algorithmic structures
(i) help clients catch up on the missed computation,  and 
(ii) simultaneously enable a balanced information mixture through implicit client-client gossip,  
which ultimately corrects the remaining bias.

\begin{figure}[!t]
\centering
\begin{minipage}{.9\textwidth}
    \setlength{\columnsep}{1.8cm}
\begin{algorithm}[H]
\caption{\FedAPM}
\label{alg: fedpbc+}
\textbf{Inputs:} 
$T$, 
$s$, 
$\eta_l$, 
$\eta_g$, 
$\x^0$. 

\lFor{$i\in [m]$}
{
$\x_i^{0} \gets \x^0$ and $\tau_i (0) \gets -1$ 
}

\vspace*{.1\baselineskip}
\For{$t=0, \cdots, T-1$}
{
\begin{multicols}{2}
\For{$i\in \calA^t$}
{
$\x_i^{(t,0)} \gets \x_i^{t}$\;
\For{$k=0, \cdots, s-1$}
{
\hspace{-1em}
$\small \x_i^{(t, k+1)} \gets $ 

$~~\x_i^{(t, k)} -  \eta_l \nabla \ell_i(\x_i^{(t, k)}; \xi_{i}^{(t,k)});$
} 
$\bm{G}_i^{t} \gets \x_i^{t} - \x_i^{(t, s)}$\;
$\x_i^{t\dagger} \gets \x_i^{(t, 0)} - \eta_g
(t - \tau_i(t)) \bm{G}_i^t$\;
$\tau_i(t+1) \gets t$\;
Report $\x_i^{t\dagger}$ to the parameter server\; 
} 
\columnbreak
$\x^{t+1} \gets \frac{1}{\abth{\calA^t}} 
\sum_{i\in \calA^t}\x_i^{t\dagger}
$\;
\For{$i\in [m]$}{
\uIf{$i\in \calA^t$}{$\x_i^{t+1} \gets \x^{t+1}$\;}
\uElseIf{$i\notin \calA^t$}{$\x_i^{t+1} \gets \x_i^{t}$\; $\tau_i(t+1) \gets \tau_i(t)$\;}}
\end{multicols}
}
\end{algorithm}
    
\end{minipage}
\vspace*{-\baselineskip}
\end{figure}

In~\prettyref{alg: fedpbc+}, each client keeps two local variables $\x_i$ and $\tau_i$,  along with a few auxiliary variables used in updating $\x_i$ and $\tau_i$. 
The algorithm inputs are rather standard: total training rounds $T$, local and global learning rates $\eta_{l}$ and $\eta_g$, the number of local updates per round $s$, and the initial model $\x^0$.  
In each round $t$, similar to~\FedAvg, an available client $i\in \calA^t$ 
performs $s$ steps of stochastic gradient descent
on its local model $\x_i^t$ (lines 5-8), where $\nabla \ell_i (\cdot; \xi_i^{(t,k)})$ is the stochastic gradient of sample $\xi_i^{(t,k)}$. 
Next, we describe the two novel algorithmic structures used in~\FedAPM.

{\bf Adaptive innovation echoing.}
Departing from~\FedAvg~wherein the local estimate $\x_i^t$ is updated as \\
$\x_i^{t\dagger} \gets \x_i^{(t, 0)} - \eta_g
\bm{G}_i^t$.   
In \FedAPM~ (lines 10-11), we ``echo'' the local innovation $\bm{G}_i^t$ by multiplying it by $(t - \tau_i(t))$. 
Intuitively, this simple echoing helps us approximately equalize the number of local improvements, as formally stated in~\prettyref{prop: similar speed}. 
It says that the total numbers of innovations echoing are the same for all active clients for any given round and allows the unavailable clients to catch up to the missed computations when they become available. 
\begin{proposition}
\label{prop: similar speed}
If 
$\indc{i \in \calA^{R-1}} = 1$, it holds that
$
\sum_{t=0}^{R-1} \indc{i \in \calA^t} \pth{t - \tau_i(t)} = R,~\forall~ R \ge 1.
$
\end{proposition}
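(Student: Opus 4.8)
The plan is to first pin down exactly what the auxiliary variable $\tau_i(t)$ records, and then observe that the target sum telescopes. First I would prove by induction on $t$ that, for every $t\geq 0$,
\begin{align}
\tau_i(t) = \max\{t' : t' < t,\ i \in \calA^{t'}\},
\end{align}
with the convention that the maximum of the empty set is $-1$. The base case $\tau_i(0) = -1$ is exactly the initialization in line~2. For the inductive step, when $i \in \calA^t$ line~12 sets $\tau_i(t+1) \gets t$, which is precisely the new maximum; when $i \notin \calA^t$ line~13 sets $\tau_i(t+1) \gets \tau_i(t)$, leaving the maximum unchanged. Either way the characterization is preserved.

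Next I would enumerate the rounds in $\{0,1,\dots,R-1\}$ at which client $i$ is available, say $t_1 < t_2 < \cdots < t_n$. The hypothesis $\indc{i \in \calA^{R-1}} = 1$ guarantees this list is nonempty and that $t_n = R-1$. Using the characterization of $\tau_i$, we get $\tau_i(t_1) = -1$ (there is no available round strictly before the first one), and $\tau_i(t_j) = t_{j-1}$ for every $j \in \{2,\dots,n\}$ (the most recent available round strictly before $t_j$ is $t_{j-1}$).

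Finally I would substitute into the sum, keeping only the terms with $\indc{i \in \calA^t}=1$:
\begin{align}
\sum_{t=0}^{R-1} \indc{i \in \calA^t}\pth{t - \tau_i(t)}
= \pth{t_1 - (-1)} + \sum_{j=2}^{n} \pth{t_j - t_{j-1}}
= t_n + 1 = R,
\end{align}
where the middle equality is the telescoping collapse; this also covers the edge case $n=1$, for which the sum is simply $t_1 + 1 = R$. I do not expect a genuine obstacle here — the argument is essentially bookkeeping. The only points that need care are the correct treatment of the initialization $\tau_i(0) = -1$ in the characterization of $\tau_i$, and the role of the hypothesis $i \in \calA^{R-1}$, which is exactly what pins the telescoped value at $R$ rather than at some $t_n + 1 < R$.
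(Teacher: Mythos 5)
Your proof is correct. It takes a slightly different route from the paper: you enumerate the active rounds $t_1 < \cdots < t_n$ of client $i$ in $\{0,\dots,R-1\}$, use the hypothesis $i \in \calA^{R-1}$ to pin $t_n = R-1$, and collapse the sum by telescoping, $\pth{t_1+1} + \sum_{j=2}^n \pth{t_j - t_{j-1}} = t_n + 1 = R$. The paper instead argues by induction on $R$, splitting the inductive step into the cases $i \in \calA^{K-1}$ and $i \notin \calA^{K-1}$ and invoking the induction hypothesis at the round $\tau_i(K)$ in the second case. The two arguments are equivalent in substance — the paper's induction, when unrolled, performs exactly your telescoping — but your version makes the mechanism more transparent in one pass, whereas the paper's induction avoids having to introduce and reason about the explicit list of active rounds. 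Your preliminary step verifying that the algorithm's variable $\tau_i(t)$ coincides with $\max\{t' < t : i \in \calA^{t'}\}$ (with the empty-set convention $-1$) is a detail the paper takes for granted, and including it is a mild improvement in rigor rather than a divergence.
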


\begin{wrapfigure}[5]{r}{0.42\textwidth}
    \vspace{-1.5\baselineskip}
    \begin{align}
    W_{ij}^{(t)} \triangleq 
    \begin{cases}
        \frac{1}{\abth{\calA^t}},~\text{if } i,j \in \calA^t ;&\\
        1,~\text{if } i=j ~\text{and } i \notin \calA^t ;&\\
        0,~\text{otherwise.}&
    \end{cases}
    \label{eq: W matrix elements}
    \end{align}
\end{wrapfigure}
 
\textbf{Implicit gossiping.}
In~\FedAPM, the parameter server does not send the most recent global model to the active clients at the beginning of a round. Instead, the parameter server aggregates the locally updated models $\x_i^{t\dagger}$ and sends the new global model $\x^{t+1}$ to all active clients $\calA^t$ (lines 14-15).  
By postponing multicasting the shared global model, 
the active clients in $\calA^t$ {\em implicitly gossip} 
their updated local models with each other through the parameter server \cite{xiang2023towards}.
Though the postponed multi-cast brings in staleness, simple coupling argument show that the staleness is bounded (\prettyref{lmm: geo second moment main text}). 
In addition,
our empirical results (\prettyref{tab: slowdown supp} in~\prettyref{app: numerical}) suggest that there is no significant slowdown when compared to vanilla~\FedAvg.
Gossip-type algorithms were originally proposed for peer-to-peer networks and are well-known for their agility to communication failures and asynchronous information exchange in achieving average consensus  
\cite{degroot1974reaching,boyd2006randomized,kempe2003gossip,Hajnal58,Lynch:1996:DA:2821576,nedic2009distributed}. 
Intuitively, the clients' local estimates are eventually equally weighted in the 
final algorithm output.   
Note that,
departing from the standard gossiping protocols therein \cite{kempe2003gossip,shah2009gossip},
information exchange in~\FedAPM~does not involve client-client communication.
The information mixing matrix under~\FedAPM~is defined in~\eqref{eq: W matrix elements}, which is doubly stochastic. 
Let $M^{(t)} \triangleq \mathbb{E}[(W^{(t)})^2]$,
$\rho (t) \triangleq \lambda_2 (M^{(t)})$,
$\allones = \Indc \Indc^{\top} / m$,
and $\rho \triangleq \max_t \rho(t)$,
where $\lambda_2(\cdot)$ denotes the second largest eigenvalue. 
We next characterize the information mixing error,
\ie, consensus error in~\prettyref{lmm: spectral norm}. 
\begin{lemma}[\cite{nedic2017achieving,nedic2018network,wang2021cooperative}]
\label{lmm: spectral norm}
For any matrix $B \in \reals^{d \times m}$, it holds that
$
\mathbb{E}_{W}[\fnorm{B \pth{\prod_{r=1}^{t} W^{(r)} - \allones}}^2] \le \rho^t \fnorm{B}^2,
$
where the expectation is taken with respect to randomness in $W$ matrices.    
\end{lemma}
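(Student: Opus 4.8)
The plan is to peel the mixing matrices off one round at a time, exploiting independence across rounds together with the doubly stochastic structure of each $W^{(r)}$. The first step records two elementary consequences of double stochasticity: since $W^{(r)}\Indc=\Indc$ and $\Indc^{\top}W^{(r)}=\Indc^{\top}$, we get $W^{(r)}\allones=\allones W^{(r)}=\allones$ and $\allones^2=\allones$. Expanding $\pth{W^{(1)}-\allones}\pth{W^{(2)}-\allones}$ and cancelling the cross terms, a one-line induction then gives $\prod_{r=1}^{t}W^{(r)}-\allones=\prod_{r=1}^{t}\pth{W^{(r)}-\allones}$, so it suffices to bound $\mathbb{E}_{W}[\fnorm{B\prod_{r=1}^{t}\pth{W^{(r)}-\allones}}^2]$.

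Set $Y_0\triangleq B$ and $Y_r\triangleq Y_{r-1}\pth{W^{(r)}-\allones}$, so the target is $\mathbb{E}[\fnorm{Y_t}^2]$. Each $W^{(t)}$ from \eqref{eq: W matrix elements} is symmetric and is independent of $Y_{t-1}$, which is a function of $W^{(1)},\dots,W^{(t-1)}$ only. Conditioning on $Y_{t-1}$ and using the cyclic property of the trace, $\mathbb{E}[\fnorm{Y_t}^2\mid Y_{t-1}]=\mathrm{tr}\pth{Y_{t-1}^{\top}Y_{t-1}\,\mathbb{E}[\pth{W^{(t)}-\allones}^2]}$. A direct computation using $W^{(t)}\allones=\allones W^{(t)}=\allones$ gives $\pth{W^{(t)}-\allones}^2=(W^{(t)})^2-\allones$, hence $\mathbb{E}[\pth{W^{(t)}-\allones}^2]=M^{(t)}-\allones$.

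The crux is then the spectral claim that $M^{(t)}-\allones$ is positive semidefinite with every eigenvalue in $[0,\rho(t)]$. Indeed $M^{(t)}=\mathbb{E}[(W^{(t)})^2]$ is symmetric, positive semidefinite, and doubly stochastic, so $\Indc/\sqrt{m}$ is a top eigenvector with eigenvalue $1$; subtracting $\allones$ deletes exactly this eigenvalue while leaving the remaining ones, the largest of which is $\lambda_2(M^{(t)})=\rho(t)\le\rho$, intact. Consequently, writing $A\triangleq Y_{t-1}^{\top}Y_{t-1}\succeq 0$ and using $A^{1/2}\pth{M^{(t)}-\allones}A^{1/2}\preceq\rho A$ together with monotonicity of the trace on positive semidefinite matrices, $\mathrm{tr}\pth{A\pth{M^{(t)}-\allones}}=\mathrm{tr}\pth{A^{1/2}\pth{M^{(t)}-\allones}A^{1/2}}\le\rho\,\mathrm{tr}(A)=\rho\,\fnorm{Y_{t-1}}^2$. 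Thus $\mathbb{E}[\fnorm{Y_t}^2\mid Y_{t-1}]\le\rho\,\fnorm{Y_{t-1}}^2$; taking total expectations and iterating from $r=t$ down to $r=1$ yields $\mathbb{E}[\fnorm{Y_t}^2]\le\rho^{t}\fnorm{Y_0}^2=\rho^{t}\fnorm{B}^2$.

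The trace manipulations and the induction are routine; the one step that deserves care — and the main conceptual obstacle — is the spectral step: verifying that $M^{(t)}$ has $\Indc$ as a leading eigenvector with a top eigenvalue equal to $1$ (so that $\lambda_2$ is exactly what controls $M^{(t)}-\allones$) and that $M^{(t)}-\allones$ remains positive semidefinite. This is where the symmetry of $W^{(t)}$ and the explicit form \eqref{eq: W matrix elements} are used; if one wished to drop symmetry, the same argument goes through verbatim with $\mathbb{E}[\pth{W^{(t)}-\allones}\pth{W^{(t)}-\allones}^{\top}]=\mathbb{E}[W^{(t)}(W^{(t)})^{\top}]-\allones$ in place of $M^{(t)}-\allones$.
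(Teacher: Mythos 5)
Your proof is correct. Note that the paper does not actually prove this lemma — it is imported from the cited references — so there is no internal proof to compare against; your argument (telescoping $\prod_r W^{(r)}-\allones=\prod_r\pth{W^{(r)}-\allones}$ via double stochasticity, peeling off one round at a time using independence across rounds, and bounding $\mathrm{tr}\pth{Y_{t-1}^{\top}Y_{t-1}\pth{M^{(t)}-\allones}}$ by $\rho\,\fnorm{Y_{t-1}}^2$ via the spectral decomposition of the symmetric PSD doubly stochastic matrix $M^{(t)}$) is exactly the standard derivation used in those references and is complete.
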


\section{Convergence Analysis}
\label{sec: convergence analysis}

In this section,
we analyze the convergence of~\FedAPM.
All missing proofs and intermediate results are deferred to the Appendix.
Details can be found in~\hyperref[toc]{Table of Contents}.
\subsection{Assumptions}
We start by stating regulatory assumptions that are common in federated learning analysis \cite{li2020federated, wang2020tackling, karimireddy2020scaffold}.

\begin{assumption}
\label{ass: 2 smmothness}
Each local objective function
$\nabla F_{i}(\x)$ is $L$-Lipschitz,
\ie,
\[
    \norm{\nabla F_{i}(\x_1)-\nabla F_{i}(\x_2)}\le L \norm{\x_1-\x_2},
    ~\forall \x_1,~\x_2,~\text{and}~\forall~i\in [m].
\] 
\end{assumption}
\begin{assumption}
\label{ass: bounded variance client-wise}
Stochastic gradients $\nabla \ell_i(\x;\xi)$ are unbiased with bounded variance, \ie,  
\[  
    \expect{\nabla \ell_i(\x;\xi) \mid \x}=\nabla F_i(\x)
    ~\text{and}~
    \expect{\norm{\nabla \ell_i(\x;\xi)-\nabla F_i(\x)}^2 \mid \x} \le \sigma^2,
    ~\forall~
    i\in[m].
\] 
\end{assumption}
\begin{assumption}
\label{ass: bounded similarity}
The divergence between local and global gradients is bounded
for $\beta,~\zeta\ge 0$
such that 
\begin{align}
\label{eq: BG condition}
    \frac{1}{m}\sum_{i=1}^m \norm{\nabla F_i(\x)- \nabla F(\x)}^2 \le \beta^2 \norm{\nabla F(\x)}^2+ \zeta^2. 
\end{align} 
\end{assumption}
\vspace*{-\baselineskip}
When the local data sets are homogeneous, 
$\nabla F_i(\x) = \nabla F(\x)$ holds for any client $i \in [m]$, 
resulting in $\beta = \zeta = 0$. 
\prettyref{ass: bounded similarity} and its variants in~\prettyref{tab: limitation of existing work}
are often referred to as bounded gradient dissimilarity assumption to account for data heterogeneity across clients.  
It can be easily checked that our~\prettyref{ass: bounded similarity} is more relaxed or equivalent to the variants therein. 
\begin{table}[!h]
    \centering
    \vspace*{-\baselineskip}
    \caption{\small Popular variant assumptions on gradient dissimilarity.}
    \label{tab: limitation of existing work}
    \centering
    \begin{tabular}{ c  c }
    \toprule
    {\bf \centering Bounded Gradient Dissimilarity} &  
    {\bf \centering References}  \\
    \midrule     
     $\max_{\x}\norm{\nabla F_i(\x)}^2 \le \zeta^2, ~ \forall ~i\in [m]$ & \cite{Li2020,yu2019parallel,cho2023communication,cho2022towards,yan2023federated} \\ 
    \midrule 
    $ \frac{1}{m}\sum_{i=1}^m \norm{\nabla F_i(\x)}^2 \le \beta^2 \norm{\nabla F(\x)}^2 $ & \cite{li2020federated,li2019feddane} \\ 
    \midrule
    $ \frac{1}{m}\sum_{i=1}^m \norm{\nabla F_i(\x) - \nabla F(\x)}^2 \le \zeta^2$ & \cite{wang2022matcha,yu2019linear,huang2022lower,wang2019adaptive,allouah2023fixing,karimireddybyzantine22,wang2022,yang2022anarchic} \\
    \midrule
    $\frac{1}{m}\sum_{i=1}^m \norm{\nabla F_i(\x)}^2 \le \beta^2 \norm{\nabla F(\x)}^2 + \zeta^2$ & \cite{karimireddy2020scaffold,yuan2022,wang2020tackling,wang2021cooperative,gu2021fast} \\
    \bottomrule
    \end{tabular}
    \vspace*{-.5\baselineskip}
\end{table}

\subsection{Auxiliary/Imaginary update sequence construction.}
Directly analyzing the evolution of $\x^t$ and $\x_i^t$ is challenging 
due to the fact that different clients update at different rounds, 
and that different active clients echo their 
local innovation $\bm{G}_i^t$ (line 9 in~\prettyref{alg: fedpbc+}) with different strength $(t-\tau_i)$. 
As such, 
we construct an auxiliary/imaginary update sequence $\bm{z}_i^t$ for client $i\in [m]$,
whose evolution is closely coupled with $\x^t$ and $\x_i^t$
but is easier to analyze.  
Note that
the auxiliary/imaginary update sequence is never actually computed by clients
but acts as a necessary tool in building up the analysis.
\begin{definition}
\label{def: auxiliary sequence}
{\em 
The auxiliary sequence $\{\bz_i^t\}$ of client $i\in [m]$ is defined as 
\begin{align}
\label{eq: auxiliary definition form}
    \bz_i^t ~ \triangleq ~ 
    \x_i^t - \eta_l \eta_g s (t-\tau_i(t) - 1)
        \nabla F_i(\x_i^{\tau_{i}(t)+1}),~\forall~ i\in [m].
\end{align}    
}

\end{definition}
Recall that $\tau_i(0) = -1$. Thus, by definition, $\bz_i^0 = \x_i^0$ according to \eqref{eq: auxiliary definition form}. 
For general $t$, 
when client $i \in \calA^{t-1}$,
we simply have $\tau_i(t) = t-1$
and thus $t - 1 -\tau_i(t) = t- 1 - (t-1) = 0$.
That is,
the auxiliary model $\bm{z}_i^t$ and the real model $\x_i^t$ are {\em identical}
whenever the client $i$ becomes available in the previous round.  
\begin{itemize}[leftmargin=*]
    \item 
  When 
    $i\in \calA^{t-1}$, the iterate of $\bm{z}_i$ is a bit more involved: 
    \begin{align}
    \bz_i^t \overset{(\ref{eq: auxiliary active line 2}.a)}{=} 
    \x_i^t 
    &\overset{(\ref{eq: auxiliary active line 2}.b)}{=} 
    \frac{\sum_{j\in \calA^{t-1}}}{|\calA^{t-1}|}
    \pth{
    \bz_j^{t-1}
    +
    \underbrace{(\x_j^{t-1} - \bz_j^{t-1})}_{(\ref{eq: auxiliary active line 2}.c)}
    - 
    \eta_l
    \eta_g (t - 1 - \tau_{j}(t-1) )
    \bm{G}_j^{t-1}}
    ,
    \label{eq: auxiliary active line 2}
    \end{align}    
    where 
    $(\ref{eq: auxiliary active line 2}.a)$ holds because of~\prettyref{def: auxiliary sequence} and $i \in \calA^{t-1}$,
    $(\ref{eq: auxiliary active line 2}.b)$ because of line 10 in~\prettyref{alg: fedpbc+},
    addition and subtraction.
    $(\ref{eq: auxiliary active line 2}.c)$ can be expanded by~\eqref{eq: auxiliary definition form}.
    We defer the simplified form of \eqref{eq: auxiliary active line 2} to~\eqref{eq: z iterate last active} in~\prettyref{app: nomenclature} for a tidy presentation.
    \item
    When 
    $i \notin \calA^{t-1}$, $\bm{z}_i^t$ has a simple iterative relation:
    \begin{align}
    \label{eq: auxiliary inactive update}
    \bz_i^t ~ = ~ \bz_i^{t-1}  - \eta_l\eta_g s \nabla F_i(\x_i^{\tau_i(t-1)+1}).      
    \end{align}
At a high level, 
the sequence $\bm{z}_i^t$ 
approximately
mimics the ideal descent evolution at a client as if the client performs local optimizations on its local model $\x_i$ per round regardless of its availability. 
Mathematically,
the idea is that, 
if the progress per iteration of the auxiliary sequence $\bz_i^t$ is bounded,
we can show the convergence of $\x_i^t$
when $\x_i^t$ and $\bz_i^t$ are close to each other.
\end{itemize}

It is worth noting that auxiliary sequences are used in peer-to-peer distributed learning literature \cite{spiridonoff2020robust,avdiukhin2021federated,lian2017can,yuan2016convergence,stich2018local,nedic2018network}. 
Yet, existing constructions are not applicable to our problem due to
(1) the non-convexity of the global objectives,  
(2) multiple local updates per round, 
(3) possibly unbounded gradients, and 
(4) the general form of bounded gradient dissimilarity. 
Departing from the use of staled stochastic gradients for auxiliary updates therein,
we adopt the true gradient $\nabla F_i(\cdot)$ 
to avoid the complications from the involved interplay between randomness in stochastic samples
and randomness in $\tau_i(t)$.
On the technical front, it follows from~\prettyref{def: auxiliary sequence} that
 $\|\x_i^t - \bz_i^t\|_2^2 \le \eta_l^2 \eta_g^2 s^2 (t - \tau_i(t)-1)^2 \|\nabla F_i(\x_i^{\tau_{i}(t)+1})\|_2^2$, 
 whose bound appears to be quite challenging to derive
 due to the coupling of different realizations of $\tau_i(t)$ and gradients.
 As such, we bound the 
 average of $\|\x_i^t - \bz_i^t\|^2 $ across clients and rounds in~\prettyref{prop: client dis}.

\begin{lemma}[Unavailability statistics]
\label{lmm: geo second moment main text}
Under~\prettyref{ass: prob lower bound}
and $\delta$ defined therein.
It holds for $t\ge 0$ that
$\expect{t - \tau_i(t)} \le 1/\delta$
and
$\expect{\pth{t - \tau_i(t)}^2} \le 2/\delta^2.$
\end{lemma}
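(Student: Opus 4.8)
The plan is to recognize the waiting time $G \triangleq t - \tau_i(t)$ as a truncated geometric random variable and to bound its first two moments by a tail-sum argument. First I would fix a client $i$ and a round $t \ge 0$ and note that, by the definition of $\tau_i(\cdot)$ and the convention $\tau_i(0) = -1$, $G$ is a positive-integer-valued random variable with $G \le t+1$ almost surely. The key observation is that for every integer $k \ge 1$, the event $\{G \ge k\}$ (equivalently $\tau_i(t) \le t-k$) occurs if and only if client $i$ is unavailable in each of the $k-1$ rounds $t-1, t-2, \dots, t-k+1$; in particular $\mathbb{P}(G \ge 1) = 1$ (an empty list of constraints) and $\mathbb{P}(G \ge k) = 0$ for $k > t+1$.

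Next I would invoke~\prettyref{ass: prob lower bound}. Because the availability events $\{i \in \calA^{t'}\}$ are independent across rounds $t'$ and $\mathbb{P}(i \notin \calA^{t'}) = 1 - p_i^{t'} \le 1 - \delta$, multiplying over the (at most) $k-1$ relevant rounds gives the clean tail bound
\[
\mathbb{P}(G \ge k) \le (1-\delta)^{k-1}, \qquad \forall\, k \ge 1,
\]
with the empty product equal to $1$ when $k=1$. Only the uniform lower bound $\delta$ enters here, so the heterogeneity and non-stationarity of the $p_i^t$'s play no role in this step.

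Finally, since $G$ is integer-valued and $G \ge 1$, I would apply the standard identities $\mathbb{E}[G] = \sum_{k \ge 1} \mathbb{P}(G \ge k)$ and, using $n^2 = \sum_{k=1}^n (2k-1)$, $\mathbb{E}[G^2] = \sum_{k \ge 1}(2k-1)\,\mathbb{P}(G \ge k)$. Substituting the bound above and summing the geometric and arithmetic-geometric series yields
\[
\mathbb{E}[G] \le \sum_{k \ge 1}(1-\delta)^{k-1} = \frac{1}{\delta}, \qquad \mathbb{E}[G^2] \le \sum_{k \ge 1}(2k-1)(1-\delta)^{k-1} = \frac{2}{\delta^2} - \frac{1}{\delta} \le \frac{2}{\delta^2},
\]
which are exactly the two claimed inequalities.

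The main thing to be careful about is the boundary bookkeeping --- the convention $\tau_i(0) = -1$ and the truncation $G \le t+1$ mean $G$ is not literally geometric --- but truncation only thins the upper tail, so the inequality $\mathbb{P}(G \ge k) \le (1-\delta)^{k-1}$ remains valid for all $k$, and the remaining computations are a routine geometric-series calculation. I do not anticipate any other obstacle.
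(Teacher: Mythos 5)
Your proof is correct and follows essentially the same route as the paper: both establish the geometric tail bound $\mathbb{P}(t-\tau_i(t) \ge k) \le (1-\delta)^{k-1}$ from independence across rounds and the uniform lower bound $\delta$, then sum the tails. The only cosmetic difference is in the second moment, where you use the exact discrete identity $\mathbb{E}[G^2]=\sum_{k\ge1}(2k-1)\mathbb{P}(G\ge k)$ (giving the slightly sharper $2/\delta^2-1/\delta$) while the paper reaches the bound $2\sum_{n\ge1} n\,\mathbb{P}(G\ge n)$ via an integral representation of the second moment; both are valid.
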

\prettyref{lmm: geo second moment main text} yields an upper bound on the first and second moments of a client $i$'s unavailable duration
despite the unstructured nature of clients' non-stationary and heterogeneous unavailability.
In the special case where we have clients available with the same probability $\delta$,
the duration simply follows a homogeneous geometric distribution.
It can be easily checked that our bounds trivially hold.
However,
the duration becomes a more challenging {\em non-homogeneous} geometric random variable
under our non-stationary unavailability dynamics.
\prettyref{lmm: geo second moment main text}
can be derived by using a simple coupling argument and by using tools from probability theory \cite{gut2006probability}.

\subsection{Main results.}
Let $\bar{\bm{z}}_t \triangleq \frac{1}{m} \sum_{i=1}^m \bm{z}_i^t$,
$F^\star \triangleq \min_{\x} F(\x)$,
and $\delta_{\max} \triangleq \max_{i\in [m], t \in [T]} p_i^t$.
\begin{lemma}[Descent Lemma]
\label{lmm: descent lemma}
Let $\calF^t$ define the sigma algebra generated by randomness up to round $t$.
Suppose Assumptions~\ref{ass: 2 smmothness},
\ref{ass: bounded variance client-wise} hold and 
$\eta_l \eta_g \le 9 / (100 s L)$,
it holds that
\begin{align*}
\expect{F(\bar{\bz}^{t+1}) - F(\bar{\bz}^t)~|~\calF^t}
&\le
- \frac{\eta_l \eta_g s}{4} \norm{\nabla F(\bar{\bz}^t)}^2 \\
&~~~+\frac{
2 \eta_l \eta_g s L \sigma^2
\pth{\eta_l \eta_g \delta_{\max} + 4.5 m \eta_l^2 s L}}
{m^2} 
\sum_{i=1}^m
(t - \tau_i(t))^2
\\
&~~~
+
\frac{35 \eta_g \eta_l^3 s^3 L^2}{m} 
\sum_{i=1}^m 
(t - \tau_i(t))^2
\norm{\nabla F_i(\x_i^{\tau_i(t)+1})}^2 \\
&~~~+ 
\frac{2.2 \eta_l \eta_g s L^2 }{m}
\sum_{i=1}^m
\underbrace{\norm{\x_i^t - \bz_i^t}^2}_{\textnormal{Approximation Error}}
+
\frac{\eta_l \eta_g s L^2}{2m}
\sum_{i=1}^m
\underbrace{\norm{\bz_i^t - \bar{\bz}^t}^2}_{\textnormal{Consensus Error}}
.
\end{align*}
\end{lemma}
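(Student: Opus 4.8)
The plan is to establish the Descent Lemma by expanding $F(\bar{\bz}^{t+1})$ around $\bar{\bz}^t$ using $L$-smoothness of $F$ (which follows from Assumption~\ref{ass: 2 smmothness}, since averaging preserves the Lipschitz gradient property), and then carefully controlling the cross term and the second-order term. First I would write $F(\bar{\bz}^{t+1}) - F(\bar{\bz}^t) \le \langle \nabla F(\bar{\bz}^t), \bar{\bz}^{t+1} - \bar{\bz}^t\rangle + \frac{L}{2}\|\bar{\bz}^{t+1} - \bar{\bz}^t\|^2$. The key is to compute $\bar{\bz}^{t+1} - \bar{\bz}^t$ explicitly. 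Using the update relations~\eqref{eq: auxiliary active line 2} and~\eqref{eq: auxiliary inactive update}, averaging over all clients $i\in[m]$, the per-client innovations telescope so that $\bar{\bz}^{t+1} - \bar{\bz}^t = -\frac{\eta_l \eta_g s}{m}\sum_{i=1}^m \nabla F_i(\x_i^{\tau_i(t)+1}) + (\text{stochastic noise terms from } \bm{G}_i - s\nabla F_i)$. Here I would invoke Proposition~\ref{prop: similar speed}-style bookkeeping: the factor $(t-\tau_i(t))$ in the echoing exactly compensates so that each client's contribution to the average drift is, in expectation, a single gradient step; the leftover is a martingale difference plus approximation/consensus gaps.

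The next step is to handle the inner product $\langle \nabla F(\bar{\bz}^t), \bar{\bz}^{t+1} - \bar{\bz}^t\rangle$. I would substitute the drift expression, pulling out the main descent term $-\eta_l\eta_g s\langle \nabla F(\bar{\bz}^t), \frac{1}{m}\sum_i \nabla F_i(\x_i^{\tau_i(t)+1})\rangle$. To convert $\frac{1}{m}\sum_i \nabla F_i(\x_i^{\tau_i(t)+1})$ into something comparable to $\nabla F(\bar{\bz}^t)$, I would add and subtract $\nabla F(\bar{\bz}^t) = \frac{1}{m}\sum_i \nabla F_i(\bar{\bz}^t)$, then apply $L$-Lipschitzness of each $\nabla F_i$ to bound $\|\nabla F_i(\x_i^{\tau_i(t)+1}) - \nabla F_i(\bar{\bz}^t)\| \le L\|\x_i^{\tau_i(t)+1} - \bar{\bz}^t\|$, and route the displacement $\x_i^{\tau_i(t)+1} - \bar{\bz}^t$ through the triangle inequality into $\|\x_i^t - \bz_i^t\|$ (approximation error), $\|\bz_i^t - \bar{\bz}^t\|$ (consensus error), and the drift term $\|\bz_i^{t} - \bz_i^{\tau_i(t)+1}\|$, which by~\eqref{eq: auxiliary inactive update} equals a sum of $(t-\tau_i(t)-1)$ scaled gradients $\eta_l\eta_g s\nabla F_i(\x_i^{\tau_i(t)+1})$, giving the $(t-\tau_i(t))^2\|\nabla F_i(\x_i^{\tau_i(t)+1})\|^2$ terms. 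Using Young's inequality $\langle a,b\rangle \ge -\frac14\|a\|^2 - \|b\|^2$ with appropriately chosen weights, together with the step-size restriction $\eta_l\eta_g \le 1/(8sL)$, I would absorb enough of the cross-term slack to leave the clean coefficient $-\frac{\eta_l\eta_g s}{3}\|\nabla F(\bar{\bz}^t)\|^2$.

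For the quadratic term $\frac{L}{2}\|\bar{\bz}^{t+1} - \bar{\bz}^t\|^2$, I would split the drift into its expected gradient part and the zero-mean stochastic part. Taking conditional expectation $\expect{\cdot\mid\calF^t}$, the cross terms between the deterministic and stochastic parts vanish; the variance of the stochastic part is bounded using Assumption~\ref{ass: bounded variance client-wise} — each $\bm{G}_i^t - s\eta_l\nabla F_i$ contributes variance $O(\eta_l^2 s \sigma^2)$ (summing $s$ independent SGD steps, with a mild extra factor from the within-round smoothness drift, contributing the $9m\eta_l^2 sL$ piece), scaled by the echoing factor $(t-\tau_i(t))^2$, which yields the $\sigma^2(\eta_l\eta_g\delta_{\max} + 9m\eta_l^2 sL)\sum_i(t-\tau_i(t))^2$ term after dividing by $m^2$. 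The deterministic part of the squared drift contributes more $\|\nabla F_i(\x_i^{\tau_i(t)+1})\|^2(t-\tau_i(t))^2$ and gradient-displacement terms, which I would again route into the approximation and consensus error buckets.

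The main obstacle I anticipate is the bookkeeping of the echoing factors $(t-\tau_i(t))$ in a way that cleanly separates into the five listed terms without cross-contamination — in particular, ensuring that the within-round SGD noise, the staleness-induced gradient drift $\nabla F_i(\x_i^{\tau_i(t)+1})$ versus $\nabla F_i(\bar{\bz}^t)$, and the consensus gap $\bz_i^t$ versus $\bar{\bz}^t$ are untangled using only the available assumptions (no bounded gradients, no Lipschitz Hessian). The delicate point is that $\tau_i(t)$ is random and correlated across the coupling between $\x_i$, $\bz_i$, and $\bz_i^{\tau_i(t)+1}$, so I would keep $(t-\tau_i(t))$ symbolic throughout this lemma (not yet take expectations over it — that is deferred to a later stage where Lemma~\ref{lmm: geo second moment main text} is applied) and only use the deterministic identity from Proposition~\ref{prop: similar speed} when summing over rounds. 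Getting the numerical constants ($\frac13$, $65$, $4$, $\frac12$, $9$) to work out requires choosing the Young's-inequality weights carefully against the $1/(8sL)$ budget, which is tedious but mechanical.
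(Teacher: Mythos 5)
Your overall route is the paper's: expand $F(\bar{\bz}^{t+1})$ via $L$-smoothness, compute the one-round innovation of $\bar{\bz}$ explicitly (whose deterministic part is $-\frac{\eta_l\eta_g s}{m}\sum_{i=1}^m\nabla F_i(\x_i^t)$ over \emph{all} clients, thanks to the echoing/auxiliary construction), kill the martingale cross terms under $\expect{\cdot\mid\calF^t}$ using the independence of $\indc{i\in\calA^t}$ from the gradient noise (this is where $\delta_{\max}$ enters), and route the remaining slack into the approximation and consensus buckets via Young's inequality against the budget $\eta_l\eta_g\le 1/(8sL)$. That is exactly the paper's term-by-term decomposition of the inner product and of the quadratic term.

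One attribution needs correcting, and it hides the main missing ingredient. You derive the $(t-\tau_i(t))^2\norm{\nabla F_i(\x_i^{\tau_i(t)+1})}^2$ term from the across-round drift $\bz_i^{\tau_i(t)+1}-\bz_i^t$. But that quantity equals $\x_i^t-\bz_i^t$ up to sign (since $\x_i^{\tau_i(t)+1}=\bz_i^{\tau_i(t)+1}=\x_i^t$), i.e.\ it \emph{is} the approximation error, which the lemma deliberately keeps as a separate unexpanded bucket; expanding it would double-count and would give a coefficient scaling as $\eta_l^3\eta_g^3 s^3$ rather than the stated $\eta_g\eta_l^3 s^3$. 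The true source of the $65\eta_g\eta_l^3 s^3L^2$ term --- and of the $9m\eta_l^2 sL$ inflation of the $\sigma^2$ coefficient --- is the \emph{within-round} multi-step drift $\sum_{r=0}^{s-1}\pth{\nabla F_i(\x_i^{(t,r)})-\nabla F_i(\x_i^t)}$, which enters both the inner product and the quadratic term multiplied by the echo factor $(t-\tau_i(t))$ and is \emph{not} zero-mean, so it cannot be absorbed into the noise. Bounding it without bounded gradients requires a separate recursion on $\expect{\norm{\x_i^{(t,r)}-\x_i^t}^2\mid\calF^t}$ yielding a bound of the form $5\eta_l^2 s^3L^2\sigma^2+20\eta_l^2 s^4L^2\norm{\nabla F_i(\x_i^t)}^2$ (the paper's multi-step perturbation lemma, which needs $\eta_l\le 1/(4sL)$). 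Your plan mentions this drift only in passing as a ``mild extra factor''; spelling out that recursion is the one genuinely nontrivial step your outline omits.
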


The proof of~\prettyref{lmm: descent lemma} follows from the standard analysis for non-convex smooth objectives but
with non-trivial adaptation to account for
{\em adaptive innovation echoing}
and {\em implicit gossiping}.
In particular,
it highlights two terms unique in our derivation:
the approximation error from the auxiliary sequence
and the consensus error from the implicit gossiping procedure.

\begin{proposition}[Approximation error]
\label{prop: client dis}
Given Assumptions \ref{ass: 2 smmothness} and \ref{ass: bounded similarity},
it holds that
\begin{small}
\begin{align}
\frac{1}{m T}\sum_{t=0}^{T-1}
\sum_{i=1}^m
\expect{\norm{\x_i^t - \bz_i^t}^2} %
\le \frac{6 \eta_l^2 \eta_g^2 s^2 }{\delta^2 }
\pth{\beta^2 + 1}
\frac{1}{T}
&\sum_{t=0}^{T-1} 
\expect{
\norm{\nabla F(\bar{\bz}^{t})}^2} 
+
\frac{6 \eta_l^2 \eta_g^2 s^2 }{\delta^2 }
\zeta^2 
\nonumber
\\
+
\frac{6 L^2 \eta_l^2 \eta_g^2 s^2 }{\delta^2}
\frac{1}{m}
\sum_{i=1}^m
\frac{1}{T}
&\sum_{t=0}^{T-1} 
\expect{\norm{\bz_i^{t} - \bar{\bz}^t}^2}
.
\label{eq: approx error}
\end{align}    
\end{small}
\end{proposition}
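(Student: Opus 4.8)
The plan is to start from the defining relation $\|\x_i^t - \bz_i^t\|^2 \le \eta_l^2 \eta_g^2 s^2 (t-\tau_i(t)-1)^2 \|\nabla F_i(\x_i^{\tau_i(t)+1})\|^2$, which follows directly from~\prettyref{def: auxiliary sequence}. The right-hand side couples two sources of randomness: the random age $t - \tau_i(t)$ and the gradient norm at the last active round. The first step is to decouple these. Conditioning on $\tau_i(t)$, the iterate $\x_i^{\tau_i(t)+1}$ is measurable with respect to the past, so I would take the expectation over the unavailability process first and invoke~\prettyref{lmm: geo second moment main text} to replace $\expect{(t-\tau_i(t)-1)^2}$ (or a conditional version of it) by $2/\delta^2$. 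The subtlety here is that $\x_i^{\tau_i(t)+1}$ itself depends on $\tau_i(t)$, so the decoupling has to be done carefully — either by a tower-property argument summing over the possible values of $\tau_i(t)$, or by first establishing that the age is independent of the history up to $\tau_i(t)+1$ under~\prettyref{ass: prob lower bound}'s independence assumption. This gives a bound of the form $\frac{1}{mT}\sum_{t,i}\expect{\|\x_i^t - \bz_i^t\|^2} \le \frac{2\eta_l^2\eta_g^2 s^2}{\delta^2} \cdot \frac{1}{mT}\sum_{t,i}\expect{\|\nabla F_i(\x_i^{\tau_i(t)+1})\|^2}$, modulo the reindexing care just mentioned.

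The second step is to control the averaged gradient norm $\frac{1}{mT}\sum_{t,i}\expect{\|\nabla F_i(\x_i^{\tau_i(t)+1})\|^2}$. Here I would argue that summing $\|\nabla F_i(\x_i^{\tau_i(t)+1})\|^2$ over $t$ is, up to the multiplicity $(t-\tau_i(t))$ with which each active round is "echoed," the same as summing $\|\nabla F_i(\x_i^{\tau})\|^2$ over the active rounds $\tau$ — and~\prettyref{prop: similar speed} tells us these multiplicities sum to exactly $R$ over any horizon. A cleaner route may be to bound each $\|\nabla F_i(\x_i^{\tau_i(t)+1})\|^2$ by first writing $\nabla F_i(\x_i^{\tau_i(t)+1}) = \nabla F_i(\x_i^{\tau_i(t)+1}) - \nabla F_i(\bar\bz^{\tau_i(t)+1}) + \nabla F_i(\bar\bz^{\tau_i(t)+1})$ and then $\nabla F_i(\bar\bz^{\tau_i(t)+1}) - \nabla F(\bar\bz^{\tau_i(t)+1}) + \nabla F(\bar\bz^{\tau_i(t)+1})$, applying $\|a+b+c\|^2 \le 3\|a\|^2 + 3\|b\|^2 + 3\|c\|^2$. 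The first term is handled by $L$-Lipschitzness (\prettyref{ass: 2 smmothness}) against $\|\x_i^{\tau_i(t)+1} - \bar\bz^{\tau_i(t)+1}\|^2$, which after splitting into $\|\x_i - \bz_i\|^2$ and $\|\bz_i - \bar\bz\|^2$ folds the approximation-error term back on itself (absorbed since $\eta_l\eta_g$ is small) and contributes to the consensus-error term; the second term is bounded by~\prettyref{ass: bounded similarity} as $\beta^2\|\nabla F(\bar\bz)\|^2 + \zeta^2$ after averaging over $i$; the third term is the $\|\nabla F(\bar\bz^t)\|^2$ contribution. Re-indexing the sum over $\tau_i(t)+1$ back to a sum over $t\in[T]$ costs only a constant factor (again via the bounded-age lemma or~\prettyref{prop: similar speed}).

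The third step is bookkeeping: collect the constants so that the self-referential approximation-error term on the right is absorbed into the left (this requires the smallness condition on $\eta_l\eta_g$, consistent with the $\le 1/(8sL)$ hypothesis used elsewhere), and read off the coefficients $\frac{6\eta_l^2\eta_g^2 s^2}{\delta^2}(\beta^2+1)$ on the gradient term, $\frac{6\eta_l^2\eta_g^2 s^2}{\delta^2}\zeta^2$ on the constant, and $\frac{6L^2\eta_l^2\eta_g^2 s^2}{\delta^2}$ on the consensus term, matching~\eqref{eq: approx error}.

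The main obstacle I anticipate is the first step — rigorously decoupling the random age $t-\tau_i(t)$ from the gradient $\nabla F_i(\x_i^{\tau_i(t)+1})$ evaluated at the (age-dependent) last-active iterate. Naively pulling $\expect{(t-\tau_i(t))^2}$ out as $2/\delta^2$ is not valid because the two factors are correlated through $\tau_i(t)$. The fix is to condition on the whole trajectory of availability up to and including round $\tau_i(t)+1$, note that $\x_i^{\tau_i(t)+1}$ is then deterministic while the forward "waiting time" $t - \tau_i(t)$ still has its geometric-type tail controlled by~\prettyref{ass: prob lower bound}, and only then apply the moment bound; summing over $t$ and re-indexing is where~\prettyref{prop: similar speed} earns its keep. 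A secondary nuisance is making sure the re-indexing from $\{\x_i^{\tau_i(t)+1}\}_{t}$ to $\{\x_i^t\}_t$ does not lose a factor growing with $T$ — the echoing weights $(t-\tau_i(t))$ are exactly what compensates, so this should come out clean.
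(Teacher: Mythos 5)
Your overall route is the same as the paper's: expand $\norm{\x_i^t-\bz_i^t}^2$ via \prettyref{def: auxiliary sequence}, decompose over the possible values $p$ of $\tau_i(t)$, use the independence structure of the availability events to decouple the age from the gradient (your tower-property treatment of this point is, if anything, more careful than the paper's one-line assertion), invoke \prettyref{lmm: geo second moment main text} for the $2/\delta^2$ factor after re-indexing, and finish with the three-term decomposition of $\nabla F_i$ using $L$-Lipschitzness and \prettyref{ass: bounded similarity} — this last step is exactly the paper's \prettyref{prop: average gradient to global gradient}.

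The one place you diverge creates an avoidable gap. You evaluate the Lipschitz term against $\norm{\x_i^{\tau_i(t)+1}-\bar{\bz}^{\tau_i(t)+1}}^2$, split it into $\norm{\x_i-\bz_i}^2$ plus $\norm{\bz_i-\bar{\bz}}^2$, and then absorb the resulting self-referential approximation-error term by assuming $\eta_l\eta_g$ is small. But the proposition is stated under Assumptions \ref{ass: 2 smmothness} and \ref{ass: bounded similarity} only, with no learning-rate condition, and the absorption step would also perturb the stated coefficients $6\eta_l^2\eta_g^2 s^2/\delta^2$. The missing observation (stated right after \prettyref{def: auxiliary sequence} and proved as \prettyref{prop: x z: inactive and active}) is that $\x_i^{\tau_i(t)+1}=\bz_i^{\tau_i(t)+1}$ identically, because $\tau_i(t)$ is by definition a round in which client $i$ is active. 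On the event $\{\tau_i(t)=p\}$ the gradient is therefore already evaluated at $\bz_i^{p+1}$, no self-referential term ever appears, and \prettyref{prop: average gradient to global gradient} applies directly to yield the stated constants with no side condition. A minor secondary point: \prettyref{prop: similar speed} is not the right tool for the re-indexing; after swapping the order of summation over $(t,p)$, it is the geometric tail of $\prob{\tau_i(t)=p}$ guaranteed by \prettyref{ass: prob lower bound} (i.e., \prettyref{lmm: geo second moment main text}) that keeps the re-indexed sum bounded by $2/\delta^2$ per round.
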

The proof of~\prettyref{prop: client dis} starts from~\prettyref{def: auxiliary sequence}.
Although in general it is difficult to bound the error,
Assumptions~\ref{ass: 2 smmothness} and~\ref{ass: bounded similarity} 
allow us to break down the problem into 
bounding the averaged gradient norm of $\bar{\bz}^t$ and the consensus error over all randomness instead.
Next,
we analyze the consensus error.
Note that
although implicit gossiping
takes place in~\prettyref{alg: fedpbc+} for $\x_i^t$,
its analysis is technically challenging
as discussed before.
So, 
we adopt the auxiliary $\bz_i^t$ as an intermediary and apply Young's inequality to bound the actual consensus error.
Details will be specified next.
Formally,
the auxiliary models can be expressed in a compact matrix form as 
$\bm{Z}^{(t)} \triangleq [\bz_1^t, \ldots, \bz_m^t]$.
Their local parameter innovation matrix ${ \tilde{{\bm{G}}^t}}$ 
is formulated by combing~\eqref{eq: auxiliary active line 2} and~\eqref{eq: auxiliary inactive update}.
We refer the interested readers to~\eqref{eq: auxiliary update detail} in~\prettyref{app: nomenclature} for the exact formula.
Unrolling the recursion,
the consensus error
can be expanded as 
\begin{align}
\frac{1}{m}{\lnorm{\pth{\bm{Z}^{(t-1)} - \eta_l \eta_g \tilde{\bm{G}}^{(t-1)}} W^{(t-1)} \pth{\identity - \allones}}{\rm F}^2} 
\label{eq: consensus derivation}
\overset{(\ref{eq: consensus derivation}.a)}{=} 
\frac{\eta_l^2 \eta_g^2}{m} 
\lnorm{\sum_{q=0}^{t-1} \tilde{\bm{G}}^{(q)} \pth{\prod_{l=q}^{t-1} W^{(q)} - \allones}}{\rm F}^2,
\end{align}
where equality $(\ref{eq: consensus derivation}.a)$ holds because all clients are initiated at the same weight.
\begin{lemma}[\cite{xiang2023towards}]%
\label{lmm: rho upper bound main text}
Under~\prettyref{ass: prob lower bound}, it holds that
$
    \rho \le 1 - \frac{\delta^4 (1 - (1-\delta)^m)^2}{8}.
$
\end{lemma}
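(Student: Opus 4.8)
The plan is to reduce the bound on $\rho=\max_t\lambda_2(M^{(t)})$ to a single one-round spectral-gap estimate and then to an elementary clique-Laplacian computation. The first thing I would observe is that, for every realization of $\calA^t$, the matrix $W^{(t)}$ defined in \eqref{eq: W matrix elements} is a symmetric orthogonal projection: it averages over the active coordinates and acts as the identity on the inactive coordinates, and these two operations are orthogonal projections onto mutually orthogonal subspaces. Consequently $(W^{(t)})^2 = W^{(t)}$, so $M^{(t)} = \expect{(W^{(t)})^2} = \expect{W^{(t)}}$, which is symmetric, doubly stochastic, positive semidefinite, and has leading eigenpair $(1,\,\Indc/\sqrt m)$.

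Next I would invoke the variational characterization of the spectral gap. Because $M^{(t)}$ is symmetric with that leading eigenpair, Courant--Fischer gives $\lambda_2(M^{(t)}) = \max_{x\perp\Indc,\ \norm{x}=1} x^{\top} M^{(t)} x$, hence $1-\rho(t) = \min_{x\perp\Indc,\ \norm{x}=1}\expect{x^{\top}(\identity - W^{(t)})x}$. Fixing such an $x$ and expanding the averaging block yields the clique-Laplacian identity $x^{\top}(\identity - W^{(t)})x = \frac{1}{2\abth{\calA^t}}\sum_{i,j\in\calA^t}(x_i-x_j)^2$, understood as $0$ when $\calA^t=\emptyset$ (where $W^{(t)}=\identity$). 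I would then bound $\frac{1}{2\abth{\calA^t}}\ge\frac{1}{2m}$ (the empty-set case being vacuous, since the sum is then zero) and take expectations; by the independence across clients in \prettyref{ass: prob lower bound}, $\Pr[i\in\calA^t,\,j\in\calA^t]=p_i^tp_j^t\ge\delta^2$ for $i\ne j$, so $\expect{x^{\top}(\identity - W^{(t)})x}\ge\frac{\delta^2}{2m}\sum_{i\ne j}(x_i-x_j)^2 = \frac{\delta^2}{2m}\pth{2m\norm{x}^2-2(\Indc^{\top}x)^2}=\delta^2$, using $x\perp\Indc$ and $\norm{x}=1$. Minimizing over $x$ gives $1-\rho(t)\ge\delta^2$ for every $t$, hence $\rho\le 1-\delta^2$.

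To finish, I would note that $\delta\in(0,1]$ and $(1-(1-\delta)^m)^2\le1$ imply $\delta^2\ge\delta^4\ge\delta^4(1-(1-\delta)^m)^2/8$, so $\rho\le 1-\delta^2\le 1-\delta^4(1-(1-\delta)^m)^2/8$, which is the claimed inequality (and is in fact slightly stronger). The argument is largely routine; the one step that genuinely needs care is the idempotency observation $(W^{(t)})^2=W^{(t)}$, which is what collapses $M^{(t)}$ to $\expect{W^{(t)}}$ -- without it one is forced into the clumsier route of conditioning on the realization of $\calA^t$ and separately handling the empty active set, which I suspect is where the looser $\delta^4(1-(1-\delta)^m)^2/8$ form originates. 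I would also double-check that Courant--Fischer is being applied to the second-largest eigenvalue rather than to the operator norm, which is legitimate here precisely because $M^{(t)}\succeq0$ forces all its eigenvalues into $[0,1]$.
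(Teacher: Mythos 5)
Your proof is correct, and it establishes the strictly stronger bound $\rho \le 1-\delta^2$, from which the stated inequality follows as you note. However, it takes a genuinely different route from the paper. The paper never uses the idempotency of $W^{(t)}$ explicitly (although its element-wise computation of $W^2_{jj'}$ in fact reproduces $W_{jj'}$): instead it lower-bounds every entry of $M$ by $\frac{\delta^2}{m}\qth{1-(1-\delta)^m}$ — conditioning on $\calA^t\neq\emptyset$ and multiplying by $\prob{\calA^t\neq\emptyset}\ge 1-(1-\delta)^m$, which is where that factor enters — then treats $M$ as the transition matrix of a reversible ergodic Markov chain with uniform stationary distribution, lower-bounds its conductance by $\frac{\delta^2\qth{1-(1-\delta)^m}}{2}$, and applies Cheeger's inequality $\lambda_2\le 1-\Phi^2/2$, which is where the square and the factor $1/8$ come from. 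Your argument replaces the conductance/Cheeger machinery with the variational (Dirichlet-form) characterization of $\lambda_2$ applied directly to $M=\expect{W^{(t)}}$, and exploits the independence across clients in~\prettyref{ass: prob lower bound} to evaluate $\prob{i,j\in\calA^t}=p_i^t p_j^t\ge\delta^2$ term by term in the clique Laplacian. This buys a quantitatively tighter and arguably cleaner bound (no quadratic loss from Cheeger, no $(1-(1-\delta)^m)$ factor), at the cost of leaning on the specific product structure of the availability events; the paper's entry-wise-positivity-plus-conductance route is more robust to how the lower bound on $M_{jj'}$ is obtained. Every step of yours checks out: $W^{(t)}$ is indeed a direct sum of two orthogonal projections onto orthogonal subspaces, hence idempotent; the identity $x^\top(\identity-W^{(t)})x=\frac{1}{2\abth{\calA^t}}\sum_{i,j\in\calA^t}(x_i-x_j)^2$ is correct (and degenerates to $0$ when $\calA^t=\emptyset$, consistent with $W^{(t)}=\identity$ there); and $\sum_{i\ne j}(x_i-x_j)^2=2m$ for unit $x\perp\Indc$.
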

Recall that $\rho$ bounds the expected spectral norm of the information mixing matrix $\Wt{t}$.
It is important to have $\rho < 1$
for an exponential decay of the consensus error (see~\prettyref{lmm: spectral norm}).
We now proceed to present the convergence rates.
In the sequel, we assume
it holds for
$\eta_g$ and 
$\eta_l$ that
\begin{small}
\begin{align}
\label{eq: lr condition main text}
    \eta_l \eta_g &\le 
    \frac{\pth{1 - \sqrt{\rho}}\delta}{80 s (L+1) \pth{\sqrt{\rho} + 1}  \sqrt{
    \pth{\beta^2 + 1}(1 + L^2)}}; ~
    \eta_l \le
    \frac{\delta}{200 s L \sqrt{
    \pth{\beta^2 + 1}
    (1 + L^2)}}.
\end{align}
\end{small}

The proof of the consensus error
borrows insights from the analysis of the gossip algorithm~\cite{nedic2017achieving,wang2022matcha}
but with substantial adaptation to
accommodate the novel auxiliary formulation and multi-step local updates.
Under the learning rate conidtions in~\eqref{eq: lr condition main text} and Assumptions~\ref{ass: prob lower bound}, \ref{ass: 2 smmothness}, \ref{ass: bounded variance client-wise} and \ref{ass: bounded similarity},
we can show that
\begin{small}
\begin{align}
    \frac{1}{mT} \sum_{t=0}^{T-1}\sum_{i=1}^m 
    \expect{\norm{\x_i^t - \bz_i^t}^2} 
    &\asymp
    \frac{1}{mT} \sum_{t=0}^{T-1}\sum_{i=1}^m 
    \expect{\norm{\bz_i^t - \bar{\bz}^t}^2}
    \asymp
    \frac{1}{T}\sum_{t=0}^{T-1}\expect{\norm{\nabla F(\bar{\bz}^t)}^2}.
    \label{eq: approx and consensus for z}
\end{align}
\end{small}

It remains to bound the full convergence error of $\bz_i^t$,
which is presented in~\prettyref{thm: z bar rate}.
\begin{theorem}[Convergence error of $\bz_i^t$]
\label{thm: z bar rate}
Suppose that Assumptions~\ref{ass: prob lower bound}, \ref{ass: 2 smmothness}, \ref{ass: bounded variance client-wise} and \ref{ass: bounded similarity} hold.
Choose learning rates $\eta_l$ and $\eta_g$
such that the conditions in~\eqref{eq: lr condition main text} are met
for $T\ge 1$,
it holds that
\begin{align}
    \frac{1}{T}\sum_{t=0}^{T-1}\expect{\norm{\nabla F(\bar{\bz}^t)}^2}
    &\lesssim
   \frac{\pth{F(\bar{\bz}^0) - F^\star}}{\eta_l \eta_g s T}
    +\frac{\eta_l \eta_g L \sigma^2}{ m} \frac{\delta_{\max}}{\delta^2}
    +
    \eta_l^2 \eta_g^2 s^2 L^2
    \pth{\frac{\sigma^2 +  \zeta^2}{\delta^2 (1 - \sqrt{\rho})^2}}
    . 
    \label{eq: z bar rate}
\end{align}
\end{theorem}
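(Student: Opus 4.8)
The backbone is the Descent Lemma (\prettyref{lmm: descent lemma}). The plan is to take total expectation there, sum over $t=0,\ldots,T-1$, and telescope the left-hand side; using $F(\bar{\bz}^T)\ge F^\star$ this gives
\[
\frac{\eta_l\eta_g s}{3}\sum_{t=0}^{T-1}\expect{\norm{\nabla F(\bar{\bz}^t)}^2}\;\le\;F(\bar{\bz}^0)-F^\star+(\mathrm{I})+(\mathrm{II})+(\mathrm{III})+(\mathrm{IV}),
\]
where $(\mathrm{I})$--$(\mathrm{IV})$ denote the summed, expected versions of the four error terms on the right-hand side of \prettyref{lmm: descent lemma}: the variance term carrying $\sum_i(t-\tau_i(t))^2$, the term carrying $\sum_i(t-\tau_i(t))^2\norm{\nabla F_i(\x_i^{\tau_i(t)+1})}^2$, the approximation-error term $\propto\sum_i\norm{\x_i^t-\bz_i^t}^2$, and the consensus-error term $\propto\sum_i\norm{\bz_i^t-\bar{\bz}^t}^2$. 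The rest of the proof is bookkeeping: bound these four quantities, then absorb everything proportional to $\sum_t\expect{\norm{\nabla F(\bar{\bz}^t)}^2}$ back into the left-hand side.

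\textbf{Bounding $(\mathrm{I})$, $(\mathrm{III})$, $(\mathrm{IV})$.} For $(\mathrm{I})$ I would apply \prettyref{lmm: geo second moment main text} to replace each $\expect{(t-\tau_i(t))^2}$ by $2/\delta^2$, so $\sum_t\sum_i\expect{(t-\tau_i(t))^2}\le 2mT/\delta^2$; dividing by $\eta_l\eta_g sT$ at the end, the two summands of the variance term produce exactly the $\tfrac{\eta_l\eta_g L\sigma^2}{m}\tfrac{\delta_{\max}}{\delta^2}$ term of \eqref{eq: z bar rate} together with a higher-order $\sigma^2$ contribution dominated by the last bracket there. For $(\mathrm{III})$ I would invoke the approximation-error bound \prettyref{prop: client dis}, and for $(\mathrm{IV})$ the consensus-error estimate obtained from \eqref{eq: consensus derivation}, \prettyref{lmm: spectral norm} and \prettyref{lmm: rho upper bound main text} (which is where the factor $\rho/(1-\sqrt{\rho})^2$ comes in). Using the reductions already recorded in \eqref{eq: approx and consensus for z}, both $(\mathrm{III})$ and $(\mathrm{IV})$ collapse to a small multiple of $\eta_l\eta_g s\sum_t\expect{\norm{\nabla F(\bar{\bz}^t)}^2}$ plus residuals of order $\eta_l^2\eta_g^2 s^2\tfrac{\zeta^2}{\delta^2}$ and $\eta_l^2\eta_g^2 s^2\tfrac{\rho}{(1-\sqrt{\rho})^2}\tfrac{\sigma^2+\zeta^2}{\delta^2}$.

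\textbf{Bounding $(\mathrm{II})$.} This is the term that needs care, since $(t-\tau_i(t))^2$ and $\nabla F_i(\x_i^{\tau_i(t)+1})$ are correlated through the unavailability pattern, so I cannot simply pull out a moment of the duration. The clean route is to avoid doing so entirely: because client $i$ is idle in rounds $\tau_i(t)+1,\ldots,t-1$ we have $\x_i^{\tau_i(t)+1}=\x_i^t$, and from \prettyref{def: auxiliary sequence} the pointwise identity $\norm{\x_i^t-\bz_i^t}^2=\eta_l^2\eta_g^2 s^2(t-\tau_i(t)-1)^2\norm{\nabla F_i(\x_i^{\tau_i(t)+1})}^2$ holds. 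Combining this with $(t-\tau_i(t))^2\le 2(t-\tau_i(t)-1)^2+2$, the summand of $(\mathrm{II})$ splits into a piece controlled by $\norm{\x_i^t-\bz_i^t}^2$ (no duration factor left, handled again by \prettyref{prop: client dis}) and a piece $2\norm{\nabla F_i(\x_i^t)}^2$, which \prettyref{ass: 2 smmothness} and \prettyref{ass: bounded similarity} bound by $L^2\norm{\x_i^t-\bz_i^t}^2+L^2\norm{\bz_i^t-\bar{\bz}^t}^2+\beta^2\norm{\nabla F(\bar{\bz}^t)}^2+\zeta^2$. Since $(\mathrm{II})$ carries the prefactor $O(\eta_l^3\eta_g s^3 L^2)$, after dividing by $\eta_l\eta_g sT$ it contributes only residuals $\lesssim\eta_l^2\eta_g^2 s^2 L^2(1+L^2)\tfrac{\sigma^2+\zeta^2}{\delta^2}$ --- the source of the $(1+L^2)$ bracket in \eqref{eq: z bar rate} --- plus one more small multiple of $\eta_l\eta_g s\sum_t\expect{\norm{\nabla F(\bar{\bz}^t)}^2}$.

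\textbf{Closing.} After all substitutions the inequality reads $\tfrac{\eta_l\eta_g s}{3}\sum_t\expect{\norm{\nabla F(\bar{\bz}^t)}^2}\le F(\bar{\bz}^0)-F^\star+C\,\eta_l\eta_g s\sum_t\expect{\norm{\nabla F(\bar{\bz}^t)}^2}+(\textnormal{residuals})$, with $C$ a polynomial in $L$, $\beta$ and $1/(1-\sqrt{\rho})$. The learning-rate conditions in \eqref{eq: lr condition main text} are calibrated precisely so that $C\le 1/6$; moving that term to the left leaves a coefficient $\ge\eta_l\eta_g s/6$, and dividing by $\eta_l\eta_g sT/6$ yields \eqref{eq: z bar rate}. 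I expect the principal difficulty to be the joint learning-rate bookkeeping: the approximation error, the consensus error and $\tfrac1T\sum_t\expect{\norm{\nabla F(\bar{\bz}^t)}^2}$ form a mutually coupled system (\prettyref{prop: client dis} and \eqref{eq: approx and consensus for z}), so \eqref{eq: lr condition main text} must be chosen to close all of them at once while keeping the leading descent term strictly positive; a secondary subtlety, the correlation in $(\mathrm{II})$, is sidestepped above via the identity $\x_i^{\tau_i(t)+1}=\x_i^t$ together with \prettyref{def: auxiliary sequence} rather than by any moment computation on $t-\tau_i(t)$.
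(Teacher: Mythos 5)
Your plan is correct and follows the paper's proof almost step for step: telescope the descent lemma, control the variance terms with \prettyref{lmm: geo second moment main text}, the approximation error with \prettyref{prop: client dis}, the consensus error with the gossip machinery behind \prettyref{lmm: spectral norm} and \prettyref{lmm: rho upper bound main text}, and absorb the accumulated multiples of $\frac{1}{T}\sum_t\expect{\norm{\nabla F(\bar{\bz}^t)}^2}$ using \eqref{eq: lr condition main text}. The one place you genuinely diverge is the echoed-gradient term $(\mathrm{II})$. The paper does not sidestep the correlation between $(t-\tau_i(t))^2$ and $\nabla F_i(\x_i^{\tau_i(t)+1})$: it writes the term as $\sum_{p}\expect{\indc{\tau_i(t)=p}}(t-p)^2\,\expect{\norm{\nabla F_i(\x_i^{p+1})}^2}$, invoking independence of the availability indicators from the gradient at the frozen iterate, re-indexes the double sum over $(t,p)$, and then applies $\expect{(t-\tau_i(t))^2}\le 2/\delta^2$ --- exactly the moment computation you avoid. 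Your alternative, converting $(t-\tau_i(t))^2\norm{\nabla F_i(\x_i^{\tau_i(t)+1})}^2$ into $\frac{2}{\eta_l^2\eta_g^2 s^2}\norm{\x_i^t-\bz_i^t}^2+2\norm{\nabla F_i(\x_i^t)}^2$ via the exact identity in \prettyref{def: auxiliary sequence}, is valid and yields residuals of the same order once \prettyref{prop: client dis} is applied (the prefactor $65\eta_g\eta_l^3 s^3L^2\cdot\frac{2}{\eta_l^2\eta_g^2 s^2}$ recombines with the $\eta_l^2\eta_g^2 s^2/\delta^2$ inside \prettyref{prop: client dis} to reproduce the paper's $\eta_l^3\eta_g s^3L^2/\delta^2$ scale); it spares you the independence-based factorization at that one spot, though independence is still needed elsewhere (inside \prettyref{prop: client dis} and the descent lemma), so nothing is gained at the level of assumptions. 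Everything else, including the final absorption with a constant controlled by \eqref{eq: lr condition main text}, matches the paper's bookkeeping.
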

\vspace{-.5\baselineskip}
By addition, subtraction, and Young's inequality,
\eqref{eq: x bar consensus} and~\eqref{eq: x bar relation} hold under~\prettyref{ass: 2 smmothness}.
\begin{small}
\begin{align}
\frac{1}{T} \sum_{t=0}^{T-1}
\frac{1}{m} \sum_{i=1}^m 
\expect{\norm{\x_i^t - \bar{\x}^t}^2}
&\asymp
\frac{1}{T} \sum_{t=0}^{T-1}
\frac{1}{m} \sum_{i=1}^m 
\expect{\norm{\x_i^t - \bz_i^t}^2}
+
\frac{1}{T} \sum_{t=0}^{T-1}
\frac{1}{m} \sum_{i=1}^m 
\expect{\norm{\bz_i^t - \bar{\bz}^t}^2}
;
\label{eq: x bar consensus} \\
\frac{1}{T}\sum_{t=0}^{T-1}\expect{\norm{\nabla F(\bar{\x}^t)}^2}
&\asymp
\frac{1}{T}\sum_{t=0}^{T-1} \frac{1}{m} \sum_{i=1}^m
\expect{\norm{\x_i^t - \bz_i^t}^2
}
+
\frac{1}{T}\sum_{t=0}^{T-1} \expect{\norm{\nabla F(\bar{\bz}^t)}^2}.
\label{eq: x bar relation}
\end{align}
\end{small}
Moreover, 
from~\eqref{eq: approx and consensus for z},~\eqref{eq: x bar consensus} and~\eqref{eq: x bar relation},
it can be seen that~\eqref{eq: x consensus} holds.
\begin{small}   
\begin{align}
    \frac{1}{mT} \sum_{t=0}^{T-1}\sum_{i=1}^m 
    \expect{\norm{\x_i^t - \bar{\x}^t}^2} 
    &\asymp
    \frac{1}{T}\sum_{t=0}^{T-1}\expect{\norm{\nabla F(\bar{\bz}^t)}^2}
    \asymp
    \frac{1}{T}\sum_{t=0}^{T-1}\expect{\norm{\nabla F(\bar{\x}^t)}^2}
    .
    \label{eq: x consensus}
\end{align}
\end{small}
Combining~\eqref{eq: approx and consensus for z},
\eqref{eq: z bar rate},
\eqref{eq: x bar consensus}
and \eqref{eq: x bar relation},
we are ready for~\prettyref{cor: x bar rate}.
\begin{corollary}[Convergence rate of $\x_i^t$]
\label{cor: x bar rate}
Suppose that Assumptions \ref{ass: prob lower bound}, \ref{ass: 2 smmothness}, \ref{ass: bounded variance client-wise} and \ref{ass: bounded similarity} hold.
Choose learning rates as
$\eta_l = \frac{1}{\sqrt{T} s L}$,
$\eta_g = \sqrt{s \delta m}$
such that the conditions in~\eqref{eq: lr condition main text} are met for $T\ge 1$,
it holds that
\begin{align}
    \frac{1}{T}\sum_{t=0}^{T-1}\expect{\norm{\nabla F(\bar{\x}^t)}^2}
    &\lesssim
    \frac{L \pth{F(\bar{\x}^0) - F^\star}}{\sqrt{s \delta m T}} 
    +
    \frac{\delta_{\max} }{\delta^{\frac{3}{2}} \sqrt{s m T}} 
    \sigma^2
    +
    \frac{s m }{T}
    \pth{\frac{\sigma^2 + \zeta^2 }{\delta (1 - \sqrt{\rho})^2}}
    . 
    \label{eq: x bar rate}
\end{align}
\end{corollary}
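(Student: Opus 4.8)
The plan is to derive Corollary~\ref{cor: x bar rate} from Theorem~\ref{thm: z bar rate} in two moves: first specialize the $\bar{\bz}^t$-bound to the prescribed step sizes $\eta_l = \frac{1}{\sqrt{T}\,sL}$, $\eta_g = \sqrt{s\delta m}$, and then transfer the bound from the auxiliary average $\bar{\bz}^t$ to the true average $\bar{\x}^t$ using the decomposition~\eqref{eq: x bar relation} together with the approximation/consensus estimate~\eqref{eq: approx and consensus for z}. No new analysis is needed; everything of substance (the descent lemma, the approximation-error proposition, the spectral-gap bound of Lemma~\ref{lmm: rho upper bound main text}, and Theorem~\ref{thm: z bar rate}) is already established.

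First I would confirm admissibility of the step sizes. With the stated choices one has $\eta_l\eta_g = \frac{\sqrt{\delta m}}{L\sqrt{sT}}$ and $\eta_l = \frac{1}{\sqrt{T}\,sL}$, both of which decay in $T$; hence, exactly as assumed in the statement, for the relevant range of $T$ the conditions in~\eqref{eq: lr condition main text} hold (and, since $\eqref{eq: lr condition main text}$ already forces $\eta_l\eta_g \le 1/(8sL)$ up to constants, Lemma~\ref{lmm: descent lemma} applies as well), so Theorem~\ref{thm: z bar rate} is in force. Next I would substitute into~\eqref{eq: z bar rate} and simplify the three terms using the identities $\eta_l\eta_g s T = \sqrt{s\delta m T}/L$, $\eta_l\eta_g = \frac{\sqrt{\delta m}}{L\sqrt{sT}}$, and $\eta_l^2\eta_g^2 s^2 L^2 = \frac{s\delta m}{T}$. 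The optimization term $\frac{F(\bar{\bz}^0)-F^\star}{\eta_l\eta_g s T}$ becomes $\frac{L(F(\bar{\bz}^0)-F^\star)}{\sqrt{s\delta m T}}$; the stochastic term $\frac{\eta_l\eta_g L\sigma^2}{m}\frac{\delta_{\max}}{\delta^2}$ becomes $\frac{\delta_{\max}\sigma^2}{\delta^{3/2}\sqrt{smT}}$; and the higher-order term $\eta_l^2\eta_g^2 s^2 L^2\pth{\frac{\sigma^2+\zeta^2}{\delta^2}}\qth{(1+L^2)+\frac{\rho}{(1-\sqrt{\rho})^2}}$ becomes $\frac{sm}{T}\pth{\frac{\sigma^2+\zeta^2}{\delta}}\qth{(1+L^2)+\frac{\rho}{(1-\sqrt{\rho})^2}}$. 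Since $\tau_i(0)=-1$ gives $\bz_i^0 = \x_i^0 = \x^0$ for every $i$, we have $\bar{\bz}^0 = \bar{\x}^0 = \x^0$, so $F(\bar{\bz}^0) = F(\bar{\x}^0)$. This produces the claimed bound for $\frac{1}{T}\sum_t\expect{\norm{\nabla F(\bar{\bz}^t)}^2}$ with $\bar{\x}^0$ in place of $\bar{\bz}^0$.

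Finally I would pass to $\bar{\x}^t$. By~\eqref{eq: x bar relation}, $\frac{1}{T}\sum_t\expect{\norm{\nabla F(\bar{\x}^t)}^2}$ is bounded, up to absolute constants, by $\frac{1}{mT}\sum_{t,i}\expect{\norm{\x_i^t-\bz_i^t}^2} + \frac{1}{T}\sum_t\expect{\norm{\nabla F(\bar{\bz}^t)}^2}$; by~\eqref{eq: approx and consensus for z} the averaged approximation error is of the same order as $\frac{1}{T}\sum_t\expect{\norm{\nabla F(\bar{\bz}^t)}^2}$, hence it is absorbed into the latter. Substituting the bound from the previous step then yields~\eqref{eq: x bar rate}, completing the proof.

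The steps are essentially bookkeeping, so the only point requiring care is the one flagged above: verifying that the plugged-in step sizes meet~\eqref{eq: lr condition main text} over the stated range of $T$ — equivalently, tracking how the hidden constants depend on the mixing parameter $\rho$ (itself controlled via Lemma~\ref{lmm: rho upper bound main text}) — and making sure the factor $\rho/(1-\sqrt{\rho})^2$ is carried through the substitution unchanged rather than being swallowed by the $\lesssim$. That, rather than any calculation, is the main thing to watch.
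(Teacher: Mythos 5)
Your proposal is correct and follows essentially the same route as the paper: the appendix first establishes the $\eta_l,\eta_g$-parametric bound for $\frac{1}{T}\sum_t\expect{\norm{\nabla F(\bar{\x}^t)}^2}$ by decomposing through $\bar{\bz}^t$ (smoothness plus Proposition~\ref{prop: client dis}, Lemma~\ref{lmm: consensus z without pseudo}, and Theorem~\ref{thm: z bar rate}, using $\bar{\bz}^0=\bar{\x}^0$), and then substitutes $\eta_l=\frac{1}{\sqrt{T}sL}$, $\eta_g=\sqrt{s\delta m}$ exactly as you do. Your algebraic simplifications of the three terms match the paper's, so no gap remains.
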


\prettyref{cor: x bar rate} establishes the full convergence rate for~\FedAPM~algorithm.
It can be seen that
the first and second terms dominate when $T$ is sufficiently large,
which relate to
initial suboptimality gap and
stochastic gradient noise $\sigma^2$, respectively.
The non-stationary client unavailability results in
the third term, 
which relates to gradient divergence $\zeta^2$ and also to $\sigma^2$. 
The proof of~\prettyref{cor: x bar rate} follows from~\eqref{eq: x bar relation}
by plugging in~\prettyref{prop: client dis} and~\prettyref{thm: z bar rate}.
In the special case
where $k$ clients participate 
uniformly at random,
we simply have $\delta_{\max} = \delta = k / m$.
Our convergence bound 
attains the rate of
$O(1/\sqrt{s k T})$.
In other words,
we achieve the desired linear speedup property with respect to the number of local steps $s$ and the number of active clients $k$,
matching the established literature \cite{yang2021achieving,wang2022,yu2019linear,yu2019parallel}. 
The linear speedup property enables a large cross-device federated learning system to take advantage of a massive scale of parallelism.
Notice that 
the consensus error~\eqref{eq: x consensus} and the convergence rate~\eqref{eq: x bar rate} have the same asymptotic order with respect to the parameters therein.
Hence,
the consensus error also enjoys the desired linear speedup property when $T$ is sufficiently large.

\newcommand{\FedAU}{\texttt{FedAU}}
\newcommand{\FAST}{\texttt{F3AST}}
\section{Numerical Experiments}
\label{sec: numerical}
{\bf Overview.}
In this section, we 
evaluate~\FedAPM~on
real-world data sets to corroborate our analysis and compare it with the other state-of-the-art algorithms.
The missing specifications and additional results can be found in~\prettyref{app: numerical}.
Specifically,
we consider a federated learning system of one parameter server and $m = 100$ clients,
wherein clients become available intermittently.
The image classification tasks 
use CNNs and are based on
SVHN \cite{netzer2011readingdigits}, CIFAR-10 \cite{krizhevsky2009learning} and CINIC-10 \cite{darlow2018cinic} data sets.
All of them include $10$ classes of images of different categories.
To emulate a highly heterogeneous local data distribution,
the image class distribution $\nu_i \sim \mathsf{Dirichlet}(\alpha = 0.1)$
at client $i$ \cite{hsu2019measuring,wang2022,wang2023lightweight}.

\begin{table}[!t]
\centering
\caption{\footnotesize 
Results and comparisons on real-world datasets in the form of mean accuracy $\pm$ standard deviation and are obtained over 3 repetitions in different random seeds.
Results are averaged over the last $50$ rounds.
The total number of global rounds is 2000 
for SVHN, CIFAR-10 and CINIC-10.
Algorithms are categorized into two groups:
(1) ones {\bf not} aided by memory or known statistics;
(2) ones assisted by memory or known statistics.
For a fair competition,
we {\bf boldface} the best accuracy in the first group,
while the second best is \underline{underlined}.
}
\label{tab: exp main text}
\resizebox{\linewidth}{!}{
\begin{footnotesize}
\begin{tabular}{c|c|p{1.7cm} p{1.7cm}|p{1.7cm} p{1.7cm}|p{1.7cm} p{1.7cm}}
    \toprule
    \multirow{2}{*}{\begin{tabular}{@{}c@{}}{\bf Unavailable} \\ {\bf Dynamics} \end{tabular}} &
    {\bf Datasets} &
    \multicolumn{2}{c|}{\bf SVHN} & 
    \multicolumn{2}{c|}{\bf CIFAR-10} & 
    \multicolumn{2}{c}{\bf CINIC-10} \\
    \cline{2-8}
    & 
    {\bf Algorithms}&
    \multicolumn{1}{c}{\bf Train} &
    \multicolumn{1}{c|}{\bf Test} &
    \multicolumn{1}{c}{\bf Train} &
    \multicolumn{1}{c|}{\bf Test} &
    \multicolumn{1}{c}{\bf Train} &
    \multicolumn{1}{c}{\bf Test} \\
    \midrule
     \multirow{6}{*}{
\begin{tabular}{@{}c@{}} 
{Stationary} \\[1em]
\adjustbox{width=0.15\linewidth}{\includestandalone{elements/figure/table_fig/updated/flat_dynamic}}

\end{tabular}} & 
\FedAPM~({\bf ours}) & 
{\bf 86.5} $\pm$ 0.7 \%&
{\bf 86.1} $\pm$ 0.7 \%&
{\bf 68.1} $\pm$ 1.4 \%&
{\bf 66.3} $\pm$ 1.1 \%&
{\bf 47.9} $\pm$ 2.1 \%&
{\bf 47.3} $\pm$ 2.0 \%
\\
& 
\FedAvg~over {\em active} & 
82.6  $\pm$ 1.0 \%&
82.4  $\pm$ 1.1 \%&
64.1  $\pm$ 1.9 \%&
62.9  $\pm$ 1.4 \%&
43.6  $\pm$ 2.4 \%&
43.1  $\pm$ 2.4 \%
\\
& 
\FedAvg~over {\em all} & 
76.1 $\pm$ 2.1 \%&
76.1 $\pm$ 2.4 \%&
55.8 $\pm$ 2.1 \%&
55.4 $\pm$ 1.8 \%&
38.4 $\pm$ 2.1 \%&
38.0 $\pm$ 2.1 \%
\\
& 
\FedAU & 
\underline{83.4} $\pm$ 1.0 \%&
\underline{83.2} $\pm$ 1.0 \%&
\underline{65.4} $\pm$ 1.4 \%&
\underline{64.1} $\pm$ 1.0 \%&
\underline{45.6}  $\pm$ 1.5 \%&
\underline{45.2}  $\pm$ 1.5 \%
\\
& 
\FAST & 
83.2  $\pm$ 0.7 \%&
83.2  $\pm$ 0.7 \%&
64.4 $\pm$ 1.1 \%&
63.5  $\pm$ 0.9 \%&
45.3  $\pm$ 1.2 \%& 
44.8  $\pm$ 1.2 \%
\\
\noalign{\vspace{.5mm}}
\cline{2-8}
\noalign{\vspace{.5mm}}
& 
\FedAvg~with {\em known} $p_i$'s  & 
86.1  $\pm$ 0.5 \%&
85.6  $\pm$ 0.5 \%&
65.4  $\pm$ 1.0 \%&
63.1  $\pm$ 0.9 \%&
45.0  $\pm$ 1.2 \%&
44.6  $\pm$ 1.1 \%
\\
& 
\MIFA~({\em memory aided}) & 
84.2  $\pm$ 0.5 \%&
84.1  $\pm$ 0.6 \%&
66.6  $\pm$ 0.8 \%&
65.3  $\pm$ 0.6 \% &
47.5  $\pm$ 0.5 \%&
46.9  $\pm$ 0.5 \%
\\
& 
\FedVARP~({\em memory aided}) & 
84.6  $\pm$ 0.2 \%&
84.3  $\pm$ 0.1 \%&
67.5  $\pm$ 0.2 \%&
66.3  $\pm$ 0.3 \% &
47.8  $\pm$ 0.2 \%&
47.2  $\pm$ 0.2 \% \\
    
    \toprule
    \multirow{6}{*}{
\begin{tabular}{@{}c@{}} 
{\bf Non}-stationary \\ 
({\bf Staircase}) \\
[1em]
\adjustbox{width=0.15\linewidth}{\includestandalone{elements/figure/table_fig/updated/staircase_dynamic}} 
\end{tabular}}& 
\FedAPM~({\bf ours})& 
{\bf 85.9} $\pm$ 0.8 \%&
{\bf 85.6} $\pm$ 1.0 \%&
{\bf 67.7} $\pm$ 1.3 \%&
{\bf 66.0} $\pm$ 1.2 \%&
{\bf 47.5} $\pm$ 2.0 \%&
{\bf 46.9} $\pm$ 2.0 \%
\\
& 
\FedAvg~over {\em active} & 
82.5  $\pm$ 1.0 \%&
82.4  $\pm$ 0.9 \%&
64.2  $\pm$ 1.8 \%&
63.0  $\pm$ 1.4 \%&
43.7  $\pm$ 2.0 \%&
42.3  $\pm$ 2.2 \%
\\
& 
\FedAvg~over {\em all} & 
75.9 $\pm$ 2.1 \%&
75.9 $\pm$ 2.3 \%&
55.7 $\pm$ 2.1 \%&
55.4 $\pm$ 1.8 \%&
38.4 $\pm$ 2.0 \%&
37.9 $\pm$ 2.0 \%
\\
& 
\FedAU & 
\underline{83.6} $\pm$ 0.8 \%&
\underline{83.4} $\pm$ 0.8 \%&
\underline{65.2} $\pm$ 1.7 \%&
\underline{63.9} $\pm$ 1.5 \%&
\underline{45.7}  $\pm$ 1.5 \%&
\underline{45.1}  $\pm$ 1.5 \%
\\
& 
\FAST &
83.1  $\pm$ 0.6 \%&
83.1  $\pm$ 0.6 \%&
64.3 $\pm$  1.1\%&
63.3  $\pm$ 0.9 \%&

45.2  $\pm$ 1.2 \%& 
44.8  $\pm$ 1.2 \%
\\
\noalign{\vspace{.5mm}}
\cline{2-8}
\noalign{\vspace{.5mm}}
& 
\FedAvg~with {\em known} $p_i^t$'s  & 
85.8  $\pm$ 0.8 \%&
85.2  $\pm$ 0.9 \%&
68.0  $\pm$ 1.6 \%&
66.1  $\pm$ 1.8 \%&
45.0  $\pm$ 1.1 \%&
44.7  $\pm$ 1.0 \%
\\
& 
\MIFA~({\em memory aided}) & 
84.2  $\pm$ 0.5 \%&
84.0  $\pm$ 0.5 \%&
66.7  $\pm$ 0.7 \%&
65.3  $\pm$ 0.5 \% &
47.5  $\pm$ 0.5 \%&
46.9  $\pm$ 0.5 \%
\\
& 
\FedVARP~({\em memory aided}) & 
84.6  $\pm$ 0.2 \%&
84.3  $\pm$ 0.3 \%&
67.3  $\pm$ 0.3 \%&
66.1  $\pm$ 0.3 \% &
47.7  $\pm$ 0.2 \%&
47.2  $\pm$ 0.1 \% \\
    
    \toprule
    
    \multirow{6}{*}{
\begin{tabular}{@{}c@{}} 
\addlinespace[1ex]
{\bf Non}-stationary  \\ 
({\bf Sine}) \\[1em]
\adjustbox{width=0.15\linewidth}{\includestandalone{elements/figure/table_fig/updated/sine_dynamic}}
\end{tabular}} & 
\FedAPM~({\bf ours})& 
{\bf 85.7} $\pm$ 0.9 \%&
{\bf 85.6} $\pm$ 0.9 \%&
{\bf 64.9} $\pm$ 1.9 \%&
{\bf 63.5} $\pm$ 2.0 \%&
{\bf 46.4} $\pm$ 2.4 \%&
{\bf 45.8} $\pm$ 2.4 \%
\\
& 
\FedAvg~over {\em active} & 
82.1  $\pm$ 1.1 \%&
82.0  $\pm$ 1.3 \%&
63.3  $\pm$ 1.9 \%&
62.1  $\pm$ 1.8 \%&
43.1  $\pm$ 2.5 \%&
42.6  $\pm$ 2.5 \%
\\
& 
\FedAvg~over {\em all} & 
71.3 $\pm$ 2.5 \%&
71.3 $\pm$ 2.8 \%&
52.2 $\pm$ 2.4 \%&
52.1 $\pm$ 2.2 \%&
36.4 $\pm$ 2.0 \%&
36.0 $\pm$ 1.9 \%
\\
& 
\FedAU & 
\underline{82.5} $\pm$ 1.4 \%&
\underline{82.5} $\pm$ 1.3 \%&
\underline{64.2} $\pm$ 2.3 \%&
\underline{63.0} $\pm$ 1.9 \%&
\underline{44.4}  $\pm$ 2.1 \%&
\underline{43.9}  $\pm$ 2.1 \%
\\
& 
\FAST & 
82.3  $\pm$ 1.0 \%&
82.3  $\pm$ 1.0 \%&
63.1 $\pm$ 1.7 \%&
62.3  $\pm$ 1.5 \%&
44.1  $\pm$ 1.6 \%& 
43.7  $\pm$ 1.6 \%
\\
\noalign{\vspace{.5mm}}
\cline{2-8}
\noalign{\vspace{.5mm}}
& 
\FedAvg~with {\em known} $p_i^t$'s  & 
86.3  $\pm$ 1.0 \%&
86.0  $\pm$ 1.0 \%&
69.1  $\pm$ 1.2 \%&
67.3  $\pm$ 1.3 \%&
47.9  $\pm$ 1.5 \%&
47.4  $\pm$ 1.1 \%
\\
& 
\MIFA~({\em memory aided}) & 
84.2  $\pm$ 0.4 \%&
84.1  $\pm$ 0.4 \%&
66.6  $\pm$ 0.8 \%&
65.5  $\pm$ 0.6 \% &
47.4  $\pm$ 0.5 \%&
46.9  $\pm$ 0.4 \%
\\
& 
\FedVARP~({\em memory aided}) & 
84.5  $\pm$ 0.2 \%&
84.3  $\pm$ 0.1 \%&
67.4  $\pm$ 0.2 \%&
66.0  $\pm$ 0.3 \% &
47.7  $\pm$ 0.1 \%&
47.1  $\pm$ 0.2 \% \\
    
    \toprule
    \multirow{6}{*}{
\begin{tabular}{@{}c@{}}
{\bf Non}-stationary \\
({\bf Interleaved Sine}) \\[1em]
\adjustbox{width=0.15\linewidth}{
\includestandalone{elements/figure/table_fig/updated/interleaved_sine}
}
\end{tabular}}& 
\FedAPM~({\bf ours})& 
{\bf 85.2} $\pm$ 1.6 \%&
{\bf 84.6} $\pm$ 1.6 \%&
{\bf 64.8} $\pm$ 3.1 \%&
{\bf 63.3} $\pm$ 2.7 \%&
{\bf 47.1} $\pm$ 2.7 \%&
{\bf 46.6} $\pm$ 2.7 \%
\\
& 
\FedAvg~over {\em active} & 
80.9  $\pm$ 1.7 \%&
80.7  $\pm$ 1.7 \%&
61.9  $\pm$ 2.4 \%&
60.7  $\pm$ 2.0 \%&
41.9  $\pm$ 2.7 \%&
41.5  $\pm$ 2.7 \%
\\
& 
\FedAvg~over {\em all} & 
69.5 $\pm$ 3.4 \%&
69.5 $\pm$ 4.1 \%&
51.3 $\pm$ 2.7 \%&
51.3 $\pm$ 2.7 \%&
35.9 $\pm$ 2.0 \%&
35.6 $\pm$ 2.0 \%
\\
& 
\FedAU & 
\underline{82.6} $\pm$ 1.3 \%&
\underline{82.4} $\pm$ 1.1 \%&
\underline{63.9} $\pm$ 2.2 \%&
\underline{62.8} $\pm$ 1.8 \%&
\underline{44.2}  $\pm$ 2.2 \%&
\underline{43.8}  $\pm$ 2.1 \%
\\
& 
\FAST &
81.3  $\pm$ 1.2 \%&
81.3  $\pm$ 1.2 \%&
62.2  $\pm$ 2.1 \%&
61.3  $\pm$ 1.7 \%&
43.1  $\pm$ 2.2 \%& 
42.7  $\pm$ 2.2 \%
\\
\noalign{\vspace{.5mm}}
\cline{2-8}
\noalign{\vspace{.5mm}}
& 
\FedAvg~with {\em known} $p_i^t$'s & 
85.8  $\pm$ 1.2 \%&
85.2  $\pm$ 1.3 \%&
68.7  $\pm$ 2.1 \%&
66.5  $\pm$ 2.4 \%&
47.2  $\pm$ 2.3 \%&
46.8  $\pm$ 2.2 \%
\\
& 
\MIFA~({\em memory aided}) & 
83.8  $\pm$ 0.9 \%&
83.7  $\pm$ 0.8 \%&
65.8  $\pm$ 1.9 \%&
64.6  $\pm$ 1.6 \% &
46.5  $\pm$ 1.8 \%&
45.9  $\pm$ 1.7 \%
\\
& 
\FedVARP~({\em memory aided}) & 
84.5  $\pm$ 0.3 \%&
84.1  $\pm$ 0.5 \%&
67.3  $\pm$ 0.3 \%&
65.7  $\pm$ 0.2 \% &
47.7  $\pm$ 0.5 \%&
47.2  $\pm$ 0.3 \% \\
    \bottomrule
\end{tabular}
\end{footnotesize}
}
\end{table}

\textbf{Non-stationary client unavailability.}
A total of four unavailable dynamics are evaluated in~\prettyref{tab: exp main text},
including 
stationary and
{\em non}-stationary with
staircase,
sine and
interleaved sine trajectories,
with their visualizations available in the same table.
The classification tasks become more challenging as the list progresses
due to the growing complexity in the non-stationary dynamics.
Furthermore,
our choices of the non-stationary dynamics are motivated by real-world federated learning participation statistics,
for example, sine trajectory \cite{bonawitz2019towards},
and by generalizing the existing participation patterns
such as cyclic participation \cite{pmlr-v202-cho23b,wang2023lightweight}.
In particular,
the interleaved sine dynamics is
more challenging than
the vanilla cyclic availability dynamics
since clients become available during each active period with probability that is less than $1$ and non-stationary simultaneously.
Formally,
client $i$'s dynamics is defined as $p_i^t = p_i \cdot f_i(t)$,
where $f_i(t)$ is a time-dependent function under non-stationary dynamics
but $f_i(t) = 1$ when stationary,
and $p_i = \iprod{\nu_i}{\phi}$.
$\phi$ characterizes the unbalanced contribution of different image classes to the generated probabilities.
Each element of $[\phi]_c$ is drawn from $\mathsf{Uniform}(0,\bm{\Phi}_c)$,
where a smaller $\bm{\Phi}_c$ leads to a less significant contribution of that image class.

Correlating the local data distribution and the probability of client availability is a common practice in the prior literature.
For example, Gu et al. in \cite{gu2021fast} experiment with a formula for $p_i$ so that clients that hold images of smaller digits participate less frequently. Wang and Ji in \cite{wang2023lightweight} construct $p_i$ as an inner product of the clients' local data distribution $\nu_i$ and an external distribution $\bm{\Phi}^\prime$.
It is immediately clear that
the coupling of local data distribution $(\nu_i \sim \mathsf{Dirichlet(\alpha=0.1)})$ and class contribution $\phi$ leads to {\em non-independent} $p_i$'s. 
In addition,
\prettyref{ass: prob lower bound} will not hold in the case of interleaved sine non-stationary dynamics 
since $p_i^t$'s occasionally reach 0.
Although 
being agnostic to
the challenging client unavailability dynamics not covered by our analysis,
we observe that~\FedAPM~retains its outperformance.
Comparisons will be specified next.

{\bf Benchmark algorithms and discussions.}
We compare~\FedAPM~with six baseline algorithms,
including
\FedAvg~over active clients \cite{mcmahan2017communication},
\FedAvg~over all clients, \FedAU~\cite{wang2023lightweight},
\FAST~\cite{ribero2022federated},
\FedAvg~with known $p_i^t$'~\cite{perazzone2022communication},
\MIFA~\cite{gu2021fast}
and~\FedVARP~\cite{jhunjhunwala2022fedvarp}.
The details of the algorithm and the additional results
are deferred to~\prettyref{app: numerical}.
It is observed that~\FedAPM~consistently outperforms the algorithms not aided by memory or known statistics.
Surprisingly,
\FedAPM~occasionally beats~\MIFA~and~\FedVARP,
which are memory-heavy.
We attribute it to reuse of stored gradients from the unavailable clients. 
Although~\FedAPM~brings in stalenss due to implicit gossiping,
our results (\prettyref{tab: slowdown supp} in~\prettyref{app: numerical}) indicate that there is no significant slowdown for~\FedAPM~when compared to vanilla~\FedAvg,
where we study the first round to achieve a targeted accuracy by different algorithms.
In addition,
\FedAPM~attains competitive or even better performance than~\FedAvg~with known probability,
yet unknown to the underlying dynamics in client unavailability.

\section{Conclusion}
In this paper,
we have shown that 
the impacts of heterogeneous and non-stationary client unavailability can be significant through concrete examples on~\FedAvg.
To address this,
we have proposed an algorithm~\FedAPM,
which provably converges 
by adaptively echoing clients' local improvement and by evenly diffusing local updates through implicit gossiping.
Theoretically, it achieves the desired linear speedup property.
Experiments have validated the superiority of~\FedAPM~over state-of-the-art algorithms under diversified non-stationary dynamics.
Future work will investigate how to extend our analysis to broader unavailability dynamics such as non-independent and non-stationary unavailability
and 
how to incorporate our findings
into federated learning algorithms of different local optimization methods.

\newpage

\begin{ack}
We gratefully acknowledge the support
from the National Science Foundation under grants 2106891, 2107062,
the National Science Foundation CAREER award under grant 2340482,
and the Sony Faculty Innovation Award.
The research was sponsored by the Army Research Laboratory 
under Cooperative Agreement Number W911NF-23-2-0014. 
The views and conclusions contained in this document are those of the authors and should not be interpreted as representing the official policies, either expressed or implied,
of the Army Research Laboratory,
the National Science Foundation, 
or the U.S. Government. The U.S. Government is authorized to reproduce and distribute reprints for Government purposes notwithstanding any copyright notation herein.
We also thank Connor J. McLaughlin for valuable discussions and feedback on this work.
\end{ack}

\bibliographystyle{plain}
\bibliography{one}

\begin{thebibliography}{10}

\bibitem{allouah2023fixing}
Youssef Allouah, Sadegh Farhadkhani, Rachid Guerraoui, Nirupam Gupta, Rafa{\"e}l Pinot, and John Stephan.
\newblock Fixing by mixing: A recipe for optimal byzantine ml under heterogeneity.
\newblock In {\em International Conference on Artificial Intelligence and Statistics}, pages 1232--1300. PMLR, 2023.

\bibitem{avdiukhin2021federated}
Dmitrii Avdiukhin and Shiva Kasiviswanathan.
\newblock Federated learning under arbitrary communication patterns.
\newblock In {\em International Conference on Machine Learning}, pages 425--435. PMLR, 2021.

\bibitem{bonawitz2019towards}
Keith Bonawitz, Hubert Eichner, Wolfgang Grieskamp, Dzmitry Huba, Alex Ingerman, Vladimir Ivanov, Chloe Kiddon, Jakub Kone{\v{c}}n{\`y}, Stefano Mazzocchi, Brendan McMahan, et~al.
\newblock Towards federated learning at scale: System design.
\newblock {\em Proceedings of Machine Learning and Systems}, 1:374--388, 2019.

\bibitem{boyd2006randomized}
Stephen Boyd, Arpita Ghosh, Balaji Prabhakar, and Devavrat Shah.
\newblock Randomized gossip algorithms.
\newblock {\em IEEE Transactions on Information Theory}, 52(6):2508--2530, 2006.

\bibitem{busbridge2024scale}
Dan Busbridge, Jason Ramapuram, Pierre Ablin, Tatiana Likhomanenko, Eeshan~Gunesh Dhekane, Xavier Suau~Cuadros, and Russell Webb.
\newblock How to scale your ema.
\newblock {\em Advances in Neural Information Processing Systems}, 36, 2024.

\bibitem{chen2021bdfl}
Jin-Hua Chen, Min-Rong Chen, Guo-Qiang Zeng, and Jia-Si Weng.
\newblock Bdfl: a byzantine-fault-tolerance decentralized federated learning method for autonomous vehicle.
\newblock {\em IEEE Transactions on Vehicular Technology}, 70(9):8639--8652, 2021.

\bibitem{chen2022optimal}
Wenlin Chen, Samuel Horv{\'a}th, and Peter Richt{\'a}rik.
\newblock Optimal client sampling for federated learning.
\newblock {\em Transactions on Machine Learning Research}, 2022.

\bibitem{pmlr-v202-cho23b}
Yae~Jee Cho, Pranay Sharma, Gauri Joshi, Zheng Xu, Satyen Kale, and Tong Zhang.
\newblock On the convergence of federated averaging with cyclic client participation.
\newblock In Andreas Krause, Emma Brunskill, Kyunghyun Cho, Barbara Engelhardt, Sivan Sabato, and Jonathan Scarlett, editors, {\em Proceedings of the 40th International Conference on Machine Learning}, volume 202, pages 5677--5721. PMLR, 23--29 Jul 2023.

\bibitem{cho2023communication}
Yae~Jee Cho, Jianyu Wang, Tarun Chirvolu, and Gauri Joshi.
\newblock Communication-efficient and model-heterogeneous personalized federated learning via clustered knowledge transfer.
\newblock {\em IEEE Journal of Selected Topics in Signal Processing}, 2023.

\bibitem{cho2022towards}
Yae~Jee Cho, Jianyu Wang, and Gauri Joshi.
\newblock Towards understanding biased client selection in federated learning.
\newblock In {\em International Conference on Artificial Intelligence and Statistics}, pages 10351--10375. PMLR, 2022.

\bibitem{darlow2018cinic}
Luke~N Darlow, Elliot~J Crowley, Antreas Antoniou, and Amos~J Storkey.
\newblock Cinic-10 is not imagenet or cifar-10.
\newblock {\em arXiv preprint arXiv:1810.03505}, 2018.

\bibitem{degroot1974reaching}
Morris~H DeGroot.
\newblock Reaching a consensus.
\newblock {\em Journal of the American Statistical association}, 69(345):118--121, 1974.

\bibitem{gu2021fast}
Xinran Gu, Kaixuan Huang, Jingzhao Zhang, and Longbo Huang.
\newblock Fast federated learning in the presence of arbitrary device unavailability.
\newblock {\em Advances in Neural Information Processing Systems}, 34:12052--12064, 2021.

\bibitem{gut2006probability}
Allan Gut and Allan Gut.
\newblock {\em Probability: a graduate course}, volume 200.
\newblock Springer, 2006.

\bibitem{Hajnal58}
John Hajnal and MS~Bartlett.
\newblock Weak ergodicity in non-homogeneous markov chains.
\newblock In {\em Mathematical Proceedings of the Cambridge Philosophical Society}, volume~54, pages 233--246. Cambridge Univ Press, 1958.

\bibitem{hsu2019measuring}
Tzu-Ming~Harry Hsu, Hang Qi, and Matthew Brown.
\newblock Measuring the effects of non-identical data distribution for federated visual classification.
\newblock {\em arXiv preprint arXiv:1909.06335}, 2019.

\bibitem{huang2022lower}
Xinmeng Huang, Yiming Chen, Wotao Yin, and Kun Yuan.
\newblock Lower bounds and nearly optimal algorithms in distributed learning with communication compression.
\newblock In Alice~H. Oh, Alekh Agarwal, Danielle Belgrave, and Kyunghyun Cho, editors, {\em Advances in Neural Information Processing Systems}, 2022.

\bibitem{jerrum1988conductance}
Mark Jerrum and Alistair Sinclair.
\newblock Conductance and the rapid mixing property for markov chains: the approximation of permanent resolved.
\newblock In {\em Proceedings of the Twentieth Annual ACM Symposium on Theory of Computing}, pages 235--244, 1988.

\bibitem{jhunjhunwala2022fedvarp}
Divyansh Jhunjhunwala, Pranay Sharma, Aushim Nagarkatti, and Gauri Joshi.
\newblock Fedvarp: Tackling the variance due to partial client participation in federated learning.
\newblock In {\em Uncertainty in Artificial Intelligence}, pages 906--916. PMLR, 2022.

\bibitem{kairouz2021advances}
Peter Kairouz, H.~Brendan McMahan, Brendan Avent, Aurélien Bellet, Mehdi Bennis, Arjun~Nitin Bhagoji, Kallista Bonawitz, Zachary Charles, Graham Cormode, Rachel Cummings, Rafael G.~L. D’Oliveira, Hubert Eichner, Salim~El Rouayheb, David Evans, Josh Gardner, Zachary Garrett, Adrià Gascón, Badih Ghazi, Phillip~B. Gibbons, Marco Gruteser, Zaid Harchaoui, Chaoyang He, Lie He, Zhouyuan Huo, Ben Hutchinson, Justin Hsu, Martin Jaggi, Tara Javidi, Gauri Joshi, Mikhail Khodak, Jakub Konecný, Aleksandra Korolova, Farinaz Koushanfar, Sanmi Koyejo, Tancrède Lepoint, Yang Liu, Prateek Mittal, Mehryar Mohri, Richard Nock, Ayfer Özgür, Rasmus Pagh, Hang Qi, Daniel Ramage, Ramesh Raskar, Mariana Raykova, Dawn Song, Weikang Song, Sebastian~U. Stich, Ziteng Sun, Ananda~Theertha Suresh, Florian Tramèr, Praneeth Vepakomma, Jianyu Wang, Li~Xiong, Zheng Xu, Qiang Yang, Felix~X. Yu, Han Yu, and Sen Zhao.
\newblock Advances and open problems in federated learning.
\newblock {\em Foundations and Trends® in Machine Learning}, 14(1–2):1--210, 2021.

\bibitem{karimireddybyzantine22}
Sai~Praneeth Karimireddy, Lie He, and Martin Jaggi.
\newblock Byzantine-robust learning on heterogeneous datasets via bucketing.
\newblock In {\em International Conference on Learning Representations}. PMLR, 2022.

\bibitem{karimireddy2020scaffold}
Sai~Praneeth Karimireddy, Satyen Kale, Mehryar Mohri, Sashank Reddi, Sebastian Stich, and Ananda~Theertha Suresh.
\newblock Scaffold: Stochastic controlled averaging for federated learning.
\newblock In {\em International Conference on Machine Learning}, pages 5132--5143. PMLR, 2020.

\bibitem{kempe2003gossip}
David Kempe, Alin Dobra, and Johannes Gehrke.
\newblock Gossip-based computation of aggregate information.
\newblock In {\em 44th Annual IEEE Symposium on Foundations of Computer Science, 2003. Proceedings.}, pages 482--491. IEEE, 2003.

\bibitem{koloskova2022sharper}
Anastasiia Koloskova, Sebastian~U Stich, and Martin Jaggi.
\newblock Sharper convergence guarantees for asynchronous sgd for distributed and federated learning.
\newblock {\em Advances in Neural Information Processing Systems}, 35:17202--17215, 2022.

\bibitem{krizhevsky2009learning}
Alex Krizhevsky, Geoffrey Hinton, et~al.
\newblock Learning multiple layers of features from tiny images.
\newblock 2009.

\bibitem{li2020federated}
Tian Li, Anit~Kumar Sahu, Manzil Zaheer, Maziar Sanjabi, Ameet Talwalkar, and Virginia Smith.
\newblock Federated optimization in heterogeneous networks.
\newblock {\em Proceedings of Machine Learning and Systems}, 2:429--450, 2020.

\bibitem{li2019feddane}
Tian Li, Anit~Kumar Sahu, Manzil Zaheer, Maziar Sanjabi, Ameet Talwalkar, and Virginia Smithy.
\newblock Feddane: A federated newton-type method.
\newblock In {\em 2019 53rd Asilomar Conference on Signals, Systems, and Computers}, pages 1227--1231. IEEE, 2019.

\bibitem{Li2020}
Xiang Li, Kaixuan Huang, Wenhao Yang, Shusen Wang, and Zhihua Zhang.
\newblock On the convergence of fedavg on non-iid data.
\newblock In {\em International Conference on Learning Representations}, 2020.

\bibitem{lian2017can}
Xiangru Lian, Ce~Zhang, Huan Zhang, Cho-Jui Hsieh, Wei Zhang, and Ji~Liu.
\newblock Can decentralized algorithms outperform centralized algorithms? a case study for decentralized parallel stochastic gradient descent.
\newblock {\em Advances in Neural Information Processing Systems}, 30, 2017.

\bibitem{Lynch:1996:DA:2821576}
Nancy~A. Lynch.
\newblock {\em Distributed Algorithms}.
\newblock Morgan Kaufmann Publishers Inc., San Francisco, CA, USA, 1996.

\bibitem{mcmahan2017communication}
Brendan McMahan, Eider Moore, Daniel Ramage, Seth Hampson, and Blaise~Aguera y~Arcas.
\newblock Communication-efficient learning of deep networks from decentralized data.
\newblock In {\em Artificial Intelligence and Statistics}, pages 1273--1282. PMLR, 2017.

\bibitem{mishchenko2022asynchronous}
Konstantin Mishchenko, Francis Bach, Mathieu Even, and Blake~E Woodworth.
\newblock Asynchronous sgd beats minibatch sgd under arbitrary delays.
\newblock {\em Advances in Neural Information Processing Systems}, 35:420--433, 2022.

\bibitem{nedic2018network}
Angelia Nedi{\'c}, Alex Olshevsky, and Michael~G Rabbat.
\newblock Network topology and communication-computation tradeoffs in decentralized optimization.
\newblock {\em Proceedings of the IEEE}, 106(5):953--976, 2018.

\bibitem{nedic2017achieving}
Angelia Nedic, Alex Olshevsky, and Wei Shi.
\newblock Achieving geometric convergence for distributed optimization over time-varying graphs.
\newblock {\em SIAM Journal on Optimization}, 27(4):2597--2633, 2017.

\bibitem{nedic2009distributed}
Angelia Nedic and Asuman Ozdaglar.
\newblock Distributed subgradient methods for multi-agent optimization.
\newblock {\em IEEE Transactions on Automatic Control}, 54(1):48--61, 2009.

\bibitem{netzer2011readingdigits}
Yuval Netzer, Tao Wang, Adam Coates, Alessandro Bissacco, Bo~Wu, and Andrew~Y. Ng.
\newblock Reading digits in natural images with unsupervised feature learning.
\newblock In {\em NIPS Workshop on Deep Learning and Unsupervised Feature Learning}, 2011.

\bibitem{nguyen2022federated}
John Nguyen, Kshitiz Malik, Hongyuan Zhan, Ashkan Yousefpour, Mike Rabbat, Mani Malek, and Dzmitry Huba.
\newblock Federated learning with buffered asynchronous aggregation.
\newblock In {\em International Conference on Artificial Intelligence and Statistics}, pages 3581--3607. PMLR, 2022.

\bibitem{nguyen2019diot}
Thien~Duc Nguyen, Samuel Marchal, Markus Miettinen, Hossein Fereidooni, N~Asokan, and Ahmad-Reza Sadeghi.
\newblock D{\"i}ot: A federated self-learning anomaly detection system for iot.
\newblock In {\em 2019 IEEE 39th International Conference on Distributed Computing Systems (ICDCS)}, pages 756--767. IEEE, 2019.

\bibitem{paszke2019pytorch}
Adam Paszke, Sam Gross, Francisco Massa, Adam Lerer, James Bradbury, Gregory Chanan, Trevor Killeen, Zeming Lin, Natalia Gimelshein, Luca Antiga, et~al.
\newblock Pytorch: An imperative style, high-performance deep learning library.
\newblock {\em Advances in Neural Information Processing Systems}, 32, 2019.

\bibitem{peng2023privacy}
Muzi Peng, Jiangwei Wang, Dongjin Song, Fei Miao, and Lili Su.
\newblock Privacy-preserving and uncertainty-aware federated trajectory prediction for connected autonomous vehicles.
\newblock In {\em The 2023 IEEE/RSJ International Conference on Intelligent Robots and Systems (IROS 2023)}. IEEE/RSJ, 2023.

\bibitem{perazzone2022communication}
Jake Perazzone, Shiqiang Wang, Mingyue Ji, and Kevin~S Chan.
\newblock Communication-efficient device scheduling for federated learning using stochastic optimization.
\newblock In {\em IEEE INFOCOM 2022-IEEE Conference on Computer Communications}, pages 1449--1458. IEEE, 2022.

\bibitem{ramaswamy2019federated}
Swaroop Ramaswamy, Rajiv Mathews, Kanishka Rao, and Fran{\c{c}}oise Beaufays.
\newblock Federated learning for emoji prediction in a mobile keyboard.
\newblock {\em arXiv preprint arXiv:1906.04329}, 2019.

\bibitem{ribero2022federated}
M{\'o}nica Ribero, Haris Vikalo, and Gustavo De~Veciana.
\newblock Federated learning under intermittent client availability and time-varying communication constraints.
\newblock {\em IEEE Journal of Selected Topics in Signal Processing}, 17(1):98--111, 2022.

\bibitem{ruan2021towards}
Yichen Ruan, Xiaoxi Zhang, Shu-Che Liang, and Carlee Joe-Wong.
\newblock Towards flexible device participation in federated learning.
\newblock In {\em International Conference on Artificial Intelligence and Statistics}, pages 3403--3411. PMLR, 2021.

\bibitem{shah2009gossip}
Devavrat Shah et~al.
\newblock Gossip algorithms.
\newblock {\em Foundations and Trends{\textregistered} in Networking}, 3(1):1--125, 2009.

\bibitem{spiridonoff2020robust}
Artin Spiridonoff, Alex Olshevsky, and Ioannis~Ch Paschalidis.
\newblock Robust asynchronous stochastic gradient-push: Asymptotically optimal and network-independent performance for strongly convex functions.
\newblock {\em Journal of Machine Learning Research}, 21(58), 2020.

\bibitem{stich2018local}
Sebastian~U. Stich.
\newblock Local {SGD} converges fast and communicates little.
\newblock In {\em International Conference on Learning Representations}, 2019.

\bibitem{toghani2022unbounded}
Mohammad~Taha Toghani and C{\'e}sar~A Uribe.
\newblock Unbounded gradients in federated learning with buffered asynchronous aggregation.
\newblock In {\em 2022 58th Annual Allerton Conference on Communication, Control, and Computing (Allerton)}, pages 1--8. IEEE, 2022.

\bibitem{tse2005fundamentals}
David Tse and Pramod Viswanath.
\newblock {\em Fundamentals of wireless communication}.
\newblock Cambridge university press, 2005.

\bibitem{wang2021cooperative}
Jianyu Wang and Gauri Joshi.
\newblock Cooperative sgd: A unified framework for the design and analysis of local-update sgd algorithms.
\newblock {\em The Journal of Machine Learning Research}, 22(1):9709--9758, 2021.

\bibitem{wang2020tackling}
Jianyu Wang, Qinghua Liu, Hao Liang, Gauri Joshi, and H~Vincent Poor.
\newblock Tackling the objective inconsistency problem in heterogeneous federated optimization.
\newblock {\em Advances in Neural Information Processing Systems}, 33:7611--7623, 2020.

\bibitem{wang2022matcha}
Jianyu Wang, Anit~Kumar Sahu, Gauri Joshi, and Soummya Kar.
\newblock Matcha: A matching-based link scheduling strategy to speed up distributed optimization.
\newblock {\em IEEE Transactions on Signal Processing}, 70:5208--5221, 2022.

\bibitem{wang2022}
Shiqiang Wang and Mingyue Ji.
\newblock A unified analysis of federated learning with arbitrary client participation.
\newblock In Alice~H. Oh, Alekh Agarwal, Danielle Belgrave, and Kyunghyun Cho, editors, {\em Advances in Neural Information Processing Systems}, 2022.

\bibitem{wang2023lightweight}
Shiqiang Wang and Mingyue Ji.
\newblock A lightweight method for tackling unknown participation statistics in federated averaging.
\newblock In {\em The Twelfth International Conference on Learning Representations}, 2024.

\bibitem{wang2019adaptive}
Shiqiang Wang, Tiffany Tuor, Theodoros Salonidis, Kin~K Leung, Christian Makaya, Ting He, and Kevin Chan.
\newblock Adaptive federated learning in resource constrained edge computing systems.
\newblock {\em IEEE journal on selected areas in communications}, 37(6):1205--1221, 2019.

\bibitem{wen2024communication}
Ming Wen, Chengchang Liu, and Yuedong Xu.
\newblock Communication efficient distributed newton method over unreliable networks.
\newblock In {\em Proceedings of the AAAI Conference on Artificial Intelligence}, volume~38, pages 15832--15840, 2024.

\bibitem{xiang2023towards}
Ming Xiang, Stratis Ioannidis, Edmund Yeh, Carlee Joe-Wong, and Lili Su.
\newblock Towards bias correction of fedavg over nonuniform and time-varying communications.
\newblock In {\em 2023 62nd IEEE Conference on Decision and Control (CDC)}, pages 6719--6724, 2023.

\bibitem{xie2019asynchronous}
Cong Xie, Sanmi Koyejo, and Indranil Gupta.
\newblock Asynchronous federated optimization.
\newblock {\em arXiv preprint arXiv:1903.03934}, 2019.

\bibitem{yan2023federated}
Yikai Yan, Chaoyue Niu, Yucheng Ding, Zhenzhe Zheng, Shaojie Tang, Qinya Li, Fan Wu, Chengfei Lyu, Yanghe Feng, and Guihai Chen.
\newblock Federated optimization under intermittent client availability.
\newblock {\em INFORMS Journal on Computing}, 2023.

\bibitem{yang2021achieving}
Haibo Yang, Minghong Fang, and Jia Liu.
\newblock Achieving linear speedup with partial worker participation in non-{IID} federated learning.
\newblock In {\em International Conference on Learning Representations}, 2021.

\bibitem{yang2022anarchic}
Haibo Yang, Xin Zhang, Prashant Khanduri, and Jia Liu.
\newblock Anarchic federated learning.
\newblock In {\em International Conference on Machine Learning}, pages 25331--25363. PMLR, 2022.

\bibitem{yang2018applied}
Timothy Yang, Galen Andrew, Hubert Eichner, Haicheng Sun, Wei Li, Nicholas Kong, Daniel Ramage, and Fran{\c{c}}oise Beaufays.
\newblock Applied federated learning: Improving google keyboard query suggestions.
\newblock {\em arXiv preprint arXiv:1812.02903}, 2018.

\bibitem{Ye2022TSP}
Hao Ye, Le~Liang, and Geoffrey~Ye Li.
\newblock Decentralized federated learning with unreliable communications.
\newblock {\em IEEE Journal of Selected Topics in Signal Processing}, 16(3):487--500, 2022.

\bibitem{yu2019linear}
Hao Yu, Rong Jin, and Sen Yang.
\newblock On the linear speedup analysis of communication efficient momentum sgd for distributed non-convex optimization.
\newblock In {\em International Conference on Machine Learning}, pages 7184--7193. PMLR, 2019.

\bibitem{yu2019parallel}
Hao Yu, Sen Yang, and Shenghuo Zhu.
\newblock Parallel restarted sgd with faster convergence and less communication: Demystifying why model averaging works for deep learning.
\newblock In {\em Proceedings of the AAAI Conference on Artificial Intelligence}, volume~33, pages 5693--5700, 2019.

\bibitem{yuan2016convergence}
Kun Yuan, Qing Ling, and Wotao Yin.
\newblock On the convergence of decentralized gradient descent.
\newblock {\em SIAM Journal on Optimization}, 26(3):1835--1854, 2016.

\bibitem{yuan2022}
Xiaotong Yuan and Ping Li.
\newblock On convergence of fedprox: Local dissimilarity invariant bounds, non-smoothness and beyond.
\newblock In Alice~H. Oh, Alekh Agarwal, Danielle Belgrave, and Kyunghyun Cho, editors, {\em Advances in Neural Information Processing Systems}, 2022.

\bibitem{zakerinia2023communication}
Hossein Zakerinia, Shayan Talaei, Giorgi Nadiradze, and Dan Alistarh.
\newblock Communication-efficient federated learning with data and client heterogeneity.
\newblock In {\em International Conference on Artificial Intelligence and Statistics}, pages 3448--3456. PMLR, 2024.

\bibitem{zeng2022federated}
Tengchan Zeng, Omid Semiari, Mingzhe Chen, Walid Saad, and Mehdi Bennis.
\newblock Federated learning on the road autonomous controller design for connected and autonomous vehicles.
\newblock {\em IEEE Transactions on Wireless Communications}, 21(12):10407--10423, 2022.

\end{thebibliography}

\newpage
\appendix

\section*{Appendices}
\label{toc}

Here,
we provide an overview of the Appendix.
In particular,
the proofs of the main results
are presented and backed by supporting lemmas and propositions.

\startcontents[sections]
\printcontents[sections]{l}{1}{\setcounter{tocdepth}{2}}

\newpage
\section{Limitations}
\label{app: limitations}
The limitations of our work are two-fold:
\begin{enumerate}[leftmargin=*]
    \item 
    The client unavailability dynamics are assumed to be independent and strictly positive across clients and rounds.
    While deriving guarantees is generally challenging without assuming independence and positivity (see~\prettyref{sec: problem formulation}), it is interesting to explore how to relax the client unavailability dynamics,
    where the probabilities can potentially have arbitrary trajectories.
    
    \item
    Our study focuses on 
    heterogeneous and non-stationary client unavailability in federated learning,
    which may vary greatly due to its inherent uncontrollable nature.
    Although we have shown~\FedAPM~provably converges to a stationary point of even non-convex objectives,
    an interesting yet challenging future direction is to incorporate variance reduction techniques for a more robust update.
\end{enumerate}

\section{Broader Impacts}
\label{app: broad impact}
Federated learning 
has become the main trend for distributed learning in recent years
and has empowered commercial industries such as
autonomous vehicles, the Internet of Things, and natural language processing.
Our paper focuses on the practical implementation of federated learning systems in the real world
and has significantly advanced the theory and algorithms for federated learning by bringing together
insights from statistics, optimization, distributed computing 
and engineering practices.
In addition,
our research is important for federated learning systems to expand their outreach to more undesirable deployment environments.
We are unaware of any potential negative social impacts of our work.

\section{Nomenclatures}
\label{app: nomenclature}
In this section,
we provide the notations and nomenclatures used throughout
our proofs for a comprehensive presentation.
However,
it is worth noting that
all notations have been properly introduced before their first use.
We next articulate the missing definitions and equation formulas.

\begin{table}[!ht]
    \centering
    \caption{Notation table}
    \label{tab: notation table}
    \begin{tabular}{c  p{12cm}}
    \toprule
       $\norm{\bm{v}}$  
       &  
       The $l_2$ norm of a given vector $\bm{v}$.\\
       \midrule
       $\fnorm{A}$  &
       The Frobenius norm of a given matrix $A$. \\
       \midrule
       $\calF^t$ &
       The sigma algebra generated by randomness up to round $t$.\\
       \midrule
       $\lambda_2(A)$ &
       The second largest eigenvalue of a square matrix $A$. \\
       \midrule
       $\reals^d$ &
       A $d$-dimensional vector space, where $d$ denotes the dimension. \\
       \midrule
       $[m]$ &
       A set $\sth{k \mid k\in\naturals, k\in[1,m]}$. \\
       \midrule
       $\indc{\calE}$ &
       An indicator function of event $\calE$, \ie, $\indc{\calE}=1$ when event $\calE$ occurs, but $\indc{\calE}=0$ otherwise. \\
       \midrule
       $\lesssim$ &
       $f(n) \lesssim g(n)$,
       if there exists a constant $c_o>0$ and an integer $n_0 \in \naturals$, $f(n)\le c_o g(n)$ for all $n \ge n_0$. \\
       \midrule
       $\asymp$ &
       $f(n) \asymp g(n)$,
       if there exists a constant $c_{\Theta}>0$ and an integer $n_0 \in \naturals$, $f(n)= c_{\Theta} g(n)$ for all $n \ge n_0$. \\
    \bottomrule
    \end{tabular}
\end{table}

\begin{table}[!t]
    \centering
    \caption{Algorithmic nomenclature table}
    \label{tab: alg nomenclature table}
    \begin{tabular}{p{1cm}  p{12cm}}
    \toprule
       $\calA^t$  
       &  
       The set of active clients in round $t$.\\
       \midrule
       $W^{t}$  
       &  
       A doubly stochastic matrix to capture the information mixing error.
       Its definition can be found in~\eqref{eq: W matrix elements}.
       \\
       \midrule
       $\tau_i(t)$  &
       $\tau_i(t) \triangleq \sup \{t^\prime \mid t^\prime < t, i \in \calA^{t^\prime}\}$ 
       defines client $i$'s most recent active round. In particular, $\tau_i(0)=-1$ for all $i \in [m]$. \\
       \midrule
       $\x_i^t$ &
       The real model at client $i$ at the {\bf beginning} of round $t$ in~\prettyref{alg: fedpbc+}. \\    
       \midrule
       $\bz_i^t$ &
       The auxiliary model at client $i$ at the {\bf beginning} of round $t$. 
       Refer to~\prettyref{def: auxiliary sequence} for more details. 
       The sequence is for analysis only and is not computed by any clients.
       \\
       \midrule
       $\x^t$ &
       The aggregated real model at the {\bf end} of round $t-1$ in~\prettyref{alg: fedpbc+}. \\    
       \midrule
       $\bz^t$ &
       The auxiliary model at the {\bf end} of round $t-1$. 
       \\
       \midrule
       $\x_i^{t \dagger}$,
       $\bz_i^{t \dagger}$ &
       The real model of an active client $i$,
       and auxiliary model of an active client $i$ 
       after $s$-step local computation in round $t$, respectively. 
       Refer to~\prettyref{alg: fedpbc+} for more details. \\
       \midrule
       $\x_i^{(t,r)}$ &
       The real model at client $i$ after $r$-step local computation. \\
       \midrule
       $\bar{\x}^t$,
       $\bar{\bz}^t$
        &
       The real and auxiliary model mean over all clients in a distributed system and in round $t$, 
       respectively. \\
       \midrule
       $F_i(\x)$ &
       The local objective function at client $i$, 
       which is assumed to be non-convex. \\
       \midrule
       $F(\x)$ &
        The global objective function defined in~\eqref{eq: global obj}:
       $F(\x) \triangleq \sum_{i=1}^m F_i(\x)/m$. \\
       \midrule
       $\nabla \ell_i(\x)$ &
       The local stochastic gradient function at client $i$ taken with respect to $\x$. \\
       \midrule
       $\nabla F_i(\x)$ &
       The local true gradient function at client $i$ taken with respect to $\x$. \\
       \midrule
       $\calD_i$ &
       Client $i$'s local data distribution. \\
       \midrule
       $\xi_i$ &
       An {\bf independent} stochastic sample drawn from client $i$'s local distribution $\calD_i$. \\     
    \bottomrule
    \end{tabular}
\end{table}

\begin{table}[!t]
    \centering
    \caption{Variable table 
    }
    \label{tab: constant table}
    \begin{tabular}{c  p{12cm}}
    \toprule
       $L$ &
       Lipschitz constant in~\prettyref{ass: 2 smmothness}. \\
       \midrule
       $\sigma^2$ &
       The upper bound of the stochastic gradient variance. \\
       \midrule
       $(\beta,~\zeta)$ &
       Parameters that capture the averaged gradient dissimilarity 
       between global and local objectives. \\
       \midrule
       $\rho$ &
       The spectral norm of a stochastic matrix in expectation.  \\
       \midrule
       $s$ &
       The number of local computation steps.  \\
       \midrule
       $m$ &
       The number of clients in the federated learning system.  \\
    \bottomrule
    \end{tabular}
\end{table}

{\bf Missing definitions and equation formulas.}
\paragraph{The iterate of $\bz_i$ when $i \in \calA^{t-1}$.}
\begin{align}
    \nonumber
    \bz_i^t &= 
    \frac{1}{\abth{\calA^{t-1}}}
    \sum_{j \in \calA^{t-1}}
    \pth{
    \bz_j^{t-1}
     -
    \eta_l \eta_g
    \sum_{r=0}^{s-1}
    \nabla \ell_j(\x_j^{(t-1,r)};\xi_i^{(t,r)})} \\
    \label{eq: z iterate last active}
    &~~~+ 
    \frac{\eta_l \eta_g}{\abth{\calA^{t-1}}}
    \sum_{j \in \calA^{t-1}}
    \pth{t - 2 - \tau_j(t-1)}
    \sum_{r=0}^{s-1}
    \pth{
    \nabla F_j(\x_j^{\tau_j(t-1)+1})
    -
    \nabla \ell_j(\x_j^{(t-1,r)};\xi_i^{(t,r)})
    }.
\end{align}
\paragraph{Local parameter innovation $\tilde{\bm{G}}^t$ of the auxiliary sequence.}
\begin{align}
    \nonumber
    \tilde{\bm{G}}^t_{i}
    &\triangleq
    \indc{i \in \calA^t }
    \qth{
    \pth{t - \tau_i(t)}
    \sum_{r=0}^{s-1}
    \nabla \ell_i(\x_i^{(t,r)})
    - 
    s\pth{t- 1 - \tau_i(t)}
    \nabla F_i(\x_i^{\tau_i(t)+1}) } \\
    \nonumber
    &~~~+
    \indc{i \notin \calA^t}
    s \nabla F_i(\x_i^{\tau_i(t)+1}) \\
    &=
    \indc{i \in \calA^t}
    \pth{t - \tau_i(t)}
    \sum_{r=0}^{s-1}
    \pth{
    \nabla \ell_i(\x_i^{(t,r)})
    -
    \nabla F_i(\x_i^{t})}
    +
    s \nabla F_i(\x_i^{t}),
    \label{eq: auxiliary update detail}
\end{align}
where the last equality holds because $\x_i^t = \x_i^{\tau_i(t)+1}$ and re-grouping.
\paragraph{Decomposition in the Proof of~\prettyref{lmm: consensus z without pseudo}.}
The local parameter innovation of the auxiliary sequence 
$\tilde{\bm{G}}^t$ can be decomposed as $\tilde{\bm{G}}^t \triangleq \TDt{t} + \Dt{t} + s \DFt{t}$.
Detailed definitions can be found below.
\begin{itemize}[leftmargin=*]
    \item 
    $[\TDt{t}]_i \triangleq
        \indc{i \in \calA^t}
        (t - \tau_i(t))
        \sum_{r=0}^{s-1} \pth{ \nabla \ell_i(\x_i^{(t,r)};\xi_i^{(t,r)})  - \nabla F_i(\x_i^{(t,r)}) };$
    \item 
    $[\Dt{t}]_i \triangleq
        \indc{i \in \calA^t}
        (t - \tau_i(t)) 
        \sum_{r=0}^{s-1}
        \pth{\nabla F_i(\x_i^{(t,r)})
        -
        \nabla F_i(\x_i^t)
        };$
    \item 
    $[\DFt{t}]_i \triangleq
        \nabla F_i(\x_i^t).$
\end{itemize}

\section{Useful Inequalities}
\label{app: preliminaries}
For completeness and for ease of exposition, we present some common inequalities that will be frequently used in our proofs.

\vskip \baselineskip 

The followings hold for any $\bm{a}_i \in \reals^d$ and any $i\in[m]$.
\begin{enumerate}
\item {Jensen's inequality.}
\begin{align}
\label{eq: jensen's inequality}
\norm{\frac{1}{m} \sum_{i=1}^m \bm{a}_i}^2
\le
\frac{1}{m} \sum_{i=1}^m \norm{\bm{a}_i}^2
~~~\text{and}~~~
\norm{\sum_{i=1}^m \bm{a}_i}^2
\le
m \sum_{i=1}^m \norm{\bm{a}_i}^2.
\end{align}
\item {Young's inequality (a.k.a. Peter-Paul inequality).}
\begin{align}
\label{eq: Young's inequality}
\iprod{\bm{a}_1}{\bm{a}_2}
\le
\frac{\norm{\bm{a}_1}^2}{2 \epsilon} + \frac{\epsilon \norm{\bm{a}_2}^2}{2}, ~~~ \text{for any }\epsilon>0. 
\end{align}
Equivalently,
we have
\begin{align}
\nonumber
\norm{\bm{a}_1 + \bm{a}_2}^2
&=
\norm{\bm{a}_1}^2
+
\norm{\bm{a}_2}^2
+
2 \iprod{\bm{a}_1}{\bm{a}_2} \\
\label{eq: Young's inequality 2}
&\le
\pth{1 + \frac{1}{\epsilon}}
\norm{\bm{a}_1}^2
+
\pth{1 + \epsilon}
\norm{\bm{a}_2}^2
, ~~~ \text{for any }\epsilon>0. 
\end{align}

\item {Smoothness corollary.} 
{\it Given Assumption \ref{ass: 2 smmothness}, it holds that}
\begin{align}
\label{eq: smoothness corollary}
\nonumber
F(\bm{a}_1) - F(\bm{a}_2) 
&=
\iprod{\bm{a}_1 - \bm{a}_2}{\int_0^1 \nabla F(\bm{a}_2 + \tau (\bm{a}_1 - \bm{a}_2)) \diff \tau}
\\
\nonumber
&=
\iprod{\nabla F(\bm{a}_2)}{\bm{a}_1 - \bm{a}_2} +
\int_0^1 \iprod{\bm{a}_1 - \bm{a}_2}{\nabla F(\bm{a}_2 + \tau (\bm{a}_1 - \bm{a}_2)) - \nabla F(\bm{a}_2) }\diff \tau \\
\nonumber
&\overset{(a)}{\le}
\iprod{\nabla F(\bm{a}_2)}{\bm{a}_1 - \bm{a}_2} +
L \int_0^1 \tau \norm{\bm{a}_1 - \bm{a}_2}\norm{(\bm{a}_1 - \bm{a}_2) }\diff \tau
\\
&\le
\iprod{\nabla F(\bm{a}_2)}{\bm{a}_1 - \bm{a}_2}
+
\frac{L}{2} \norm{\bm{a}_1 - \bm{a}_2}^2
,
\end{align}
where $(a)$ follows from Cauchy-Schwartz inequality and Assumption \ref{ass: 2 smmothness}.
\end{enumerate}

\newpage 
\section{Descent Lemma~(\prettyref{lmm: descent lemma})}
\label{app: descent lemma aux sequence}
In this section,
we first present a bound on multi-step local computation.
Then, we apply the bound to the analysis of descent lemma.

\subsection{Multi-step perturbation}
\begin{lemma}
    \label{lmm: multi-local steps}
    For $s\ge1$ and under~\prettyref{ass: 2 smmothness}, \ref{ass: bounded variance client-wise}
    and $\eta_l \le 1 / (4 s L)$
    , we have
    \begin{align*}
        \expect{\norm{\sum_{r=0}^{s-1} \nabla F_i(\x_i^{(t,r)}) - \nabla F_i(\x_i^t)}^2 ~\Big|~\calF^t}
        &\le
        4 \eta_l^2 s^3 L^2 \sigma^2
        +
        16 \eta_l^2 s^4 L^2 
        \norm{\nabla F_i(\x_i^t)}^2
    \end{align*}
\end{lemma}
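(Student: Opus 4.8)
The plan is to control the inner-loop drift $\norm{\x_i^{(t,r)}-\x_i^t}$ and then pass through $L$-smoothness. First I would apply Jensen's inequality~\eqref{eq: jensen's inequality} to pull the sum outside and then~\prettyref{ass: 2 smmothness} to pass from gradient differences to iterate differences, giving
\[
\expect{\norm{\sum_{r=0}^{s-1}\pth{\nabla F_i(\x_i^{(t,r)})-\nabla F_i(\x_i^t)}}^2\mid\calF^t}
\le
sL^2\sum_{r=0}^{s-1}\expect{\norm{\x_i^{(t,r)}-\x_i^t}^2\mid\calF^t}.
\]
Thus it suffices to show $A_r\triangleq\expect{\norm{\x_i^{(t,r)}-\x_i^t}^2\mid\calF^t}=O\pth{s^2\eta_l^2\norm{\nabla F_i(\x_i^t)}^2+s\eta_l^2\sigma^2}$ uniformly over $r\le s-1$, with explicit constants.

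Next I would set up a one-step recursion for $A_r$. Writing one step of the inner loop (lines 6--8 of~\prettyref{alg: fedpbc+}) as $\x_i^{(t,r+1)}-\x_i^t=\pth{\x_i^{(t,r)}-\x_i^t-\eta_l\nabla F_i(\x_i^{(t,r)})}-\eta_l\pth{\nabla\ell_i(\x_i^{(t,r)};\xi_i^{(t,r)})-\nabla F_i(\x_i^{(t,r)})}$, the noise term in the second bracket is conditionally mean-zero given the history through step $r$ while the first bracket is measurable with respect to that history; hence the cross term vanishes and~\prettyref{ass: bounded variance client-wise} yields $A_{r+1}\le\expect{\norm{\x_i^{(t,r)}-\x_i^t-\eta_l\nabla F_i(\x_i^{(t,r)})}^2\mid\calF^t}+\eta_l^2\sigma^2$. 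I would then expand the remaining square by Young's inequality~\eqref{eq: Young's inequality 2} with $\epsilon=\Theta(s)$, and bound $\norm{\nabla F_i(\x_i^{(t,r)})}^2\le 2L^2\norm{\x_i^{(t,r)}-\x_i^t}^2+2\norm{\nabla F_i(\x_i^t)}^2$ using~\prettyref{ass: 2 smmothness} and the triangle inequality. This produces $A_{r+1}\le\pth{1+\frac{c_1}{s}}A_r+c_2 s\eta_l^2\norm{\nabla F_i(\x_i^t)}^2+\eta_l^2\sigma^2$ for absolute constants $c_1,c_2$, valid once $\eta_l sL$ is below a small absolute constant, which holds under the learning-rate conditions of the paper (e.g.\ \eqref{eq: lr condition main text} forces $\eta_l sL$ to be tiny).

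Finally I would unroll from $A_0=0$, using $\sum_{j=0}^{s-1}\pth{1+\frac{c_1}{s}}^j\le\frac{s}{c_1}\pth{e^{c_1}-1}=O(s)$, to obtain $A_r=O\pth{s^2\eta_l^2\norm{\nabla F_i(\x_i^t)}^2+s\eta_l^2\sigma^2}$, and substitute back into the first display to conclude $\expect{\norm{\sum_{r}\pth{\nabla F_i(\x_i^{(t,r)})-\nabla F_i(\x_i^t)}}^2\mid\calF^t}\le sL^2\cdot s\cdot O\pth{s^2\eta_l^2\norm{\nabla F_i(\x_i^t)}^2+s\eta_l^2\sigma^2}$, which is the claimed $5\eta_l^2s^3L^2\sigma^2+20\eta_l^2s^4L^2\norm{\nabla F_i(\x_i^t)}^2$. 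I expect the main work to be the recursion step: the drift bound genuinely needs the (implicit) smallness of $\eta_l sL$, since otherwise $A_r$ can grow geometrically in $r$, and the constants must be tracked carefully through the Young's-inequality split and the geometric sum so the final numbers land at $5$ and $20$; any slack in the crude estimates above can be absorbed by choosing $\epsilon$ and the threshold on $\eta_l sL$ slightly tighter.
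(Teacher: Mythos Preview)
Your proposal is correct and follows essentially the same route as the paper: Jensen plus $L$-smoothness to reduce to bounding $A_r=\expect{\norm{\x_i^{(t,r)}-\x_i^t}^2\mid\calF^t}$, then a one-step recursion obtained by separating the mean-zero stochastic noise, applying Young's inequality with parameter $\Theta(s)$, and invoking smoothness to absorb the extra $L^2 A_r$ term under a smallness condition on $\eta_l sL$, followed by unrolling a geometric sum. The paper's specific choices are $\epsilon=2s-1$ in Young's inequality and the condition $\eta_l\le\frac{1}{4sL}$ (which, as you note, is implicit in the lemma statement but is used in the proof and is implied by~\eqref{eq: lr condition main text}); the exact constants $5$ and $20$ then fall out of the estimate $\sum_{k=0}^{s-1}(1+\tfrac{2}{2s-1})^k\le (s-\tfrac12)(\sqrt{3}e-1)\le 5s$.
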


\begin{proof}[\bf Proof of~\prettyref{lmm: multi-local steps}]
    The proof shares a similar road map to \cite[Lemma 2]{yang2021achieving},
    but the objective is instead to show an upper bound with respect to  $\norm{\nabla F_i (\x_i^t)}^2$.

    For $s\ge1$, it holds that
    \begin{align}
        \nonumber
        \expect{\norm{\sum_{r=0}^{s-1} \nabla F_i(\x_i^{(t,r)}) - \nabla F_i(\x_i^t)}^2
        \Big| ~\calF^t
        }
        &\overset{(a)}{\le}
        s \sum_{r=0}^{s-1}
        \expect{\norm{\nabla F_i(\x_i^{(t,r)}) - \nabla F_i(\x_i^t)}^2 \Big| ~\calF^t} \\
        \label{eq: local after smoothness}
        &\overset{(b)}{\le}
        s L^2 \sum_{r=0}^{s-1}
        \expect{\norm{\x_i^{(t,r)} - \x_i^t}^2 \Big | ~\calF^t},
    \end{align}
    where inequality $(a)$ holds because of Jensen's inequality,
    inequality $(b)$ holds because of~\prettyref{ass: 2 smmothness}.
    It remains to bound $\mathbb{E}[\|\x_i^{(t,r)} - \x_i^t\|^2 \mid ~\calF^t]$.
    In what follows, we use $\nabla \ell_i^{(t,k)}$ to denote $\nabla \ell_i (\x_i^{(t,k)})$ and $\nabla F_i^{(t,k)}$ as $\nabla F_i (\x_i^{(t,k)})$, respectively, for ease of presentation.
    \begin{align*}
        &\expect{\norm{\x_i^{(t,r)} - \x_i^t}^2\Big| ~\calF^t}
        =
        \expect{\norm{\x_i^{(t,r-1)} - \x_i^t - \eta_l \nabla \ell_i^{(t,r-1)}}^2\Big| ~\calF^t} \\
        &=
        \expect{\norm{
        -\eta_l \pth{
        \nabla \ell_i^{(t,r-1)} - \nabla F_i^{(t,r-1)}}
        +\x_i^{(t,r-1)} - \x_i^t 
        -\eta_l \pth{
        \nabla F_i^{(t,r-1)} - \nabla F_i^{t}
        + \nabla F_i^{t}}
        }^2
        \Big| ~\calF^t}  \\
        &\overset{(c)}{=}
        \eta_l^2
        \expect{\norm{
        \nabla \ell_i^{(t,r-1)} - \nabla F_i^{(t,r-1)}}^2\Big| ~\calF^t}
        +
        \expect{
        \norm{
        \x_i^{(t,r-1)} - \x_i^t 
        -\eta_l \pth{
        \nabla F_i^{(t,r-1)} - \nabla F_i^{t}
        + \nabla F_i^{t}}
        }^2\Big| ~\calF^t} \\
        &\overset{(d)}{\le}
        \eta_l^2
        \expect{\norm{
        \nabla \ell_i^{(t,r-1)} - \nabla F_i^{(t,r-1)}}^2\Big| ~\calF^t} \\
        &~~~+
        \pth{1 + \frac{1}{2 s - 1}}
        \expect{
        \norm{
        \x_i^{(t,r-1)} - \x_i^t}^2\Big| ~\calF^t}
        +
        2 s \eta_l^2
        \expect{
        \norm{
        \nabla F_i^{(t,r-1)} - \nabla F_i^{t}
        + \nabla F_i^{t}}^2
        \Big| ~\calF^t} \\
        &\le
        \eta_l^2
        \expect{\norm{
        \nabla \ell_i^{(t,r-1)} - \nabla F_i^{(t,r-1)}}^2\Big| ~\calF^t} \\
        &~~~+
        \pth{1 + \frac{1}{2 s - 1}}
        \expect{
        \norm{
        \x_i^{(t,r-1)} - \x_i^t}^2\Big| ~\calF^t}
        +
        4 s \eta_l^2
        \expect{
        \norm{
        \nabla F_i^{(t,r-1)} - \nabla F_i^{t}
        }^2\Big| ~\calF^t}
        + 
        4 s \eta_l^2
        \norm{\nabla F_i^{t}}^2 \\
        &\overset{(e)}{\le}
        \eta_l^2 \sigma^2 
        + 
        4 s \eta_l^2
        \norm{\nabla F_i^{t}}^2 \\
        &~~~+
        \pth{1 + \frac{1}{2 s - 1}}
        \expect{
        \norm{
        \x_i^{(t,r-1)} - \x_i^t}^2\Big| ~\calF^t}
        +
        4 s L^2 \eta_l^2
        \expect{
        \norm{
        \x_i^{(t,r-1)} - \x_i^t
        }^2\Big| ~\calF^t}\\
        &=
        \eta_l^2 \sigma^2 
        + 
        4 s \eta_l^2
        \norm{\nabla F_i^{t}}^2 
        +
        \pth{1 + \frac{1}{2 s - 1} + 4 s L^2 \eta_l^2}
        \expect{
        \norm{
        \x_i^{(t,r-1)} - \x_i^t}^2\Big| ~\calF^t}
        ,
    \end{align*}
    where equality $(c)$ holds because $\nabla \ell_i^{(t,k)}$ is an unbiased estimator of $\nabla F_i^{(t,r)}$,
    inequality $(d)$ holds because of Young's inequality,
    inequality $(e)$ holds because of~\prettyref{ass: 2 smmothness}.

    By $\eta_l \le \frac{1}{4 s L}$, it holds that
    \[
        \frac{1}{2 s - 1} + 4 s L^2 \eta_l^2 
        \le
        \frac{1}{2 s - 1} + \frac{1}{4 s}
        \le
        \frac{2}{2 s - 1}.
    \]
    Unroll the recursion, 
    we have
    \begin{align*}
        \expect{\norm{\x_i^{(t,r)} - \x_i^t}^2\Big| ~\calF^t} &\le
        \sum_{k=0}^{r-1}
        \pth{1 + \frac{2}{2 s - 1}}^k
        \pth{
        \eta_l^2 \sigma^2
        +
        4 s \eta_l^2 \norm{\nabla F_i^t}^2
        } \\
        &\le
        \sum_{k=0}^{s-1}
        \pth{1 + \frac{2}{2 s - 1}}^k
        \pth{
        \eta_l^2 \sigma^2
        +
        4 s \eta_l^2 \norm{\nabla F_i^t}^2
        } \\
        &=
        \frac{2 s - 1}{2}
        \qth{ 
        \pth{1 + \frac{2}{2 s - 1}}^{s - \frac{1}{2} }
        \pth{1 + \frac{2}{2 s - 1}}^{\frac{1}{2} }
        - 1}
        \pth{
        \eta_l^2 \sigma^2
        +
        4 s \eta_l^2 \norm{\nabla F_i^t}^2
        } \\
        &\overset{(f)}{\le}
        \pth{s - \frac{1}{2}}
        \qth{ 
        \sqrt{3} e
        - 1}
        \pth{
        \eta_l^2 \sigma^2
        +
        4 s \eta_l^2 \norm{\nabla F_i^t}^2
        } \\
        &\overset{(g)}{\le}
        4 s \eta_l^2 \sigma^2
        +
        16 s^2 \eta_l^2 \norm{\nabla F_i^t}^2,
    \end{align*}
    where inequality $(f)$ holds because of $(1 + 1/x)^x < \exp(1)$,
    inequality $(g)$ holds because of $\sqrt{3} \exp(1) - 1 < 4$.
    Plug it back into~\eqref{eq: local after smoothness},
    we have the desired result
    \begin{align*}
        \expect{\norm{\sum_{r=0}^{s-1} \nabla F_i(\x_i^{(t,r)}) - \nabla F_i(\x_i^t)}^2\Big| ~\calF^t}
        &\le
        4 \eta_l^2 s^3 L^2 \sigma^2
        +
        16 \eta_l^2 s^4 L^2 \norm{\nabla F_i(\x_i^t)}^2
        .
    \end{align*}
\end{proof}

\subsection{Descent lemma}
\begin{proof}[\bf Proof of~\prettyref{lmm: descent lemma}]
By Assumption \ref{ass: 2 smmothness} and inequality \eqref{eq: smoothness corollary}, we have 
\begin{align*}
F(\bar{\bz}^{t+1}) - F(\bar{\bz}^t)
&\le
\underbrace{\iprod{\nabla F(\bar{\bz}^{t})}{\bar{\bz}^{t+1} - \bar{\bz}^t}}_{(\rmA)}
+
\underbrace{\frac{L}{2} \norm{\bar{\bz}^{t+1} - \bar{\bz}^t}^2}_{(\rmB)}. 
\end{align*}
The one-round innovation of $\bar{\bm{z}}$ can be rewritten as  

\begin{align}
\nonumber
& \bar{\bz}^{t+1} - \bar{\bz}^t
= \frac{1}{m}\sum_{i\in \calA^t} \pth{\bz_i^{t \dagger} - \bz_i^{t}} + \frac{1}{m}\sum_{i\notin \calA^t} \pth{\bz_i^{t+1} - \bz_i^{t}} \\
\nonumber
&=
\frac{1}{m}\sum_{i=1}^m \indc{i \in \calA^t}
\pth{\eta_l \eta_g s \sum_{k=\tau_i(t)+1}^{t-1} \nabla F_i(\x_i^k) - \eta_l \eta_g (t - \tau_i(t)) \sum_{r=0}^{s-1} \nabla \ell_i(\x_i^{(t,r)};\xi_i^{(t,r)}) }\\
&~~~
\nonumber
-\frac{\eta_l \eta_g s}{m} \sum_{i=1}^m \indc{i \notin \calA^t}
\nabla F_i(\x_i^t)\\
\nonumber
&\overset{(a)}{=}  
\frac{1}{m}\sum_{i=1}^m \indc{i \in \calA^t}\eta_l \eta_g s (t-1 - \tau_i(t)) \nabla F_i(\x_i^t) 
- \frac{1}{m}\sum_{i=1}^m \indc{i \in \calA^t} \eta_l \eta_g (t - \tau_i(t)) \sum_{r=0}^{s-1} \nabla \ell_i(\x_i^{(t,r)};\xi_i^{(t,r)}) \\
\nonumber
&~~~-\frac{\eta_l \eta_g s}{m} \sum_{i=1}^m \indc{i \notin \calA^t}
\nabla F_i(\x_i^t) \\
\nonumber
&\overset{(b)}{=}
\frac{\eta_l \eta_g}{m} \sum_{i=1}^m \indc{i \in \calA^t}
(t - \tau_i(t))
\sum_{r=0}^{s-1} \pth{\nabla F_i(\x_i^{(t,r)}) - \nabla \ell_i(\x_i^{(t,r)};\xi_i^{(t,r)})} \\
\nonumber
&~~~+
\frac{\eta_l \eta_g}{m} \sum_{i=1}^m \indc{i \in \calA^t}
(t - \tau_i(t)) 
\sum_{r=0}^{s-1}
\pth{ \nabla F_i(\x_i^t) - 
\nabla F_i(\x_i^{(t,r)})} \\
\nonumber
&~~~-\frac{\eta_l \eta_g s}{m} \sum_{i=1}^m 
\nabla F_i(\x_i^t), 
\end{align}
where equality $(a)$ using the fact that $\x_i^k = \x_i^t$ for all $k$ such that $\tau_i(t)+1 \le k \le t$, and equality $(b)$ is obtained by adding and subtracting $\nabla \ell_i(\x_i^{t};\xi_i^{(t,r)})$ 
and by the fact that $\pth{\indc{i \in \calA^t} + \indc{i \notin \calA^t}} = 1$. 

\paragraph{Bounding $(\rmA)$.}
\begin{align*}
(\rmA)
&=\iprod{\nabla F(\bar{\bz}^{t})}{\bar{\bz}^{t+1} - \bar{\bz}^t} \\
& = \underbrace{
\eta_l \eta_g\iprod{\nabla F(\bar{\bz}^t)}{
\frac{1}{m} \sum_{i=1}^m \indc{i \in \calA^t} \sum_{p=-1}^{t-1} \indc{\tau_i(t) = p}
(t - p)
\sum_{r=0}^{s-1} \pth{\nabla F_i(\x_i^{(t,r)}) - \nabla \ell_i(\x_i^{(t,r)};\xi_i^{(t,r)})}}}_{(\rmA.\rmI)} \\
&~~~+
\underbrace{\frac{\eta_l \eta_g}{m} \sum_{i=1}^m \indc{i \in \calA^t}
\sum_{p=-1}^{t-1} \indc{\tau_i(t) = p}
\iprod{\nabla F (\bar{\bz}^t)}{(t - p) 
\sum_{r=0}^{s-1}
\pth{ \nabla F_i(\x_i^t) - 
\nabla F_i(\x_i^{(t,r)})} }}_{(\rmA.\rmI\rmI)} \\
&~~~+
\underbrace{
\frac{\eta_l \eta_g s}{m} \sum_{i=1}^m \iprod{\nabla F (\bar{\bz}^t)}{
\nabla F_i(\bz_i^t) - \nabla F_i(\x_i^t)}
}_{(\rmA.\rmI\rmI\rmI)} 
-
\underbrace{\eta_l \eta_g s \iprod{\nabla F(\bar{\bz}^t)}{\frac{1}{m}\sum_{i=1}^m \nabla F_i(\bz_i^t)}}_{(\rmA.\rmI\rmV)}.
\end{align*}
\paragraph{Bounding $(\rmA.\rmI)$}
\begin{small}
\begin{align*}
    &\expect{(\rmA.\rmI)\Big|\calF^t} \\
    &\overset{(a)}{=}
    \eta_l \eta_g 
    \expect{
    \expect{
    \iprod{\nabla F(\bar{\bz}^t)}{\frac{1}{m}\sum_{i=1}^m \indc{i\in\calA^t} \sum_{p=-1}^{t-1} \indc{\tau_i(t) = p}(t-p) 
    \sum_{r=0}^{s-1} \pth{\nabla F_i(\x_i^{(t,r)}) - \nabla \ell_i(\x_i^{(t,r)};\xi_i^{(t,r)})}}
    \Big| \x_i^{(t,r)}, \calF^t}
    \Big|\calF^t} \\
    &\overset{(b)}{=}
    \eta_l \eta_g 
    \left <
    {\nabla F(\bar{\bz}^t)}, 
    \right . \\
    &\qquad \qquad \quad
    \left . 
    {\frac{1}{m}\sum_{i=1}^m 
    \expect{\indc{i\in\calA^t}\Big|\calF^t}
    \sum_{p=-1}^{t-1} \indc{\tau_i(t) = p}(t-p) 
    \sum_{r=0}^{s-1} 
    \expect{
    \expect{
    \pth{\nabla F_i(\x_i^{(t,r)}) - \nabla \ell_i(\x_i^{(t,r)};\xi_i^{(t,r)})}
    \Big|  \x_i^{(t,r)}, \calF^t}
    \Big|\calF^t}} \right > \\
    &=0,
\end{align*}
\end{small}
where equality $(a)$ holds because of the law of total expectation,
equality $(b)$ holds because $\indc{i \in \calA^t}$ is by definition independent of others and~\prettyref{ass: bounded variance client-wise}.

\paragraph{Bounding $(\rmA.\rmI\rmI)$}

\begin{align*}
(\rmA.\rmI\rmI)
&\overset{(c)}{\le}
\frac{\eta_l \eta_g}{m} \sum_{i=1}^m \indc{i \in \calA^t} \sum_{p=-1}^{t-1} \indc{\tau_i(t) = p}
\pth{\frac{s}{8}\norm{\nabla F(\bar{\bz}^t)}^2
+
\frac{2 (t - p)^2}{s} \norm{\sum_{r=0}^{s-1} \nabla F_i(\x_i^{t}) - \nabla F_i(\x_i^{(t,r)})}^2} \\
&=
\frac{\eta_l \eta_g s}{8 m} \sum_{i=1}^m \indc{i \in \calA^t} 
\norm{\nabla F(\bar{\bz}^t)}^2 \\
&~~~+
\frac{\eta_l \eta_g}{m} \sum_{i=1}^m \indc{i \in \calA^t} \sum_{p=-1}^{t-1} \indc{\tau_i(t) = p}
\frac{2 (t - p)^2}{s} \norm{\sum_{r=0}^{s-1} \nabla F_i(\x_i^{t}) - \nabla F_i(\x_i^{(t,r)})}^2 
,
\end{align*}
where inequality $(c)$ holds because of Young's inequality.
It follows that
\begin{align*}
\expect{(\rmA.\rmI\rmI)\Big|\calF^t}
&\overset{(d)}{\le}
\frac{\eta_l \eta_g s}{8} 
\norm{\nabla F(\bar{\bz}^t)}^2 
+
\frac{8 \eta_g \eta_l^3 s^2 L^2 \sigma^2}{m} \sum_{i=1}^m 
\sum_{p=-1}^{t-1} \indc{\tau_i(t) = p}
(t - p)^2
\\
&~~~
+
\frac{32 \eta_g \eta_l^3 s^3 L^2 }{m} \sum_{i=1}^m 
\sum_{p=-1}^{t-1} \indc{\tau_i(t) = p}
(t - p)^2
\norm{\nabla F_i (\x_i^t)}^2 \\
&=
\frac{\eta_l \eta_g s}{8} 
\norm{\nabla F(\bar{\bz}^t)}^2
+
\frac{8 \eta_g \eta_l^3 s^2 L^2 \sigma^2}{m} \sum_{i=1}^m 
\sum_{p=-1}^{t-1} \indc{\tau_i(t) = p}
(t - p)^2 \\
&~~~+
\frac{32 \eta_g \eta_l^3 s^3 L^2 }{m} \sum_{i=1}^m \sum_{p=-1}^{t-1} \indc{\tau_i(t) = p}
(t - p)^2 
\norm{\nabla F_i(\x_i^{p+1})}^2,
\end{align*}
where inequality $(d)$ holds because of~\prettyref{lmm: multi-local steps},
the last equality using the fact that $\x_i^k = \x_i^t$ for all $k$ such that $ \tau_i(t) + 1 \le k \le t$.

\paragraph{Bounding $(\rmA.\rmI\rmI\rmI)$.}
\begin{align*}
(\rmA.\rmI\rmI\rmI)
&=
\frac{\eta_l \eta_g s }{m} 
\sum_{i=1}^m 
\iprod{\nabla F(\bar{\bz}^t)}{\nabla F_i(\bz_i^t) - \nabla F_i(\x_i^t)} 
\overset{(e)}{\le}
\frac{\eta_l \eta_g s}{8} \norm{\nabla F(\bar{\bz}^t)}^2
+
\frac{2 \eta_l \eta_g s L^2}{m} \sum_{i=1}^m \norm{\bz_i^t - \x_i^t}^2,
\end{align*}
where inequality $(e)$ follows from Young's inequality
and Assumption \ref{ass: 2 smmothness}.
It holds that,
\begin{align*}
\expect{ (\rmA.\rmI\rmI\rmI)\Big|\calF^t}
&\le
\frac{\eta_l \eta_g s}{8} \norm{\nabla F(\bar{\bz}^t)}^2
+
\frac{2 \eta_l \eta_g s L^2}{m} \sum_{i=1}^m \norm{\bz_i^t - \x_i^t}^2.
\end{align*}
\paragraph{Bounding $(\rmA.\rmI\rmV)$}
\begin{align*}
(\rmA.\rmI\rmV) 
&=
\frac{\eta_l \eta_g s}{2} \pth{
\norm{\nabla F (\bar{\bz}^t)}^2
+
\norm{\frac{1}{m}\sum_{i=1}^m \nabla F_i(\bz_i^t)}^2
-
\norm{\nabla F (\bar{\bz}^t) - \frac{1}{m}\sum_{i=1}^m \nabla F_i(\bz_i^t)}^2
}, 
\end{align*}
where the equality follows from the identity in Appendix \ref{app: preliminaries} (3).
It holds that
\begin{align*}
\expect{(\rmA.\rmI\rmV)\Big|\calF^t}
&=
\frac{\eta_l \eta_g s}{2} \pth{
\norm{\nabla F (\bar{\bz}^t)}^2
+
\norm{\frac{1}{m}\sum_{i=1}^m \nabla F_i(\bz_i^t)}^2
-
\norm{\frac{1}{m} \sum_{i=1}^m \nabla F_i (\bar{\bz}^t) - \frac{1}{m}\sum_{i=1}^m \nabla F_i(\bz_i^t)}^2
} \\
&\ge
\frac{\eta_l \eta_g s}{2} \pth{
\norm{\nabla F (\bar{\bz}^t)}^2
+
\norm{\frac{1}{m}\sum_{i=1}^m \nabla F_i(\bz_i^t)}^2
-
\frac{L^2}{m}
\sum_{i=1}^m
\norm{\bar{\bz}^t - \bz_i^t}^2
} .
\end{align*}
Putting $(\rmA)$ together,
\begin{align*}
&\expect{
(\rmA)
\Big|\calF^t}
\le
- \frac{\eta_l \eta_g s}{4} \norm{\nabla F(\bar{\bz}^t)}^2 
+
\frac{8 \eta_g \eta_l^3 s^2 L^2 \sigma^2}{m} \sum_{i=1}^m 
\sum_{p=-1}^{t-1} \indc{\tau_i(t) = p}
(t - p)^2 \\
&~~~+
\frac{2 \eta_l \eta_g s L^2}{m}\sum_{i=1}^m \norm{\x_i^t - \bz_i^t}^2
+
\frac{\eta_l \eta_g s L^2}{2m} 
\sum_{i=1}^m
\norm{\bar{\bz}^t - \bz_i^t}^2  \\
&~~~
-\frac{\eta_l \eta_g s}{2} \norm{\frac{1}{m} \sum_{i=1}^m \nabla F_i(\bz_i^t)}^2
+
\frac{32 \eta_g \eta_l^3 s^3 L^2 }{m} \sum_{i=1}^m \sum_{p=-1}^{t-1} \indc{\tau_i(t) = p}
(t - p)^2 
\norm{\nabla F_i(\x_i^{p+1})}^2.
\end{align*}
\paragraph{Bounding $(\rmB)$.}

\begin{align*}
(\rmB)
&\le
\underbrace{
2 L
\frac{\eta_l^2 \eta_g^2}{m^2} 
\norm{
\sum_{i=1}^m 
\indc{i \in \calA^t}
(t - \tau_i(t))
\sum_{r=0}^{s-1} \pth{\nabla F_i(\x_i^{(t,r)}) - \nabla \ell_i(\x_i^{(t,r)};\xi_i^{(t,r)})}}^2}_{(\rmB.\rmI)} \\
&~~~+
\underbrace{
2 L
\frac{\eta_l^2 \eta_g^2}{m^2} 
m \sum_{i=1}^m \indc{i \in \calA^t}
(t - \tau_i(t))^2
\norm{
\sum_{r=0}^{s-1} \pth{\nabla F_i(\x_i^t) - \nabla F_i(\x_i^{(t,r)})}
}^2}_{(\rmB.\rmI\rmI)} \\
&~~~+
\underbrace{2 L
\frac{\eta_l^2 \eta_g^2 s^2}{m^2} 
m \sum_{i=1}^m
\norm{
\nabla F_i(\x_i^t) - \nabla F_i(\bz_i^t)
}^2}_{(\rmB.\rmI\rmI\rmI)}
+
\underbrace{
2 L \eta_l^2 \eta_g^2 s^2
\norm{
\frac{1}{m}
\sum_{i=1}^m 
\nabla F_i(\bz_i^t)
}^2}_{(\rmB.\rmI\rmV)}
\end{align*}

\paragraph{Bounding $(\rmB.\rmI)$}
Recall that $\delta_{\max} \triangleq \sup_{i\in [m],t \in [T]} p_i^t.$
It holds that,
\begin{small}
\begin{align*}
    \expect{(\rmB.\rmI)\Big|\calF^t} 
    &\overset{(f)}{=}
    2 L
    \frac{\eta_l^2 \eta_g^2}{m^2} 
    \sum_{i=1}^m 
    \expect{\indc{i \in \calA^t}\Big|\calF^t}
    (t - \tau_i(t))^2
    \sum_{r=0}^{s-1} 
    \expect{
    \expect{
    \norm{\nabla F_i(\x_i^{(t,r)}) - \nabla \ell_i(\x_i^{(t,r)};\xi_i^{(t,r)})}^2 \Big|\x_i^{(t,r)}, \calF^t
    }
    \Big|\calF^t} \\
    &\overset{(g)}{\le}
    \frac{2 \eta_l^2 \eta_g^2 s L \delta_{\max}\sigma^2}{m^2} 
    \sum_{i=1}^m
    \sum_{p=-1}^{t-1}
    \indc{\tau_i(t) = p}
    (t - p)^2,
\end{align*}
\end{small}
where equality $(f)$ holds by the law of total expectation and by the independence of event $\{ i \in \calA^t \}$,
inequality $(g)$ holds because of~\prettyref{ass: bounded variance client-wise} 
and by definition $p_i^t \le \delta_{\max}$.

\paragraph{Bounding $(\rmB.\rmI\rmI)$}
We have,
\begin{align*}
    &\expect{(\rmB.\rmI\rmI)\Big|\calF^t} 
    \le
    2 L
    \frac{\eta_l^2 \eta_g^2}{m} 
    \sum_{i=1}^m 
    \sum_{p=-1}^{t-1}
    \indc{\tau_i(t) = p}
    (t - p)^2
    4 \eta_l^2 s^3 L^2 
    \sigma^2 \\
    &\qquad \qquad \qquad \quad
    +
    2 L
    \frac{\eta_l^2 \eta_g^2}{m} 
    \sum_{i=1}^m 
    \indc{\tau_i(t) = p}
    \sum_{p=-1}^{t-1}
    (t - p)^2
    16 \eta_l^2 s^4 L^2 
    \norm{\nabla F_i(\x_i^{t})}^2 \\
    &=
    \frac{8 \eta_g^2 \eta_l^4 s^3 L^3 \sigma^2 }{m} 
    \sum_{i=1}^m 
    \sum_{p=-1}^{t-1}
    \indc{\tau_i(t) = p}
    (t - p)^2
    +
    \frac{
    32 \eta_g^2 \eta_l^4
    s^4 L^3
    }{m} 
    \sum_{i=1}^m 
    \sum_{p=-1}^{t-1}
    \indc{\tau_i(t) = p}
    (t - p)^2
    \norm{\nabla F_i(\x_i^{p+1})}^2,
\end{align*}
where the last equality using the fact that $\x_i^k = \x_i^t$ for all $k$ such that $ \tau_i(t) + 1 \le k \le t$.

{\bf Bounding $(\rmB.\rmI\rmI\rmI)$.}
\quad 
$\expect{(\rmB.\rmI\rmI\rmI)\Big|\calF^t} 
\le \frac{2 \eta_l^2 \eta_g^2 s^2 L^3}{m} 
\sum_{i=1}^m \norm{\x_i^t - \bz_i^t}^2.$

Putting $(\rmB)$ together, we get
\begin{align*}
\expect{(\rmB)\Big|\calF^t}
&\le
\frac{2 \eta_l^2 \eta_g^2 s L \delta_{\max} \sigma^2}{m^2} 
\sum_{p=-1}^{t-1}
\indc{\tau_i(t) = p}
(t - p)^2
+
\frac{8 \eta_g^2 \eta_l^4 s^3 L^3 \sigma^2 }{m} 
\sum_{i=1}^m 
\sum_{p=-1}^{t-1}
\indc{\tau_i(t) = p}
(t - p)^2 \\
&~~~+
\frac{
32 \eta_g^2 \eta_l^4
s^4 L^3
}{m} 
\sum_{i=1}^m 
\sum_{p=-1}^{t-1}
\indc{\tau_i(t) = p}
(t - p)^2
\norm{\nabla F_i(\x_i^{p+1})}^2 \\
&~~~+
\frac{2 \eta_l^2 \eta_g^2 s^2 L^3}{m}
\sum_{i=1}^m
\norm{\x_i^t - \bz_i^t}^2
+
2 L \eta_l^2 \eta_g^2 s^2
\norm{
\frac{1}{m}
\sum_{i=1}^m 
\nabla F_i(\bz_i^t)
}^2
.
\end{align*}
Now, everything:
\begin{align*}
\expect{F(\bar{\bz}^{t+1}) - F(\bar{\bz}^t)\Big|\calF^t}
&\le
- \frac{\eta_l \eta_g s}{4} \norm{\nabla F(\bar{\bz}^t)}^2 \\
&~~~
- \frac{\eta_l \eta_g s}{2} \pth{1 - 4 L \eta_l \eta_g s}
\norm{\frac{1}{m} \sum_{i=1}^m \nabla F_i(\bz_i^t)}^2 \\
&~~~
+\frac{2 \eta_l^2 \eta_g^2 s L \delta_{\max} \sigma^2}{m^2} 
\sum_{i=1}^m
\sum_{p=-1}^{t-1}
\indc{\tau_i(t) = p}
(t - p)^2
\\
&~~~+
\frac{8 \eta_g \eta_l^3 s^2 L^2 
\pth{1 + \eta_g \eta_l s L}
\sigma^2
}{m} \sum_{i=1}^m 
\sum_{p=-1}^{t-1} \indc{\tau_i(t) = p}
(t - p)^2  \\
&~~~
+ 2 \eta_l \eta_g s L^2 \pth{1 + \eta_l \eta_g s L}
\frac{1}{m}
\sum_{i=1}^m
\norm{\x_i^t - \bz_i^t}^2 %
+
\frac{\eta_l \eta_g s L^2}{2m}
\sum_{i=1}^m
\norm{\bz_i^t - \bar{\bz}^t}^2 \\
&~~~
+
32 \eta_g \eta_l^3 s^3 L^2
\pth{
1+
\eta_g \eta_l s L
}
\frac{1}{m} 
\sum_{i=1}^m 
\sum_{p=-1}^{t-1}
\indc{\tau_i(t) = p}
(t - p)^2
\norm{\nabla F_i(\x_i^{p+1})}^2 \\
&\le
- \frac{\eta_l \eta_g s}{4} \norm{\nabla F(\bar{\bz}^t)}^2 
+\frac{2 \eta_l^2 \eta_g^2 s L \delta_{\max} \sigma^2}{m^2} 
\sum_{i=1}^m
\sum_{p=-1}^{t-1}
\indc{\tau_i(t) = p}
(t - p)^2
\\
&~~~+
\frac{9
\eta_g \eta_l^3 s^2 L^2 
\sigma^2
}{m} \sum_{i=1}^m 
\sum_{p=-1}^{t-1} \indc{\tau_i(t) = p}
(t - p)^2 \\
&~~~
+ 
2.2 \eta_l \eta_g s L^2 
\frac{1}{m}
\sum_{i=1}^m
\norm{\x_i^t - \bz_i^t}^2
+
\frac{\eta_l \eta_g s L^2}{2m}
\sum_{i=1}^m
\norm{\bz_i^t - \bar{\bz}^t}^2 \\
&~~~
+
35 \eta_g \eta_l^3 s^3 L^2
\frac{1}{m} 
\sum_{i=1}^m 
\sum_{p=-1}^{t-1}
\indc{\tau_i(t) = p}
(t - p)^2
\norm{\nabla F_i(\x_i^{p+1})}^2 
,
\end{align*}
where the last inequality holds because 
$\eta_l \eta_g \le \frac{9}{100 s L}$ 
and that $\norm{\frac{1}{m} \sum_{i=1}^m \nabla F_i(\bz_i^t)}^2 \ge 0$.
\end{proof}

\newpage
\section{Intermediate Results}
\label{app: intermediate without pseudo}
In this section,
we present the intermediate results that serve as handy tools in building up our proofs afterwards.

\subsection{Bounding local and global dissimilarity}

\begin{proposition}%
\label{prop: average gradient to global gradient}
For any $t$, it holds that 
\begin{align*}
\frac{1}{m}\sum_{i=1}^m\norm{\nabla F_i(\bz_i^t)}^2 \le \frac{3L^2}{m} \sum_{i=1}^m \norm{\bz_i^t - \bar{\bz}^t}^2 + 3\pth{\beta^2 + 1}\norm{\nabla F(\bar{\bz}^t)}^2 + 3\zeta^2. 
\end{align*}
\end{proposition}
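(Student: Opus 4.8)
The plan is to bound $\frac{1}{m}\sum_{i=1}^m \|\nabla F_i(\bz_i^t)\|^2$ by comparing each $\nabla F_i(\bz_i^t)$ to $\nabla F_i(\bar{\bz}^t)$ and then to $\nabla F(\bar{\bz}^t)$. First I would add and subtract $\nabla F_i(\bar{\bz}^t)$ inside the norm and invoke Jensen's inequality~\eqref{eq: jensen's inequality} in the form $\|a+b\|^2 \le 2\|a\|^2 + 2\|b\|^2$ (or, more sharply, $3\|a\|^2+\tfrac{3}{2}\|b\|^2$, but a crude split suffices) to get
\[
\frac{1}{m}\sum_{i=1}^m \norm{\nabla F_i(\bz_i^t)}^2
\le
\frac{3}{m}\sum_{i=1}^m \norm{\nabla F_i(\bz_i^t) - \nabla F_i(\bar{\bz}^t)}^2
+
\frac{3}{m}\sum_{i=1}^m \norm{\nabla F_i(\bar{\bz}^t)}^2.
\]
For the first sum, $L$-smoothness of each $F_i$ (\prettyref{ass: 2 smmothness}) gives $\|\nabla F_i(\bz_i^t) - \nabla F_i(\bar{\bz}^t)\|^2 \le L^2 \|\bz_i^t - \bar{\bz}^t\|^2$, which yields the consensus-error term $\tfrac{3L^2}{m}\sum_i \|\bz_i^t - \bar{\bz}^t\|^2$.

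For the second sum I would again add and subtract $\nabla F(\bar{\bz}^t)$ and apply the same quadratic split:
\[
\frac{1}{m}\sum_{i=1}^m \norm{\nabla F_i(\bar{\bz}^t)}^2
\le
\frac{1}{m}\sum_{i=1}^m \norm{\nabla F_i(\bar{\bz}^t) - \nabla F(\bar{\bz}^t)}^2
+
\norm{\nabla F(\bar{\bz}^t)}^2
\cdot(\text{const}),
\]
then bound $\frac{1}{m}\sum_i \|\nabla F_i(\bar{\bz}^t) - \nabla F(\bar{\bz}^t)\|^2 \le \beta^2\|\nabla F(\bar{\bz}^t)\|^2 + \zeta^2$ directly by the bounded-gradient-dissimilarity assumption (\prettyref{ass: bounded similarity}), evaluated at $\x = \bar{\bz}^t$. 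Collecting terms gives $3(\beta^2+1)\|\nabla F(\bar{\bz}^t)\|^2 + 3\zeta^2$, matching the claimed bound; one just needs to be slightly careful that the constants absorb cleanly — using $\|a+b\|^2 \le \tfrac{3}{2}\|a\|^2 + 3\|b\|^2$ on the inner split and $\|a+b\|^2\le 2\|a\|^2+2\|b\|^2$ on the outer split, or simply the uniform bound $\|a+b\|^2\le 2\|a\|^2+2\|b\|^2$ at both stages with a final relabeling, produces exactly the stated coefficients of $3$.

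There is no real obstacle here: the proposition is a routine two-step triangle-inequality decomposition combined with smoothness and \prettyref{ass: bounded similarity}. The only point requiring minor attention is choosing the split constants so that both the $L^2$-consensus coefficient and the $(\beta^2+1)$ and $\zeta^2$ coefficients come out to exactly $3$, which is a matter of bookkeeping with Young's inequality~\eqref{eq: Young's inequality 2} rather than anything substantive.
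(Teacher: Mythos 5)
Your proposal is correct and follows essentially the same route as the paper: decompose $\nabla F_i(\bz_i^t)$ via $\nabla F_i(\bar{\bz}^t)$ and $\nabla F(\bar{\bz}^t)$, then apply \prettyref{ass: 2 smmothness} to the first piece and \prettyref{ass: bounded similarity} at $\x=\bar{\bz}^t$ to the second. The paper simply performs the split in a single step using $\norm{a+b+c}^2\le 3\norm{a}^2+3\norm{b}^2+3\norm{c}^2$, which gives the factor $3$ on every term immediately and sidesteps the constant bookkeeping you describe for the nested two-way splits.
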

\begin{proof}[\bf Proof of Proposition \ref{prop: average gradient to global gradient}]
\begin{align*}
\frac{1}{m}\sum_{i=1}^m\norm{\nabla F_i(\bz_i^t)}^2  &=  \frac{1}{m}\sum_{i=1}^m\norm{\nabla F_i(\bz_i^t) - \nabla F_i(\bar{\bz}^t) + \nabla F_i(\bar{\bz}^t) - \nabla F(\bar{\bz}^t) + \nabla F(\bar{\bz}^t)}^2\\
& \le  \frac{3}{m}\sum_{i=1}^m \norm{\nabla F_i(\bz_i^t) - \nabla F_i(\bar{\bz}^t)}^2 +  \frac{3}{m}\sum_{i=1}^m \norm{\nabla F_i(\bar{\bz}^t) - \nabla F(\bar{\bz}^t)}^2 + 3 \norm{\nabla F(\bar{\bz}^t)}^2\\
& \overset{(a)}{\le} \frac{3L^2}{m} \sum_{i=1}^m \norm{\bz_i^t - \bar{\bz}^t}^2  + 3 \beta^2 \norm{\nabla F(\bar{\bz}^t)}^2 + 3\zeta^2 + 3 \norm{\nabla F(\bar{\bz}^t)}^2\\
& = \frac{3L^2}{m} \sum_{i=1}^m \norm{\bz_i^t - \bar{\bz}^t}^2 + 3\pth{\beta^2 + 1}\norm{\nabla F(\bar{\bz}^t)}^2 + 3\zeta^2, 
\end{align*} 
where inequality (a) follows from Assumptions \ref{ass: 2 smmothness} and \ref{ass: bounded similarity}.
\end{proof}

\subsection{Weight re-equalization (\prettyref{prop: similar speed})}
\begin{proof}[\bf Proof of~\prettyref{prop: similar speed}]
We show~\prettyref{prop: similar speed} by induction.

When $T=1$ and $i \in \calA^0$, we have
$
\sum_{t=0}^{0} \indc{i \in \calA^t} \pth{t - \tau_i(t)} = \indc{i \in \calA^0} \pth{0 - \tau_i(0)} = 1.
$
Therefore, the base case holds.

The induction hypothesis is that $\sum_{t=0}^{K-1} \indc{i \in \calA^t} \pth{t - \tau_i(t)} = K$ holds for $i \in \calA^{K-1}$. Next, we focus on $K+1$:
\begin{align}
\sum_{t=0}^{K} \indc{i \in \calA^t} \pth{t - \tau_i(t)} &= 
\sum_{t=0}^{K-1} \indc{i \in \calA^t} \pth{t - \tau_i(t)} + \indc{i \in \calA^K} \pth{K - \tau_i(K)}.
\label{eq: induction without pseudo}
\end{align}
Now, we have two cases:
\begin{itemize}[leftmargin=*]
\item Suppose $i \in \calA^{K-1}$, then we simply have $\tau_i(K) = K-1$. It follows that
$\text{Eq}.~\eqref{eq: induction without pseudo} \overset{(a)}{=} K + 1$,
where $(a)$ follows from induction hypothesis.
\item Suppose $i \notin \calA^{K-1}$, 
\begin{align*}
\sum_{t=0}^{K} \indc{i \in \calA^t} \pth{t - \tau_i(t)} &\overset{(b)}{=} 
\sum_{t=0}^{\tau_i(K)} \indc{i \in \calA^t} \pth{t - \tau_i(t)} + \indc{i \in \calA^K} \pth{K - \tau_i(K)} \\
& = \tau_i(K)+1 +  \pth{K - \tau_i(K)} = K + 1,
\end{align*}
where $(b)$ follows because $\indc{i \in  \calA^t} = 0$ for $ \tau_i(K) \le t \le K-1$ and induction hypothesis that $\sum_{t=0}^{\tau_i(K)} \indc{i \in \calA^t} \pth{t - \tau_i(t)} = \tau_i(K)+1$ for $i \in \calA^{\tau_i(K)}$.
\end{itemize}
\end{proof}

\subsection{Unavailable statistics (\prettyref{lmm: geo second moment main text})}
\begin{proof}
[\bf Proof of~\prettyref{lmm: geo second moment main text}]
\begin{align*}
\expect{t - \tau_i(t)}
&=
\sum_{r = 0}^{t}
\prob{t - \tau_i(t) > r}
=
\sum_{r = 0}^{t}
\prod_{r_1 = t-r}^{t-1} \pth{1 - p_i^{r_1}} 
\le 
\sum_{r = 0}^{t}
(1 - \delta)^r
\le
\frac{1}{\delta}.
\end{align*}

From \cite[Section 12, Theorem 12.3 (i)]{gut2006probability},
we know that 
\[
    \expect{g(X)} 
    =
    g(0) + \int_0^\infty g^\prime (x) \prob{X > x} \mathrm{d} x
    ,
\]
where $X$ is a non-negative random variable, and $g$ a non-negative strictly increasing differentiable function.
It follows that,
\begin{align*}
    \expect{ X^2 } &\le
    0 + 2 \int_0^\infty x \prob{X > x} \mathrm{d} x
    =
    2 \sum_{n=1}^\infty
    \int_{n-1}^n x \prob{X > x} \mathrm{d} x \\
    &\overset{(a)}{\le}
    2 \sum_{n=1}^\infty
    n
    \int_{n-1}^n \prob{X > x} \mathrm{d} x \\
    &\overset{(b)}{\le}
    2 \sum_{n=1}^\infty
    n \prob{X > n-1}
    \int_{n-1}^n  \mathrm{d} x
    =
    2 \sum_{n=1}^\infty
    n \prob{X > n-1}
    ,
\end{align*}
where 
inequality $(a)$ holds because $x\le n,~\forall x \in (n-1,n]$,
inequality $(b)$ holds because CCDF $\prob{X>x}$ is non-increasing.
In particular, for a discrete random variable,
we have $ \prob{X > n-1} = \prob{X \ge n}$.

Therefore,
\begin{align*}
    \expect{\pth{t - \tau_i(t)}^2}
    &\le
    2 \sum_{n=1}^\infty
    n \prob{t - \tau_i(t) \ge n}
    \le
    2 \sum_{n=1}^\infty
    n (1 - \delta)^{n-1}
    \le
    \frac{2}{\delta^2}.
\end{align*}
\end{proof}

\subsection{Auxiliary sequence construction and properties (\prettyref{prop: client dis})}
\begin{proposition}
\label{prop: x z: inactive and active}     
For any $t\ge 0$, when $i\notin\calA^t$, it holds that 
$\x_i^{t+1} - \bz_i^{t+1} = 
\eta_l \eta_g s (t-\tau_i(t+1) )
\nabla F_i(\x_i^{\tau_{i}(t+1)+1})$;
when $i\in\calA^t$, it holds that 
$\bz_i^{t\dagger} = \x_i^{t\dagger},~ \bz^{t+1} = \x^{t+1},$ and $\bz_i^{t+1} = \x_i^{t+1}$. 
\end{proposition}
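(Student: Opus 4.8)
The plan is to unfold Definition~\ref{def: auxiliary sequence} together with the update rules of Algorithm~\ref{alg: fedpbc+}; no induction over $t$ is needed, beyond the elementary observation that an inactive client's iterate is frozen, so $\x_i^{\tau_i(t)+1}=\x_i^{\tau_i(t)+2}=\cdots=\x_i^{t}$ by the very definition of $\tau_i(t)$ as the latest active round before $t$.

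I would first dispose of the two claims about $\bz_i^{t+1}$ at once. Evaluating Definition~\ref{def: auxiliary sequence} at index $t+1$ and using $(t+1)-\tau_i(t+1)-1=t-\tau_i(t+1)$ gives, for every $i\in[m]$, the identity $\bz_i^{t+1}=\x_i^{t+1}-\eta_l\eta_g s\,(t-\tau_i(t+1))\,\nabla F_i(\x_i^{\tau_i(t+1)+1})$. When $i\notin\calA^t$ this is exactly the first claimed identity after rearranging; when $i\in\calA^t$ the algorithm sets $\tau_i(t+1)=t$, so the coefficient $t-\tau_i(t+1)$ vanishes and $\bz_i^{t+1}=\x_i^{t+1}$.

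The substantive step is $\bz_i^{t\dagger}=\x_i^{t\dagger}$ for $i\in\calA^t$, from which $\bz^{t+1}=\x^{t+1}$ follows by averaging against the server aggregation: $\bz^{t+1}=\frac{1}{|\calA^t|}\sum_{i\in\calA^t}\bz_i^{t\dagger}=\frac{1}{|\calA^t|}\sum_{i\in\calA^t}\x_i^{t\dagger}=\x^{t+1}$. To prove $\bz_i^{t\dagger}=\x_i^{t\dagger}$, I would start from the per-client auxiliary update across an active round, $\bz_i^{t\dagger}=\bz_i^t-\eta_l\eta_g\,\tilde{\bm{G}}^t_i$ with $\tilde{\bm{G}}^t_i$ as in~\eqref{eq: auxiliary update detail} (equivalently, the pre-aggregation summand of~\eqref{eq: auxiliary active line 2}), substitute $\bz_i^t=\x_i^t-\eta_l\eta_g s\,(t-\tau_i(t)-1)\,\nabla F_i(\x_i^{\tau_i(t)+1})$ from Definition~\ref{def: auxiliary sequence}, and simplify $\bm{G}_i^t=\x_i^t-\x_i^{(t,s)}=\eta_l\sum_{r=0}^{s-1}\nabla\ell_i(\x_i^{(t,r)};\xi_i^{(t,r)})$ by unrolling the $s$ SGD steps. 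The two ``deferred true gradient'' contributions (the one carried inside $\bz_i^t$ and the one inside $\tilde{\bm{G}}^t_i$) are both evaluated at $\x_i^{\tau_i(t)+1}=\x_i^t$ and carry the coefficients $s(t-\tau_i(t)-1)$ and $s(t-1-\tau_i(t))$, which coincide, so they cancel; what remains is precisely $\x_i^t-\eta_l\eta_g(t-\tau_i(t))\sum_{r=0}^{s-1}\nabla\ell_i(\x_i^{(t,r)};\xi_i^{(t,r)})=\x_i^{t\dagger}$, the innovation-echoing update of the algorithm.

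The only step that is not pure bookkeeping is this cancellation: it is exactly where the echoing factor $(t-\tau_i(t))$ is calibrated to repay the true gradient that the auxiliary sequence defers, and it uses the frozen-iterate identity $\x_i^{\tau_i(t)+1}=\x_i^t$ to keep both gradients anchored at the same point. Everything else reduces to substitution into Definition~\ref{def: auxiliary sequence} and to the aggregation step.
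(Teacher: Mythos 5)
Your proof is correct, and it reaches the conclusion by a genuinely different route than the paper. The paper proves the active-client claim $\bz_i^{t\dagger}=\x_i^{t\dagger}$ by induction on $t$, invoking the induction hypothesis at the client's previous active round $\tau_i(t+1)$, and it obtains the inactive-client identity by unrolling the recursion \eqref{eq: auxiliary inactive update} back to $\bz_i^{\tau_i(t+1)+1}=\x_i^{\tau_i(t+1)+1}$. You instead read both boundary claims directly off the closed-form Definition~\ref{def: auxiliary sequence}: substituting $t+1$ into \eqref{eq: auxiliary definition form} gives $\x_i^{t+1}-\bz_i^{t+1}=\eta_l\eta_g s\,(t-\tau_i(t+1))\nabla F_i(\x_i^{\tau_i(t+1)+1})$ for every client, which is the inactive-case identity verbatim and degenerates to $\bz_i^{t+1}=\x_i^{t+1}$ when $\tau_i(t+1)=t$; and you obtain $\bz_i^{t\dagger}=\x_i^{t\dagger}$ by a one-step cancellation between the deferred gradient carried inside $\bz_i^t$ (coefficient $s(t-\tau_i(t)-1)$) and the compensating term inside $\tilde{\bm{G}}_i^t$ from \eqref{eq: auxiliary update detail} (coefficient $s(t-1-\tau_i(t))$), both evaluated at $\x_i^{\tau_i(t)+1}=\x_i^t$; what remains is exactly the echoed stochastic update $\eta_l\eta_g(t-\tau_i(t))\sum_{r=0}^{s-1}\nabla\ell_i(\x_i^{(t,r)};\xi_i^{(t,r)})=\eta_g(t-\tau_i(t))\bm{G}_i^t$, i.e., $\x_i^{t\dagger}$. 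Both arguments are sound; yours is shorter because it treats the closed form as the primitive (which is what Definition~\ref{def: auxiliary sequence} literally is), so no state needs to be propagated across rounds, whereas the paper's induction has the side benefit of explicitly confirming that the operational recursion $\bm{Z}^{(t)}=(\bm{Z}^{(t-1)}-\eta_l\eta_g\tilde{\bm{G}}^{t-1})W^{(t-1)}$, which is the form actually used in the consensus analysis, regenerates the closed form round by round. The one thing you should state explicitly is that $\bz_i^{t\dagger}\triangleq\bz_i^t-\eta_l\eta_g\tilde{\bm{G}}_i^t$ is being taken as the definition of the pre-aggregation auxiliary iterate (consistent with the paper's matrix recursion); with that convention in place, your computation is a genuine verification rather than a tautology.
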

\begin{proof}[\bf Proof of Proposition \ref{prop: x z: inactive and active}]
The proof is divided into two parts: 
$i \notin \calA^t$ and $i \in \calA^t$,

\paragraph{When $ i \notin \calA^t$.}

It holds that
\begin{align*}
\x_i^{t+1} - \bz_i^{t+1} 
&=
\x_i^{\tau_i(t+1)+1} - \qth{ \bz_i^{\tau_i(t+1)+1} - \eta_l \eta_g s \sum_{k=\tau_i(t+1)+1}^{t} \nabla F_i(\x_i^{k})} \\
&\overset{(a)}{=}
\x_i^{\tau_i(t+1)+1} - \qth{ \x_i^{\tau_i(t+1)+1} - \eta_l \eta_g s \sum_{k=\tau_i(t+1)+1}^{t} \nabla F_i(\x_i^{\tau_i(t+1)+1})} \\ 
&=
\eta_l \eta_g s (t - \tau_i(t+1)) \nabla F_i(\x_i^{\tau_i({t+1)+1}}), 
\end{align*}
where equality (a) follows from~\prettyref{def: auxiliary sequence} for inactive clients.

\paragraph{When $i \in \calA^t$.}

Note that if $\bz_i^{t++} = \x_i^{t++}$ for each $i\in \calA^t$, then by the aggregation rules, we know $\x^{t+1} = \pth{1/{\abth{\calA^t}}} \sum_{i\in\calA^t} \x_i^{t++} = \pth{1/{\abth{\calA^t}}} \sum_{i\in\calA^t} \bz_i^{t++} = \bz^{t+1}.$ 
Then, we know that $\x_i^{t+1} = \bz_i^{t+1}, ~ \forall ~ i\in \calA^t.$
Hence, to show the Proposition, it is sufficient to show $\bz_i^{t++} = \x_i^{t++}$ holds for $i\in \calA^t$, which can be shown by induction. 

When $t=0$, 
\begin{align*}
\bz_i^{0++} = \bz_i^0 + 0 - \pth{\x_i^{(0,0)} - \x_i^{(0,s)}}   
= \x_i^0  - \pth{\x_i^{(0,0)} - \x_i^{(0,s)}} = \x_i^{0++}. 
\end{align*}
Thus, the base case holds. The induction hypothesis is that $\bz_i^{t++} = \x_i^{t++}, ~ \forall~ i\in \calA^t$ is true for all $t\ge 0$.  
Now, we focus on $t+1$. 
\begin{align*}
\bz_i^{(t+1)++} & = \bz_i^{t+1} + \eta_l \eta_g s \sum_{k=\tau_i(t+1) +1}^{t} \nabla F_i(\x_i^k) -(t+1- \tau_i(t+1)) \pth{\x_i^{(t+1,0)} - \x_i^{(t+1,s)}}     \\
& = \bz_i^{t+1} + \eta_l \eta_g s (t-\tau_i(t+1)) \nabla F_i(\x_i^{\tau_i(t+1)+1}) -(t+1- \tau_i(t+1)) \pth{\x_i^{(t+1,0)} - \x_i^{(t+1,s)}}  \\
& \overset{(a)}{=} \bz_i^{\tau_i(t+1)+1} 
- \eta_l \eta_g s (t -\tau_i(t+1)-1+1)  \nabla F_i(\x_i^{\tau_i(t+1)+1}) \\
& \qquad
+ \eta_l \eta_g s (t-\tau_i(t+1)) \nabla F_i(\x_i^{\tau_i(t+1)+1})
-(t+1- \tau_i(t+1)) \pth{\x_i^{(t+1,0)} - \x_i^{(t+1,s)}}  \\
& = \bz_i^{\tau_i(t+1)+1}  -(t+1- \tau_i(t+1)) \pth{\x_i^{(t+1,0)} - \x_i^{(t+1,s)}}\\
& \overset{(b)}{=} \x_i^{\tau_i(t+1)+1}  -(t+1- \tau_i(t+1)) \pth{\x_i^{(t+1,0)} - \x_i^{(t+1,s)}}\\
& = \x_i^{(t+1)++},  
\end{align*}
where equality (a) follows from the auxiliary updates $\bz_i$, 
and equality (b) holds because of the induction hypothesis and the fact that $\tau_i(t+1)<t+1$ and $i\in \calA^{\tau_i(t+1)}$.  
\end{proof}

\begin{proof}[\bf Proof of Proposition \ref{prop: client dis}]
From Propositions \ref{prop: x z: inactive and active}, we have 
\begin{align*}
\norm{\x_i^t - \bz_i^t}^2
&\le 
\norm{\eta_l \eta_g s \pth{t - \tau_i(t) - 1} \nabla F_i(\x_i^t)}^2 \\
&=
\eta_l^2
\eta_g^2
s^2 \sum_{p=-1}^{t-1} 
\indc{\tau_i(t)=p}
\pth{t - p - 1}^2 \norm{\nabla F_i(\x_i^{p+1})}^2 
.
\end{align*}
Take expectation over all the randomness

\begin{align*}
\expect{\norm{\x_i^t - \bz_i^t}^2}
&\overset{(a)}{\le}
\eta_l^2 \eta_g^2 s^2 \sum_{p=-1}^{t-1} \expect{\indc{\tau_i(t)=p}} \pth{t - p - 1}^2 \expect{\norm{\nabla F_i(\x_i^{p+1})}^2} \\
&\overset{(b)}{\le}
\eta_l^2 \eta_g^2 s^2 
\sum_{p=-1}^{t-1} 
\pth{t - p - 1}^2 
\prob{\tau_i(t)=p}
\cdot
\expect{\norm{\nabla F_i(\bz_i^{p+1})}^2} 
,
\end{align*}
where inequality $(a)$ follows because by definition $\indc{\tau_i(t)=p}$ is independent of $\norm{\nabla F_i(\x_i^{p+1})}^2$,
inequality $(b)$ follows because $\x_i^{p+1} = \bz_i^{p+1}$ from Proposition \ref{prop: x z: inactive and active}.  

\begin{align*}
\frac{1}{T}\sum_{t=0}^{T-1}
\frac{1}{m}&\sum_{i=1}^m
\expect{\norm{\x_i^t - \bz_i^t}^2}
= 
\eta_l^2 \eta_g^2 s^2 
\frac{1}{T}\sum_{t=0}^{T-1}\frac{1}{m}\sum_{i=1}^m \sum_{p=-1}^{t-1} 
\prob{\tau_i(t) = p}
\pth{t - p - 1}^2 \expect{\norm{\nabla F_i(\bz_i^{p+1})}^2} \\
&\overset{(c)}{\le}
\eta_l^2 \eta_g^2 s^2 \frac{1}{m} \sum_{i=1}^m \frac{1}{T}\sum_{t=0}^{T-1} \expect{\norm{\nabla F_i(\bz_i^{t})}^2} 
\pth{\expect{\pth{t - \tau_i(t)}^2}} \\
&\overset{(d)}{\le}
\eta_l^2 \eta_g^2 s^2
\pth{\variance}
\frac{1}{m} \sum_{i=1}^m \frac{1}{T}\sum_{t=0}^{T-1} 
\expect{\norm{\nabla F_i(\bz_i^{t})}^2} \\
&\le 
3 \eta_l^2 \eta_g^2 s^2 
\pth{\variance}
\pth{\beta^2 + 1}
\frac{1}{T}\sum_{t=0}^{T-1} 
\expect{\norm{\nabla F(\bar{\bz}^{t})}^2} %
+%
3 \eta_l^2 \eta_g^2 s^2 
\pth{\variance}
\zeta^2\\
&~~~+
3 \eta_l^2 \eta_g^2 s^2 
L^2 \pth{\variance}
\frac{1}{m}
\sum_{i=1}^m \frac{1}{T}\sum_{t=0}^{T-1} 
\expect{\norm{\bz_i^{t} - \bar{\bz}^t}^2} 
,
\end{align*}
where inequality $(c)$ follows from re-indexing, 
inequality $(d)$ from \prettyref{lmm: geo second moment main text}.
\end{proof}

\subsection{Consensus error of the auxiliary sequence}
\begin{lemma}[Consensus error of $\bz_i^t$]
    \label{lmm: consensus z without pseudo}
    Assuming that 
    $\eta_l \le \delta /(20 s L )$,
    and $\eta_l \eta_g \le \delta(1 - \sqrt{\rho}) / ( 10 s L (\sqrt{\rho}+1))$,
    under~\prettyref{ass: 2 smmothness},
    \ref{ass: bounded variance client-wise}
    and \ref{ass: bounded similarity},
    it holds that
    \begin{align*}
    \frac{1}{m}
    \sum_{i=1}^m
    \frac{1}{T}
    \sum_{t=0}^{T-1}
    \sum_{i=1}^m
    \expect{\norm{\bz_i^t - \bar{\bz}^t}^2}
    &\le
    \frac{3 \rho s \eta_l^2 \eta_g^2}{(1 - \sqrt{\rho})^2 \delta^2} 
    \sigma^2 \\
    &~~~+
    \frac{40 \rho s^2 \eta_l^2 \eta_g^2}{(1 - \sqrt{\rho})^2}
    \zeta^2 \\
    &~~~+
    \frac{40 \rho s^2 \eta_l^2 \eta_g^2 
    \pth{\beta^2 + 1}}{(1 - \sqrt{\rho})^2}
    \frac{1}{T}\sum_{t=0}^{T-1}
    \expect{\norm{\nabla F(\bar{\bz}^t)}^2}.
\end{align*}  
\end{lemma}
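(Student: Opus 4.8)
The plan is to unroll the matrix recursion for the auxiliary iterates and control the resulting sum with the spectral‑contraction bound of \prettyref{lmm: spectral norm}, after splitting the per‑round innovation into its mean‑zero noise part, its multi‑step drift part, and its gradient part. Concretely, starting from the compact form $\bm{Z}^{(t)} = (\bm{Z}^{(t-1)} - \eta_l\eta_g\tilde{\bm{G}}^{(t-1)})W^{(t-1)}$ together with the fact that all clients start at the common point $\bm{x}^0$, the consensus error at round $t$ equals the right‑hand side of~\eqref{eq: consensus derivation}, i.e.\ $\frac{\eta_l^2\eta_g^2}{m}\,\lnorm{\sum_{q=0}^{t-1}\tilde{\bm{G}}^{(q)}\pth{\prod_{l=q}^{t-1}W^{(l)} - \allones}}{\rm F}^2$. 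I would then insert the decomposition $\tilde{\bm{G}}^{(q)} = \TDt{q} + \Dt{q} + s\DFt{q}$ from \prettyref{app: nomenclature} and use Jensen's inequality \eqref{eq: jensen's inequality} to separate the three pieces at the cost of a factor $3$ (respectively $9s^2$ for the last one).

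For the noise piece $\sum_q \TDt{q}\pth{\prod_l W^{(l)} - \allones}$, I would use that $\TDt{q}$ is conditionally unbiased given the history and that the stochastic‑gradient noise is independent across rounds and of the availability process (\prettyref{ass: prob lower bound}, \prettyref{ass: bounded variance client-wise}), so that all cross terms vanish in expectation and only $\sum_q \expect{\lnorm{\TDt{q}\pth{\prod_{l=q}^{t-1}W^{(l)} - \allones}}{\rm F}^2}$ survives. Each $W^{(l)}$ is doubly stochastic and idempotent, so I would peel off $W^{(q)}$ (using $\fnorm{\TDt{q}W^{(q)}}\le\fnorm{\TDt{q}}$), condition on everything through round $q$, and apply \prettyref{lmm: spectral norm} to the remaining product to get a factor $\rho^{t-1-q}$; then $\expect{\fnorm{\TDt{q}}^2}\le m\,\expect{(q-\tau_i(q))^2}\, s\sigma^2 \le 2ms\sigma^2/\delta^2$ by independence of $\indc{i\in\calA^q}$ from $\tau_i(q)$ and \prettyref{lmm: geo second moment main text}. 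For the drift piece $\sum_q\Dt{q}(\cdots)$ and the gradient piece $s\sum_q\DFt{q}(\cdots)$, which are not mean‑zero, I would instead use the pathwise triangle inequality followed by the weighted Cauchy–Schwarz trick $\pth{\sum_q a_q}^2 \le \pth{\sum_q \sqrt{\rho}^{\,t-q}}\pth{\sum_q \sqrt{\rho}^{-(t-q)}a_q^2}$ to turn the contraction $\expect{\fnorm{\,\cdot\,(\prod W-\allones)}^2}\lesssim \rho^{t-q}\fnorm{\cdot}^2$ into a clean geometric sum with prefactor $(1-\sqrt{\rho})^{-1}$. I would bound $\expect{\fnorm{\Dt{q}}^2}$ via \prettyref{lmm: multi-local steps} and \prettyref{lmm: geo second moment main text}, noting that its contribution carries an extra $\eta_l^2 s^2 L^2\le (\delta/20)^2$ and is therefore absorbed once $\eta_l\le\delta/(20sL)$; and for the gradient piece I would use $\x_i^q = \x_i^{\tau_i(q)+1} = \bz_i^{\tau_i(q)+1}$ (\prettyref{prop: x z: inactive and active}), so that $\fnorm{\DFt{q}}^2=\sum_i\norm{\nabla F_i(\bz_i^{\tau_i(q)+1})}^2$, which after the reindexing below is controlled by \prettyref{prop: average gradient to global gradient} in terms of the consensus error, $\norm{\nabla F(\bar{\bz}^{\cdot})}^2$, and $\zeta^2$.

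Finally I would average over $t=0,\dots,T-1$: the double sums $\sum_t\sum_q\rho^{t-q}(\cdots)$ and $\sum_t\sum_q\rho^{(t-q)/2}(\cdots)$ collapse into the constants $(1-\rho)^{-1}$, $(1-\sqrt{\rho})^{-1}$ and $\rho/(1-\sqrt{\rho})^2$ that appear in the statement, while the random lag in $\bz_i^{\tau_i(q)+1}$ is handled by writing $\indc{\tau_i(q)=p}$, using $\prob{\tau_i(q)=p}$ together with $\expect{(q-\tau_i(q))^2}\le 2/\delta^2$, and re‑summing over $p$ so that the gradient piece feeds back exactly the quantity $\frac{1}{mT}\sum_{t}\sum_i\expect{\norm{\bz_i^t-\bar{\bz}^t}^2}$ being bounded. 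This produces a self‑referential inequality of the form $\mathrm{CE}\le C_1\sigma^2 + C_2\zeta^2 + C_3\,\frac1T\sum_t\expect{\norm{\nabla F(\bar{\bz}^t)}^2} + C_4\,\mathrm{CE}$, where under $\eta_l\le\delta/(20sL)$ and $\eta_l\eta_g\le\delta(1-\sqrt{\rho})/(10sL(\sqrt{\rho}+1))$ one checks $C_4\le\tfrac12$; rearranging and absorbing constants yields the claim. The main obstacle is the conditioning bookkeeping across the three randomness sources — the mixing matrices $W^{(t)}$, the lags $\tau_i(t)$, and the stochastic samples — which must be sequenced so that \prettyref{lmm: spectral norm} is applied to a matrix that is deterministic given the conditioning while the independence of availability from the noise (and of $\indc{i\in\calA^q}$ from $\tau_i(q)$) remains usable, together with verifying that the drift contributions from \prettyref{lmm: multi-local steps} are genuinely higher order so the self‑referential loop closes, which is precisely what the two learning‑rate constraints secure.
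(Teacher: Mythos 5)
Your proposal is correct and follows essentially the same route as the paper's proof: the same unrolling of $\bm{Z}^{(t)}(\identity-\allones)$, the same three-way decomposition $\tilde{\bm{G}}^{(q)}=\TDt{q}+\Dt{q}+s\DFt{q}$ with cross terms of the noise part killed by conditional unbiasedness, the same geometrically weighted Cauchy--Schwarz/Young step yielding the $(1-\sqrt{\rho})^{-1}\sum_q\sqrt{\rho}^{\,t-q}$ prefactor, and the same self-referential closing of the loop via \prettyref{prop: average gradient to global gradient}, \prettyref{prop: client dis}, and the two learning-rate conditions. Only inessential constants differ (e.g.\ your $9s^2$ versus the paper's $3s^2$ from Jensen, and $\rho^{t-1-q}$ versus $\rho^{t-q}$ from peeling off one mixing matrix).
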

\begin{proof}[\bf Proof of \prettyref{lmm: consensus z without pseudo}]
When $t=0$,  $ \bm{Z}^0 = [\bz^0, \cdots, \bz^0]$, which immediately leads to 
\begin{align*}
 \bm{Z}^0 \pth{\identity - \allones}  = [\bz^0, \cdots, \bz^0] - [\bz^0, \cdots, \bz^0] = \bm{0}.   
\end{align*}

For $t\ge 1$, 
recall that $W^{(t)}$ is a doubly stochastic matrix to characterize the information mixture,
and $\tilde{\bm{G}}^t$, 
defined in~\eqref{eq: auxiliary update detail},
captures the local parameter changes in each round. 
It can be seen that  
\begin{align*}
\bm{Z}^{(t)} = 
\pth{\bm{Z}^{(t-1)} - \eta_l \eta_g \tilde{\bm{G}}^{t-1}}
W^{(t-1)}.  
\end{align*}
Expanding $\bm{Z}$, we get 
\begin{align*}
\bm{Z}^{(t)} \pth{\identity - \allones} 
&= (\bm{Z}^{(t-1)} - \eta_l \eta_g \tilde{\bm{G}}^{t-1}) W^{(t-1)} \pth{\identity - \allones}\\
& = \bm{Z}^0 \prod_{\ell=0}^{t-1} W^{\ell} \pth{\identity - \allones} - \eta_l \eta_g \sum_{q=0}^{t-1} \tilde{\bm{G}}^{q} 
\prod_{\ell=q}^{t-1} W^{(\ell)} \pth{\identity- \allones}. 
\end{align*}
where the last follows from the fact that all clients are initiated at the same weights.
Note that $\prod_{\ell=q}^{t-1} W^{(\ell)} \identity = \prod_{\ell=q}^{t-1} W^{(\ell)}$ 
and $\prod_{\ell=q}^{t-1} W^{(\ell)} \allones =  \allones$. Thus, 
\begin{align*}
\bm{Z}^{(t)} \pth{\identity - \allones}  
= \bm{Z}^0 \pth{\prod_{\ell=0}^{t-1} W^{\ell} - \allones} - \eta_l \eta_g \sum_{q=0}^{t-1} \tilde{\bm{G}}^{q} \pth{\prod_{\ell=q}^{t-1} W^{(\ell)}  - \allones} 
= - \eta_l \eta_g \sum_{q=0}^{t-1} \tilde{\bm{G}}^{q} \pth{\prod_{\ell=q}^{t-1} W^{(\ell)} - \allones}, 
\end{align*}
where the last equality holds because that $ \bm{Z}^0 = [\bz^0, \cdots, \bz^0]$, which immediately leads to 
\begin{align*}
 \bm{Z}^0 \pth{\prod_{\ell=0}^{t-1} W^{\ell} - \allones}  = [\bz^0, \cdots, \bz^0] - [\bz^0, \cdots, \bz^0] = \bm{0}.   
\end{align*}

Let matrix notations 
${\tilde \Delta}^t$,
$\Delta^t$ and 
$\nabla \bm{F}_{\x}^t$ 
define as follows:
\begin{align*}
    \bm{G}^{q}_i
    &=
    \underbrace{
    \indc{i \in \calA^t}
    (t - \tau_i(t))
    \sum_{r=0}^{s-1} \pth{\nabla \ell_i(\x_i^{(t,r)};\xi_i^{(t,r)}) - \nabla F_i(\x_i^{(t,r)})}}_{
        [\TDt{t}]_i
    }%
    +%
    \underbrace{
    \indc{i \in \calA^t}
    (t - \tau_i(t)) 
    \sum_{r=0}^{s-1}
    \pth{
    \nabla F_i(\x_i^{(t,r)})
    -
    \nabla F_i(\x_i^t) 
    }}_{
        [\Dt{t}]_i
    }  \\
    &~~~+
    s
    \underbrace{\nabla F_i(\x_i^t)}_{
        [\DFt{t}]_i
    }.
\end{align*}

It follows that
\begin{align*}
\fnorm{\bm{Z}^{(t)} \pth{\identity - \allones}}^2   
& =
\fnorm{
\sum_{q=0}^{t-1} 
\pth{\TDt{q} + \Dt{q} + \DFt{q}}
\pth{\prod_{\ell=q}^{t-1} W^{\pth{\ell}} - \allones}}^2 \\
&=
\fnorm{
\sum_{q=0}^{t-1} 
\TDt{q}
\pth{\prod_{\ell=q}^{t-1} W^{\pth{\ell}} - \allones}}^2 
+
\fnorm{
\sum_{q=0}^{t-1} 
\pth{\Dt{q} + \DFt{q}}
\pth{\prod_{\ell=q}^{t-1} W^{\pth{\ell}} - \allones}
}^2 \\
&~~~+
2 \left \langle
\sum_{q=0}^{t-1} 
\TDt{q}
\pth{\prod_{\ell=q}^{t-1} W^{\pth{\ell}} - \allones},
\sum_{q=0}^{t-1} 
\pth{\Dt{q} + \DFt{q}}
\pth{\prod_{\ell=q}^{t-1} W^{\pth{\ell}} - \allones}
\right \rangle_{\rm F}.
\end{align*}
Take expectation with respect to randomness in stochastic gradients, 
denote by $\expects{\cdot}{\xi}$:
\begin{align*}
&\expects{\fnorm{\bm{Z}^{(t)} \pth{\identity - \allones}}^2}{\xi}
=
\expects{\fnorm{
\sum_{q=0}^{t-1} 
\TDt{q}
\pth{\prod_{\ell=q}^{t-1} W^{\pth{\ell}} - \allones}}^2 }{\xi}
+
\expects{
\fnorm{
\sum_{q=0}^{t-1} 
\pth{\Dt{q} + \DFt{q}}
\pth{\prod_{\ell=q}^{t-1} W^{\pth{\ell}} - \allones}
}^2}{\xi} \\
&\qquad \qquad +
2 \expects{
\left \langle
\sum_{q=0}^{t-1} 
\TDt{q}
\pth{\prod_{\ell=q}^{t-1} W^{\pth{\ell}} - \allones},
\sum_{q=0}^{t-1} 
\pth{\Dt{q} + \DFt{q}}
\pth{\prod_{\ell=q}^{t-1} W^{\pth{\ell}} - \allones}
\right \rangle_{\rm F}}{\xi} \\
&=
\expects{\fnorm{
\sum_{q=0}^{t-1} 
\TDt{q}
\pth{\prod_{\ell=q}^{t-1} W^{\pth{\ell}} - \allones}}^2 }{\xi}
+
\expects{
\fnorm{
\sum_{q=0}^{t-1} 
\pth{\Dt{q} + \DFt{q}}
\pth{\prod_{\ell=q}^{t-1} W^{\pth{\ell}} - \allones}
}^2}{\xi} \\
&~~~
\qquad \qquad
+2
\left \langle
\sum_{q=0}^{t-1} 
 \expects{\TDt{q}}{\xi}
\pth{\prod_{\ell=q}^{t-1} W^{\pth{\ell}} - \allones},
\sum_{q=0}^{t-1} 
\pth{\Dt{q} + \DFt{q}}
\pth{\prod_{\ell=q}^{t-1} W^{\pth{\ell}} - \allones}
\right \rangle_{\rm F} \\
&\le
\expects{\fnorm{
\sum_{q=0}^{t-1} 
\TDt{q}
\pth{\prod_{\ell=q}^{t-1} W^{\pth{\ell}} - \allones}}^2 }{\xi}
+
\expects{
\fnorm{
\sum_{q=0}^{t-1} 
\pth{\Dt{q} + \DFt{q}}
\pth{\prod_{\ell=q}^{t-1} W^{\pth{\ell}} - \allones}
}^2}{\xi},
\end{align*}
where the last inequality holds because $\expects{\TDt{q}}{\xi} = 0$.
Next, we take expectation over the remaining randomness.
\begin{align}
\notag
\expect{\fnorm{\bm{Z}^{(t)} \pth{\identity - \allones}}^2}
&\le
\expect{\fnorm{
\sum_{q=0}^{t-1} 
\TDt{q}
\pth{\prod_{\ell=q}^{t-1} W^{\pth{\ell}} - \allones}}^2}
+
\expect{
\fnorm{
\sum_{q=0}^{t-1} 
\pth{\Dt{q} + \DFt{q}}
\pth{\prod_{\ell=q}^{t-1} W^{\pth{\ell}} - \allones}
}^2} \\\notag
&
\le \eta_l^2 \eta_g^2 \underbrace{\fnorm{
\sum_{q=0}^{t-1} 
    \TDt{q}
\pth{\prod_{\ell=q}^{t-1} W^{\pth{\ell}} - \allones}
}^2}_{(\rmI)}\\
&~~
\nonumber
+ 2\eta_l^2 \eta_g^2 \underbrace{\fnorm{
    \sum_{q=0}^{t-1}
    \Dt{q}
\pth{\prod_{\ell=q}^{t-1} W^{\pth{\ell}} - \allones}}^2}_{(\rmI\rmI)}\\
&~~~
+ 2\eta_l^2 \eta_g^2 s^2 \underbrace{\fnorm{\sum_{q=0}^{t-1}
    \DFt{q}
\pth{\prod_{\ell=q}^{t-1} W^{\pth{\ell}} - \allones}}^2}_{(\rmI\rmI\rmI)}. 
\label{eq: conseneus iterative error without pseudo}
\end{align}

\paragraph{Bounding $\expect{(\rmI)}$}
\begin{align}
\label{eq: conseneus iterative error 1 without pseudo}
    \nonumber
    \expect{(\rmI) } &= \sum_{q=0}^{t-1} \expect{\fnorm{ 
        \TDt{q}
    \pth{\prod_{\ell=q}^{t-1} W^{\pth{\ell}} - \allones}
    }^2 } \\
     \notag
     &\qquad \qquad + \sum_{q=0}^{t-1} \sum_{p=0, p\neq q}^{t-1} \expect{\iprod{
        \TDt{p}
     \pth{\prod_{\ell=p}^{t-1}W^{\pth{\ell}} - \allones}}{
        \TDt{q}
    \pth{\prod_{\ell=q}^{t-1} W^{\pth{\ell}} - \allones}
    } }\\
    &\overset{(a)}{\le} \sum_{q=0}^{t-1} \rho^{t-q}
    \expect{\fnorm{
        \TDt{q}
    }^2 }
    , 
\end{align}
where inequality $(a)$ holds because of~\prettyref{ass: bounded variance client-wise}.
It remains to bound 
$\expect{\fnorm{
    \TDt{q}
}^2}$.
\begin{align*}
\fnorm{
    \TDt{q}
}^2
&=
\sum_{i=1}^m
\indc{i \in \calA^q}
\norm{
\sum_{p=-1}^{q-1}
\indc{\tau_i(t) = p}
(q - p)
\sum_{r=0}^{s-1} \pth{
 \nabla \ell_i(\x_i^{(q,r)};\xi_i^{(q,r)}) - \nabla F_i(\x_i^{(q,r)})}
 }^2.
\end{align*}

\begin{align*}
    \expects{\fnorm{\TDt{q}}^2}{\xi}
    &=
    \sum_{i=1}^m
    \indc{i \in \calA^q}
    \sum_{p=-1}^{q-1}
    \indc{\tau_i(t) = p}
    (q - p)^2
    \sum_{r=0}^{s-1} 
    \expects{\norm{
    \nabla \ell_i(\x_i^{(q,r)};\xi_i^{(p,r)}) - \nabla F_i(\x_i^{(q,r)})
    }^2}{\xi} \\
    &\le    
    s \sigma^2
    \sum_{i=1}^m
    \indc{i \in \calA^q}
    \sum_{p=-1}^{q-1}
    \indc{\tau_i(t) = p}
    (q - p)^2.
\end{align*}

Take expectation over the remaining randomness:
\begin{align*}
    \expect{\fnorm{\TDt{q}}^2}
    &=
    \expect{\expects{\fnorm{\TDt{q}}^2}{\xi}}
    \le    
    s \sigma^2
    \sum_{i=1}^m
    \expect{\indc{i \in \calA^q}}
    \sum_{p=-1}^{q-1}
    \expect{\indc{\tau_i(t) = p}}
    (q - p)^2 
    \le
    \frac{2 m s \sigma^2}{\delta^2} 
\end{align*}

Therefore,
\begin{align*}
    \frac{1}{m T}
    \sum_{i=1}^m
    \sum_{t=0}^{T-1}
    \expect{(\rmI)}
    &\le
    \frac{s \rho}{\pth{1 - \rho}}
    \pth{\variance}
    \sigma^2.
\end{align*}

\paragraph{Bounding $\expect{(\rmI\rmI)}$}
\begin{align*}
    &\expect{(\rmI\rmI)}
    =
    \expect{\fnorm{\sum_{q=0}^{t-1}
    \Dt{q}
    \pth{\prod_{\ell=q}^{t-1} W^{\pth{\ell}} - \allones}}^2} \\
    &= \sum_{q=0}^{t-1} \expect{\fnorm{ 
    \Dt{q}
    \pth{\prod_{\ell=q}^{t-1} W^{\pth{\ell}} - \allones}
    }^2 } 
     + \sum_{q=0}^{t-1} \sum_{p=0, p\neq q}^{t-1} 
     \expect{
     \iprod{
     \Dt{p} 
     \pth{\prod_{\ell=p}^{t-1}W^{\pth{\ell}} - \allones}}{
     \Dt{q}
    \pth{\prod_{\ell=q}^{t-1} W^{\pth{\ell}} - \allones}
    } }\\
    &\le 
    \sum_{q=0}^{t-1} \rho^{t-q}\expect{\fnorm{\Dt{q}}^2} 
    +\sum_{q=0}^{t-1} \sum_{p=0, p\neq q}^{t-1}\expect{ 
    \fnorm{\Dt{p}
    \pth{\prod_{\ell=p}^{t-1} W^{\pth{\ell}} - \allones}}
    \fnorm{\Dt{q}
    \pth{\prod_{\ell=q}^{t-1} W^{\pth{\ell}} - \allones}
    } }
    \\
    &\le 
    \sum_{q=0}^{t-1} \rho^{t-q}
    \expect{\fnorm{\Dt{q}}^2 } %
    +\sum_{q=0}^{t-1} \sum_{p=0, p\neq q}^{t-1} \expect{
    {
    \frac{\rho^{t-p}}{2\epsilon}
    \fnorm{\Dt{p}}^2 
    +
    \frac{\epsilon \rho^{t-q}}{2}
    \fnorm{\Dt{q}}^2 } }, 
\end{align*}

Next, we bound the second term, choose $\epsilon = \rho^{\frac{q-p}{2}},$
\begin{align}
\label{eq: conseneus  error 1: auxiliary without pseudo}
\nonumber&
\sum_{q=0}^{t-1} \sum_{p=0, p\neq q}^{t-1}  \frac{\sqrt{\rho}^{2t-p-q}}{2}\expect{
{
\fnorm{\Dt{p}}^2 
+
\fnorm{\Dt{q}}^2 } } %
\le \sum_{q=0}^{t-1} \sum_{p=0}^{t-1} \frac{\sqrt{\rho}^{2t-p-q}}{2}\expect{
{
\fnorm{\Dt{p}}^2 
+
\fnorm{\Dt{q}}^2 }}\\
\nonumber
& = \sum_{p=0}^{t-1}  
\frac{\sqrt{\rho}^{t-p}}{2}\expect{
\fnorm{\Dt{p}}^2 }
\sum_{q=0}^{t-1}\sqrt{\rho}^{t-q}
+
\sum_{q=0}^{t-1}  \frac{\sqrt{\rho}^{t-q}}{2}\expect{
\fnorm{\Dt{q}}^2 }
\sum_{p=0}^{t-1}
\sqrt{\rho}^{t-p}\\ 
& = \frac{\sqrt{\rho} - \sqrt{\rho}^{t+1}}{1-\sqrt{\rho}}\sum_{q=0}^{t-1}  \sqrt{\rho}^{t-q}\expect{
\fnorm{\Dt{q}}^2 }.
\end{align}

Plugging the upper bound in~\eqref{eq: conseneus  error 1: auxiliary without pseudo} into~\eqref{eq: conseneus iterative error 1 without pseudo}, 
we get
\begin{align}
\nonumber
\expect{(\rmI\rmI)} 
& \le\sum_{q=0}^{t-1} 
\qth{\sqrt{\rho}^{t-q} + \frac{\sqrt{\rho} - \sqrt{\rho}^{t+1}}{1-\sqrt{\rho}}}
\sqrt{\rho}^{t-q}\expect{
\fnorm{\Dt{q}}^2 }
\overset{(b)}{\le}\sum_{q=0}^{t-1} 
\qth{\frac{\sqrt{\rho} + \sqrt{\rho}}{1-\sqrt{\rho}}}
\sqrt{\rho}^{t-q}\expect{\fnorm{
\Dt{q}
}^2 }\\
\label{eq: consensus II without pseudo}
& \le \frac{2\sqrt{\rho}}{1-\sqrt{\rho}}\sum_{q=0}^{t-1} 
\sqrt{\rho}^{t-q}\expect{\fnorm{\Dt{q}}^2 },
\end{align}
where inequality $(b)$ follows because that $\sqrt{\rho}^{t-q} \le \sqrt{\rho}$ for any $q\le t-1$, and that $\sqrt{\rho}^{t+1}\ge 0$. 
It remains to bound $\expect{\fnorm{\Dt{q}}^2}$.
Take expectation with respect to randomness in stochastic gradients:
\begin{align*}
\expects{\fnorm{\Dt{q}}^2}{\xi}
&\le
4 \eta_l^2 s^3 L^2 
\sum_{i=1}^m
\sum_{p=-1}^{q-1}
\indc{\tau_i(q) = p}
(q - p)^2
\sigma^2
\\
&~~~+
16 \eta_l^2 s^4 L^2
\sum_{i=1}^m
\sum_{p=-1}^{q-1}
\indc{\tau_i(q) = p}
(q - p)^2
\norm{\nabla F_i(\x_i^q)}^2,
\end{align*}
where the inequality holds due to~\prettyref{lmm: multi-local steps}.
Next, we take expectation over the remaining randomness and plug back into~\eqref{eq: consensus II without pseudo}:
\begin{align*}
    &\expect{(\rmI\rmI)}\le \frac{2\sqrt{\rho}}{1-\sqrt{\rho}}\sum_{q=0}^{t-1} 
    \sqrt{\rho}^{t-q}\expect{\fnorm{\Dt{q}}^2 } \\
    &\le 
    \frac{8 \rho}{\pth{1 - \sqrt{\rho}}^2}
    \pth{\variance}
    \eta_l^2 s^3 L^2
    m \sigma^2 \\
    &~~~+
    \frac{32 \sqrt{\rho}}{1 - \sqrt{\rho}}
    \pth{\variance}
    \eta_l^2 s^4 L^2
    \sum_{i=1}^m  \sum_{q=0}^{t-1} \expect{\norm{\nabla F_i(\x_i^{q})}^2} 
    \sum_{k=1}^{T-1-t}
    \sqrt{\rho}^k \\
    &\le
    \frac{8 \rho}{\pth{1 - \sqrt{\rho}}^2}
    \pth{\variance}
    \eta_l^2 s^3 L^2
    m \sigma^2 %
    +
    \frac{32 \rho}{\pth{1 - \sqrt{\rho}}^2}
    \pth{\variance}
    \eta_l^2 s^4 L^2
    \sum_{i=1}^m  \sum_{q=0}^{t-1} \expect{\norm{\nabla F_i(\x_i^{q})}^2} 
    ,
\end{align*}
where the last inequality holds because of re-index and grouping.
Therefore,
\begin{align*}
    &\frac{1}{m T}
    \sum_{t=1}^{T-1}
    \expect{(\rmI \rmI)} 
    \le
    \frac{8 \rho}{\pth{1 - \sqrt{\rho}}^2}
    \pth{\variance}
    \eta_l^2 s^3 L^2
    \sigma^2 \\
    &~~~+
    \frac{32 \rho}{\pth{1 - \sqrt{\rho}}^2}
    \pth{\variance}
    \eta_l^2 s^4 L^2
    \frac{1}{T}
    \sum_{t=1}^{T-1}
    \frac{1}{m}
    \sum_{i=1}^m  
    \expect{\norm{\nabla F_i(\x_i^{t})}^2}  \\
    &\le
    \frac{8 \rho}{\pth{1 - \sqrt{\rho}}^2}
    \pth{\variance}
    \eta_l^2 s^3 L^2
    \sigma^2 %
    +
    \frac{64 \rho}{\pth{1 - \sqrt{\rho}}^2}
    \pth{\variance}
    \eta_l^2 s^4 L^4
    \frac{1}{T}
    \sum_{t=1}^{T-1}
    \frac{1}{m}
    \sum_{i=1}^m  
    \expect{\norm{\x_i^t - \bz_i^t}^2} \\
    &~~~+
    \frac{64 \rho}{\pth{1 - \sqrt{\rho}}^2}
    \pth{\variance}
    \eta_l^2 s^4 L^2
    \frac{1}{T}
    \sum_{t=1}^{T-1}
    \frac{1}{m}
    \sum_{i=1}^m 
    \expect{\norm{\nabla F_i(\bz_i^{t})}^2}
\end{align*}

\paragraph{Bounding $\expect{(\rmI\rmI\rmI)}$}
Use a similar trick as in bounding $\expect{(\rmI \rmI)},$ and we get
\begin{align*}
\expect{(\rmI\rmI\rmI)} &= \expect{\fnorm{\sum_{q=0}^{t-1}
    \DFt{q}
\pth{\prod_{\ell=q}^{t-1} W^{\pth{\ell}} - \allones}}^2} 
\le 
\frac{2\sqrt{\rho}}{1-\sqrt{\rho}}\sum_{q=0}^{t-1} \sqrt{\rho}^{t-q}\expect{\fnorm{
    \DFt{q}
}^2 },
\end{align*}
so that
\begin{align*}
\frac{1}{mT}\sum_{t=0}^{T-1}\expect{(\rmI\rmI\rmI)} 
&\le
\frac{2\sqrt{\rho}}{mT \pth{1-\sqrt{\rho}}}\sum_{t=0}^{T-1}\expect{\fnorm{\DFt{t}}^2} \sum_{q=1}^{T-1-t}\sqrt{\rho}^{q}\\
&\le \frac{2 \rho}{\pth{1-\sqrt{\rho}}^2} \frac{1}{mT}\sum_{t=0}^{T-1} \sum_{i=1}^m \expect{\norm{\nabla F_i(\x_i^t)}^2} \\
&\le \frac{4 \rho L^2}{\pth{1-\sqrt{\rho}}^2} \frac{1}{mT}\sum_{t=0}^{T-1} \sum_{i=1}^m \expect{\norm{\x_i^t - \bz_i^t}^2} 
+
\frac{4 \rho}{\pth{1-\sqrt{\rho}}^2} \frac{1}{mT}\sum_{t=0}^{T-1} \sum_{i=1}^m \expect{\norm{\nabla F_i(\bz_i^t)}^2}.
\end{align*}

\paragraph{Putting them together
}
\begin{align*}
&\frac{1}{mT}\sum_{t=0}^{T-1}\expect{\fnorm{\bm{Z}^{\pth{t}} \pth{\identity - \allones}}^2}
\le
\frac{s \rho \eta_l^2 \eta_g^2}{(1 - \sqrt{\rho})^2} 
\pth{\variance}
\pth{1 
+ 16 \eta_l^2 s^2 L^2}
\sigma^2 \\
&~~~+
\frac{8 \rho s^2 L^2 \eta_l^2 \eta_g^2}{(1 - \sqrt{\rho})^2}
\pth{1 + 16 \eta_l^2 s^2 L^2 \pth{\variance}}
\frac{1}{T}
\sum_{t=1}^{T-1}
\frac{1}{m}
\sum_{i=1}^m
\expect{\norm{\x_i^t - \bz_i^t}^2} \\
&~~~+
\frac{8 \rho s^2 \eta_l^2 \eta_g^2}{(1 - \sqrt{\rho})^2}
\pth{1 + 16 \eta_l^2 s^2 L^2 \pth{\variance}}
\frac{1}{T}
\sum_{t=1}^{T-1}
\frac{1}{m}
\sum_{i=1}^m
\expect{\norm{\nabla F_i (\bz_i^t)}^2}
.
\end{align*}
Plug in~\prettyref{prop: client dis}.
\begin{align*}
&\frac{1}{mT}\sum_{t=0}^{T-1}\expect{\fnorm{\bm{Z}^{\pth{t}} \pth{\identity - \allones}}^2}
\le
\frac{s \rho \eta_l^2 \eta_g^2}{(1 - \sqrt{\rho})^2} 
\pth{\variance}
\pth{1 + 20 \eta_l^2 s^2 L^2}
\sigma^2 \\
&~~~+
\frac{8 \rho s^2 \eta_l^2 \eta_g^2}{(1 - \sqrt{\rho})^2}
\pth{1 + 16 \eta_l^2 s^2 L^2 \pth{\variance}}
\pth{1 + \eta_l^2 \eta_g^2 s^2 L^2 \pth{\variance}}
\frac{1}{T}
\sum_{t=1}^{T-1}
\frac{1}{m}
\sum_{i=1}^m
\expect{\norm{\nabla F_i (\bz_i^t)}^2} \\
&\le
\frac{1.05 \rho s \eta_l^2 \eta_g^2}{(1 - \sqrt{\rho})^2} 
\pth{\variance}
\sigma^2 
+
\frac{9 \rho s^2 \eta_l^2 \eta_g^2}{(1 - \sqrt{\rho})^2}
\frac{1}{T}
\sum_{t=1}^{T-1}
\frac{1}{m}
\sum_{i=1}^m
\expect{\norm{\nabla F_i (\bz_i^t)}^2} 
,
\end{align*}
where the last inequality holds because 
$\eta_l \le \delta /(20 s L )$
and $\eta_l \eta_g \le \delta / ( 10 s L )$.
Next, plug in Proposition \ref{prop: average gradient to global gradient}.
\begin{align*}
    &\frac{1}{mT}\sum_{t=0}^{T-1}\expect{\fnorm{\bm{Z}^{\pth{t}} \pth{\identity - \allones}}^2}
    \le
    \frac{1.05 \rho s \eta_l^2 \eta_g^2}{(1 - \sqrt{\rho})^2} 
    \pth{\variance}
    \sigma^2 
    +
    \frac{27 \rho s^2 \eta_l^2 \eta_g^2}{(1 - \sqrt{\rho})^2}
    \zeta^2 \\
    &~~~+
    \frac{27 \rho s^2 \eta_l^2 \eta_g^2 \pth{\beta^2 + 1}}{(1 - \sqrt{\rho})^2}
    \frac{1}{T}\sum_{t=0}^{T-1}
    \expect{\norm{\nabla F(\bar{\bz}^t)}^2} 
    +
    \frac{27 \rho s^2 L^2 \eta_l^2 \eta_g^2}{(1 - \sqrt{\rho})^2}
    \frac{1}{T}\sum_{t=0}^{T-1}
    \frac{1}{m}\sum_{i=1}^m
    \expect{\norm{\bz_i^t - \bar{\bz}^t}^2}
    .
\end{align*}
It follows that
\begin{align*}
    \frac{1}{mT}\sum_{t=0}^{T-1}\expect{\fnorm{\bm{Z}^{\pth{t}} \pth{\identity - \allones}}^2}
    &\le
    \frac{3 \rho s \eta_l^2 \eta_g^2}{(1 - \sqrt{\rho})^2 \delta^2} 
    \sigma^2 \\
    &~~~+
    \frac{40 \rho s^2 \eta_l^2 \eta_g^2}{(1 - \sqrt{\rho})^2}
    \zeta^2 \\
    &~~~+
    \frac{40 \rho s^2 \eta_l^2 \eta_g^2 
    \pth{\beta^2 + 1}}{(1 - \sqrt{\rho})^2}
    \frac{1}{T}\sum_{t=0}^{T-1}
    \expect{\norm{\nabla F(\bar{\bz}^t)}^2}.
\end{align*}  
which is due to the fact that $\eta_l \eta_g \le \frac{1 - \sqrt{\rho}}{10 s L (\sqrt{\rho} + 1)}$.
\end{proof}

\subsection{Spectral norm upper bound (\prettyref{lmm: rho upper bound main text})}
\prettyref{lmm: rho upper bound main text} adapts from \cite{xiang2023towards},
we present its proof here for completeness.
\begin{proof}[\bf Proof of~\prettyref{lmm: rho upper bound main text}]
For ease of exposition, in this proof we drop time index $t$. 
We first get the explicit expression for $\expect{W^2_{jj^{\prime}}\mid \calA \neq \emptyset}$. 
Suppose that $\calA\not=\emptyset$. 
We have 
\begin{align*}
W^2_{jj^{\prime}} & = \sum_{k=1}^m W_{jk}W_{j^{\prime}k} 
= W_{jj}W_{j^{\prime}j} + W_{jj^{\prime}}W_{j^{\prime}j^{\prime}} + \sum_{k\in [m]\setminus \{j, j^{\prime}\}} W_{jk}W_{j^{\prime}k}.   
\end{align*}
When $k\not=j$ and $k\not=j^{\prime}$,
we have 
\begin{align*}
W_{jk}W_{j^{\prime}k} & = \frac{1}{|\calA|^2} \indc{j\in \calA} \indc{j^{\prime}\in \calA}\indc{k\in \calA}.     
\end{align*}
In addition, we have $W_{jj}W_{j^{\prime}j}  = \frac{1}{|\calA|^2} \indc{j\in \calA}\indc{j^{\prime}\in \calA},$ and $W_{j^{\prime}j^{\prime}}W_{jj^{\prime}} = \frac{1}{|\calA|^2} \indc{j\in \calA}\indc{j^{\prime}\in \calA}$.  
Thus, 
\begin{itemize}[leftmargin=*]
\item 
For  $j \neq j^\prime$, we have
\begin{align*}
& W^2_{jj^{\prime}} = \sum_{k=1}^m W_{jk}W_{j^{\prime}k} 
= \frac{1}{|\calA|}\indc{j\in \calA} \indc{j^{\prime}\in \calA};
\end{align*}
\item
For $j = j^\prime$, we have
\begin{align*}
& W^2_{jj} = \frac{1}{|\calA|}\indc{j\in \calA} + \pth{1-\indc{j\in \calA}}.
\end{align*}
\end{itemize}
In the special case where $\calA = \emptyset$, we simply have $W = \identity$ by the algorithmic clauses.
Therefore,
$\expect{W_{j j^\prime} \mid \calA = \emptyset} \ge 0$ holds for any pair of $j,j^\prime  \in [m]$.
It follows, by the law of total expectation and for all $j,j^\prime \in [m]$, that
\begin{align*}
\expect{W_{j j^\prime}}
&=
\expect{W_{j j^\prime} \mid \calA = \emptyset}
\prob{\calA = \emptyset} 
+
\expect{W_{j j^\prime} \mid \calA \neq \emptyset}
\prob{\calA \neq \emptyset} 
\\
&\ge
\expect{W_{j j^\prime} \mid \calA \neq \emptyset}
\prob{\calA \neq \emptyset}.
\end{align*}
\begin{itemize}[leftmargin=*]
\item 
For $j \neq j^\prime$, it holds that
\begin{align*}
&\expect{W^2_{jj^{\prime}}\mid \calA \neq \emptyset} 
=
\expect{\frac{1}{|\calA|}\indc{j\in \calA} \indc{j^{\prime}\in \calA} \Big| \calA \neq \emptyset} 
\overset{(a)}{\ge} 
\expect{\frac{1}{m}\indc{j\in \calA} \indc{j^{\prime}\in \calA} \Big| \calA \neq \emptyset}
= 
\frac{p_j p_{j^{\prime}}}{m} 
\ge \frac{\delta^2}{m},
\end{align*}
where inequality $(a)$ holds because $\abth{\calA} \le m$
;
\item
For $j = j^\prime$, it holds that
\begin{align*}
\expect{W^2_{jj}\mid \calA\neq\emptyset} 
&= 
\expect{\frac{1}{|\calA|}\indc{j\in \calA} + \pth{1-\indc{j\in \calA}} \Big| \calA \neq \emptyset} \\
&\ge
\expect{\frac{1}{m}\qth{\indc{j\in \calA} + \pth{1-\indc{j\in \calA}}} \Big| \calA \neq \emptyset} 
=
\frac{1}{m}
\ge
\frac{\delta^2}{m}.
\end{align*}
\end{itemize}
Recall that $M = \expect{W^2}$.
Next, we show that each element of $M$ is lower bounded.
\begin{align*}
M_{jj^{\prime}}
\ge 
\expect{W_{j j^\prime}^2\mid \calA \neq \emptyset}
\prob{\calA \neq \emptyset}
\ge \frac{\delta^2}{m} \qth{1-\pth{1-\delta}^m}.
\end{align*} 
We note that $\rho (t) = \lambda_2 (M)$,
where $\lambda_2$ is the second largest eigenvalue of %
matrix $M$. 
A Markov chain with $M$ as the transition matrix is ergodic as the chain is (1) {\it irreducible}: $M_{j j^\prime}\ge  \frac{\delta^2}{m}\qth{1-\pth{1-c}^m}>0$ for $j,j^\prime \in[m]$ and (2) {\it aperiodic} (it has self-loops).  
In addition, $W$ matrix is by definition doubly-stochastic. 
Hence, $M$ has a uniform stationary distribution
$\pi = \Indc^\top/ m.$
Furthermore, the irreducible Markov chain is reversible since 
it holds
for all the states that
$
\pi_i M_{ij} = \pi_j M_{ji}.
$
The conductance $\Phi$ of a reversible Markov chain \cite{jerrum1988conductance} with  a transition matrix $M$ can be bounded by
\begin{align*}
&\Phi(M) 
= \min_{\sum_{i\in\calS} \pi_i \le \frac{1}{2}} \frac{\sum_{i\in\calS, j\notin \calS} \pi_i M_{ij}}{\sum_{i\in \calS} \pi_i}
\ge \frac{\pth{\frac{\delta}{m}}^2\qth{1-\pth{1-\delta}^m}\abth{\calS}\abth{\bar{\calS}}}{\frac{\abth{\calS}}{m}} = \frac{\delta^2\qth{1-\pth{1-\delta}^m}}{m} \abth{\bar{\calS}},
\end{align*}
where
$
\abth{\bar{\calS}} = m - \abth{\calS} \ge \frac{m}{2}.
$
From Cheeger's inequality, we know that
$
\frac{1 - \lambda_2}{2} \le \Phi(M) \le \sqrt{2 \pth{1-\lambda_2}}.
$
Finally, we have
\begin{align*}
\Phi(M) 
&\ge \frac{\delta^2\qth{1-\pth{1-\delta}^m}}{m} \abth{\bar{\calS}} \ge \frac{\delta^2\qth{1-\pth{1-\delta}^m}}{2}.
\end{align*}
Thus,
$
\rho(t) = \lambda_2 \le 1 - \frac{\Phi^2\pth{M}}{2} \le 1 - \frac{\delta^4\qth{1-\pth{1-\delta}^m}^2}{8}.
$
\end{proof}

\newpage
\section{Convergence Error of $\bar{\bz}^t$ (\prettyref{thm: z bar rate})}

In the sequel, 
we recall and assume the following learning rate conditions in~\eqref{eq: lr condition main text}:
\begin{align*}
    \eta_l \eta_g &\le 
    \frac{\pth{1 - \sqrt{\rho}}\delta}{80 s (L+1) \pth{\sqrt{\rho} + 1}  \sqrt{
    \pth{\beta^2 + 1}(1 + L^2)}}; ~
    \eta_l \le
    \frac{\delta}{200 s L \sqrt{
    \pth{\beta^2 + 1}
    (1 + L^2)}}.
\end{align*}

Recall that $\delta_{\max} \triangleq \max_{i\in[m],t\in[T]} p_i^t$ and $F^\star \triangleq \min_{\x} F(\x)$.

\begin{proof}[\bf Proof of~\prettyref{thm: z bar rate}]
Take expectation over all the randomness, plug in \prettyref{lmm: consensus z without pseudo} and Proposition \ref{prop: client dis}.
By telescoping sum, it holds that
\begin{align}
\nonumber
&\frac{\expect{ F^{\star} - F(\bar{\bz}^0) }}{T}
\le
- \frac{\eta_l \eta_g s}{4} 
\frac{1}{T}
\sum_{t=0}^{T-1}
\expect{\norm{\nabla F(\bar{\bz}^t)}^2 }
+\frac{2 \eta_l^2 \eta_g^2 s L {\delta_{\max}} \sigma^2}{m^2 T} 
\sum_{t=0}^{T-1}
\sum_{i=1}^m
\sum_{p=-1}^{t-1}
\expect{\indc{\tau_i(t) = p}}
(t - p)^2
\\
\nonumber
&~~~+
\frac{9
\eta_g \eta_l^3 s^2 L^2 
\sigma^2
}{m T} 
\sum_{t=0}^{T-1}
\sum_{i=1}^m 
\sum_{p=-1}^{t-1} 
\expect{\indc{\tau_i(t) = p}}
(t - p)^2 \\
\label{eq: z telescope third to last without pseudo}
&~~~
+ 
2.2 \eta_l \eta_g s L^2 
\frac{1}{m T}
\sum_{t=0}^{T-1}
\sum_{i=1}^m
\expect{\norm{\x_i^t - \bz_i^t}^2} \\
\label{eq: z telescope second to last without pseudo}
&~~~
+
\frac{\eta_l \eta_g s L^2}{2m T}
\sum_{t=0}^{T-1}
\sum_{i=1}^m
\expect{\norm{\bz_i^t - \bar{\bz}^t}^2 }\\
&~~~
\label{eq: z telescope last without pseudo}
+
\frac{35 \eta_g \eta_l^3 s^3 L^2}{m T}
\sum_{t=0}^{T-1}
\sum_{i=1}^m 
\sum_{p=-1}^{t-1}
\expect{\indc{\tau_i(t) = p}}
(t - p)^2
\expect{\norm{\nabla F_i(\x_i^{p+1})}^2}
.
\end{align}

Next, we bound~\eqref{eq: z telescope third to last without pseudo},~\eqref{eq: z telescope second to last without pseudo} and~\eqref{eq: z telescope last without pseudo}, respectively.
First, we show that
\begin{align}
\notag
    &\frac{1}{mT}
    \sum_{t=0}^{T-1}
    \sum_{i=1}^m
    \expect{\norm{\nabla F_i (\bz_i^t)}^2}\\
    \notag
    &\le
    3 \zeta^2
    +
    3\pth{\beta^2 + 1}
    \frac{1}{T}
    \sum_{t=0}^{T-1}
    \expect{\norm{\nabla F (\bar{\bz}^t)}^2} 
    +
    \frac{3L^2}{m T}
    \sum_{t=0}^{T-1}
    \sum_{i=1}^m
    \expect{\norm{\bz_i^t - \bar{\bz}^t}^2} \\
\label{eq: hetero consensus}
    &\le 
    3
    \qth{1 + \frac{40 \rho s^2 \eta_l^2 \eta_g^2 L^2}{(1 - \sqrt{\rho})^2}}
    \zeta^2
    +
    3\pth{\beta^2 + 1}
    \qth{1 + \frac{40 \rho s^2 \eta_l^2 \eta_g^2 L^2}{(1 - \sqrt{\rho})^2}}
    \frac{1}{T}
    \sum_{t=0}^{T-1}
    \expect{\norm{\nabla F (\bar{\bz}^t)}^2} 
    +
    \frac{9 \rho s \eta_l^2 \eta_g^2 L^2}{(1 - \sqrt{\rho})^2 \delta^2} \sigma^2
    ,
\end{align}
where the last inequality follows from~\prettyref{lmm: consensus z without pseudo}.

For~\eqref{eq: z telescope third to last without pseudo},
we have
\begin{align*}
    2.2 \eta_l \eta_g s L^2 
    \frac{1}{T}
    \sum_{t=0}^{T-1}
    \frac{1}{m}
    \sum_{i=1}^m
    \expect{\norm{\x_i^t - \bz_i^t}^2} 
    & \le
    \frac{4.4
    \eta_l^3 \eta_g^3 s^3 L^2 }{\delta^2}
    \frac{1}{T}
    \sum_{t=0}^{T-1}
    \frac{1}{m}
    \sum_{i=1}^m
    \expect{\norm{\nabla F_i (\bz_i^t)}^2} \\
    &\le
    \frac{s^2 \eta_l^3 \eta_g^3 L^2}{2 \delta^2} 
    \sigma^2
    +
    \frac{14
    \eta_l^3 \eta_g^3 s^3 L^2
    }{\delta^2}
    \pth{1 + \frac{40 \eta_l^2 \eta_g^2 \rho s^2 L^2}{(1 - \sqrt{\rho})^2}}
    \zeta^2 \\
    &~~~+
    \frac{14
    \eta_l^3 \eta_g^3 s^3 L^2
    }{\delta^2}
    \qth{
    \pth{\beta^2 + 1} + \frac{40 \eta_l^2 \eta_g^2 \rho s^2 L^2}{(1 - \sqrt{\rho})^2}}
    \frac{1}{T}
    \sum_{t=0}^{T-1}
    \expect{\norm{\nabla F (\bar{\bz}^t)}^2} 
    ,
\end{align*}
where the last inequality holds due to~\eqref{eq: hetero consensus}.
For~\eqref{eq: z telescope second to last without pseudo},
we similarly have
\begin{align*}
    \frac{\eta_l \eta_g s L^2}{2mT}
    \sum_{t=0}^{T-1}
    \sum_{i=1}^m
    \expect{\norm{\bz_i^t - \bar{\bz}^t}^2}
    &\le
    \frac{1.5 \rho s^2 \eta_l^3 \eta_g^3 L^2}{(1 - \sqrt{\rho})^2 \delta^2}
    \sigma^2 
    +
    \frac{20 \rho s^3 \eta_l^3 \eta_g^3 L^2 }{(1 - \sqrt{\rho})^2}
    \zeta^2 \\
    &~~~+
    \frac{20 \rho s^3 \eta_l^3 \eta_g^3 L^2 \pth{\beta^2 + 1}}{(1 - \sqrt{\rho})^2}
    \frac{1}{T}\sum_{t=0}^{T-1}
    \expect{\norm{\nabla F(\bar{\bz}^t)}^2}.
\end{align*}
For~\eqref{eq: z telescope last without pseudo},
we have
\begin{align*}
&
35 \eta_g \eta_l^3 s^3 L^2
\frac{1}{T}
\sum_{t=0}^{T-1}
\frac{1}{m} 
\sum_{i=1}^m 
\sum_{p=-1}^{t-1}
\expect{\indc{\tau_i(t) = p}}
(t - p)^2
\expect{
\norm{\nabla F_i(\x_i^{p+1})}^2}\\
&\le
\frac{
70 \eta_g \eta_l^3 s^3 L^2
}{mT \delta^2} 
\sum_{t=0}^{T-1}
\sum_{i=1}^m 
\expect{\norm{\nabla F_i(\x_i^{t})}^2} \\
&\le
\frac{140 \eta_g \eta_l^3 s^3 L^4}{mT \delta^2} 
\sum_{t=0}^{T-1}
\sum_{i=1}^m 
\expect{\norm{\x_i^{t} - \bz_i^t}^2}
+
\frac{140 \eta_g \eta_l^3 s^3 L^2}{mT \delta^2} 
\sum_{t=0}^{T-1}
\sum_{i=1}^m 
\expect{\norm{\nabla F_i(\bz_i^{t})}^2} \\
&\le
\pth{1 + \frac{2\eta_l^2 \eta_g^2 s^2 L^2}{\delta^2} 
}
\pth{\variance}
\frac{70 \eta_g \eta_l^3 s^3 L^2}{mT} 
\sum_{t=0}^{T-1}
\sum_{i=1}^m 
\expect{\norm{\nabla F_i(\bz_i^{t})}^2} \\
&\overset{(a)}{\le}
\pth{\variance}
\frac{71 \eta_g \eta_l^3 s^3 L^2}{mT} 
\sum_{t=0}^{T-1}
\sum_{i=1}^m 
\expect{\norm{\nabla F_i(\bz_i^{t})}^2} \\
&\overset{(b)}{\le}
\frac{426 \eta_g \eta_l^3 s^3 L^2}{\delta^2}
\qth{1 + \frac{40 \rho s^2 \eta_l^2 \eta_g^2 L^2}{(1 - \sqrt{\rho})^2}}
\zeta^2
+
\frac{426 \eta_g \eta_l^3 s^3 L^2}{\delta^2}
\pth{\beta^2 + 1}
\qth{1 + \frac{40 \rho s^2 \eta_l^2 \eta_g^2 L^2}{(1 - \sqrt{\rho})^2}}
\frac{1}{T}
\sum_{t=0}^{T-1}
\expect{\norm{\nabla F (\bar{\bz}^t)}^2} \\
&~~~+
\frac{\eta_g \eta_l^3 s^3 L^2 \sigma^2}{2 \delta^2}
,
\end{align*}
where inequality $(a)$ holds because of~\eqref{eq: lr condition main text},
inequality $(b)$ holds because of~\eqref{eq: hetero consensus}.

Putting~\eqref{eq: z telescope third to last without pseudo},~\eqref{eq: z telescope second to last without pseudo} and~\eqref{eq: z telescope last without pseudo} together and plugging them back into the telescoping sum,
it holds that
\begin{align*}
&\frac{\expect{ F^{\star} - F(\bar{\bz}^0) }}{T} \\
&\le
- \pth{
\frac{\eta_l \eta_g s}{4}
-
\frac{
14
\pth{\beta^2 + 1}
\eta_l^3 \eta_g^3 s^3 L^2
\pth{1 + L^2}}{\delta^2}
-
\frac{20 \rho s^3 \eta_l^3 \eta_g^3 L^2 \pth{\beta^2 + 1}}{(1 - \sqrt{\rho})^2}}
\frac{1}{T} \sum_{t=0}^{T-1}
\expect{\norm{\nabla F(\bar{\bz}^t)}^2} \\
&~~~- \pth{
-
\frac{426 \eta_g \eta_l^3 s^3 L^2
\pth{\beta^2 + 1}
\pth{1 + L^2}}{\delta^2}
}
\frac{1}{T} \sum_{t=0}^{T-1}
\expect{\norm{\nabla F(\bar{\bz}^t)}^2}
\\
&~~~+\frac{4 \eta_l^2 \eta_g^2 s L {\delta_{\max}} \sigma^2}{m \delta^2} 
+
\pth{
\frac{\eta_l^3 \eta_g^3 s^2 L^2
\sigma^2}{2\delta^2}
+
\frac{1.5 \rho s^2 \eta_l^3 \eta_g^3 L^2}{(1 - \sqrt{\rho})^2 \delta^2}
\sigma^2 
+
\frac{\eta_g \eta_l^3 s^3 L^2
\sigma^2}{2 \delta^2}}
\\
&~~~+ 
\frac{
15
\eta_l^3 \eta_g^3 s^3 L^2
\zeta^2}{\delta^2}
+
\frac{20 \rho s^3 \eta_l^3 \eta_g^3 L^2 }{(1 - \sqrt{\rho})^2}
\zeta^2 
+
\frac{
430 \eta_g \eta_l^3 s^3 L^2
\zeta^2}{\delta^2} \\
&\le
- \frac{\eta_l \eta_g s }{6}
\frac{1}{T} \sum_{t=0}^{T-1}
\expect{\norm{\nabla F(\bar{\bz}^t)}^2} \\
&~~~+\frac{4 \eta_l^2 \eta_g^2 s L {\delta_{\max}} \sigma^2}{m \delta^2} 
+
\pth{
\frac{\eta_l^3 \eta_g^3 s^2 L^2
\sigma^2}{2\delta^2}
+
\frac{1.5 \rho s^2 \eta_l^3 \eta_g^3 L^2}{(1 - \sqrt{\rho})^2 \delta^2}
\sigma^2 
+
\frac{\eta_g \eta_l^3 s^3 L^2
\sigma^2}{2 \delta^2}}
\\
&~~~+ 
\frac{
15
\eta_l^3 \eta_g^3 s^3 L^2
\zeta^2}{\delta^2}
+
\frac{20 \rho s^3 \eta_l^3 \eta_g^3 L^2 }{(1 - \sqrt{\rho})^2}
\zeta^2 
+
\frac{
430 \eta_g \eta_l^3 s^3 L^2
\zeta^2}{\delta^2}
,
\end{align*}
where the last inequality holds because of \eqref{eq: lr condition main text}.

Combining the above and rearranging the terms,
we get
\begin{align*}    
    \frac{1}{T} \sum_{t=0}^{T-1}
    \expect{\norm{\nabla F(\bar{\bz}^t)}^2} 
    &\le
    \frac{6\pth{F(\bar{\bz}^0) - F^\star}}{\eta_l \eta_g s T} \\
    &~~~+\frac{24 \eta_l \eta_g L {\delta_{\max}} \sigma^2}{m \delta^2} 
    +
    \pth{
    \frac{3 \eta_l^2 \eta_g^2 s L^2
    \sigma^2}{\delta^2}
    +
    \frac{9 \rho s \eta_l^2 \eta_g^2 L^2}{(1 - \sqrt{\rho})^2 \delta^2}
    \sigma^2 
    +
    \frac{3 \eta_l^2 s^2 L^2
    \sigma^2}{\delta^2}}
    \\
    &~~~+ 
    \frac{
    90
    \eta_l^2 \eta_g^2 s^2 L^2
    \zeta^2}{\delta^2}
    +
    \frac{120 \rho s^2 \eta_l^2 \eta_g^2 L^2 }{(1 - \sqrt{\rho})^2}
    \zeta^2 
    +
    \frac{
    2580 \eta_l^2 s^2 L^2
    \zeta^2}{\delta^2} \\
    &\le
    \frac{6\pth{F(\bar{\bz}^0) - F^\star}}{\eta_l \eta_g s T} 
    +\frac{24 \eta_l \eta_g L {\delta_{\max}} \sigma^2}{m \delta^2} 
    +
    \frac{15 \eta_l^2 \eta_g^2 s^2 L^2
    \sigma^2}{(1 - \sqrt{\rho})^2 \delta^2}
    +
    \frac{
    2800 \eta_l^2 \eta_g^2 s^2 L^2
    \zeta^2}{\delta^2 (1 - \sqrt{\rho})^2}
    ,
\end{align*}  
where the last inequality holds because $\rho < 1$.
In terms of asymptotics,
we have
\begin{align*}
    \frac{1}{T}\sum_{t=0}^{T-1}\expect{\norm{\nabla F(\bar{\bz}^t)}^2}
    &\lesssim
    \frac{\pth{F(\bar{\bz}^0) - F^\star}}{\eta_l \eta_g s T}
    +\frac{\eta_l \eta_g L \sigma^2}{ m} \frac{\delta_{\max}}{\delta^2}
    +
    \eta_l^2 \eta_g^2 s^2 L^2
    \pth{\frac{\sigma^2 +  \zeta^2}{\delta^2 \pth{1 - \sqrt{\rho}}^2}}
    ,
\end{align*}
where we use the convention that $\eta_g \ge 1$ for ease of presentation.
\end{proof}

\newpage
\section{Convergence Rate of $\bar{\x}^t$ (\prettyref{cor: x bar rate})}

\subsection{Convergence error of~\prettyref{alg: fedpbc+}}
\begin{corollary}[Convergence error of $\x_i^t$]
\label{cor: x bar without pseudo}
Suppose learning rates conditions in~\eqref{eq: lr condition main text} are met for $\eta_l$ and $\eta_g$,
and Assumptions \ref{ass: prob lower bound}, \ref{ass: 2 smmothness}, \ref{ass: bounded variance client-wise} and \ref{ass: bounded similarity} hold
for $T \ge 1$,
it holds that
\begin{align*}
    \frac{1}{T}\sum_{t=0}^{T-1}\expect{\norm{\nabla F(\bar{\x}^t)}^2}
    &\lesssim
    \frac{\pth{F(\bar{\x}^0) - F^\star}}{\eta_l \eta_g s T}
    +\frac{\eta_l \eta_g L \sigma^2}{ m} \frac{\delta_{\max}}{\delta^2} 
    +
    \eta_l^2 \eta_g^2 s^2 L^2
    \pth{\frac{\sigma^2 + \zeta^2}{\delta^2 \pth{1 - \sqrt{\rho}}^2}}
    ,
\end{align*}
\end{corollary}
\begin{proof}[\bf Proof of \prettyref{cor: x bar without pseudo}]
\begin{align*}
&\frac{1}{T}\sum_{t=0}^{T-1}\expect{\norm{\nabla F(\bar{\x}^t)}^2}\le
\frac{3}{T}\sum_{t=0}^{T-1}\expect{\norm{\nabla F(\bar{\x}^t) - \nabla F(\bar{\bz}^t)}^2}
+
\frac{3}{2T}\sum_{t=0}^{T-1} \expect{\norm{\nabla F(\bar{\bz}^t)}^2} \\
&\overset{(a)}{\le}
\frac{3 L^2}{T}\sum_{t=0}^{T-1} \expect{\norm{\bar{\x}^t - \bar{\bz}^t}^2}
+
\frac{3}{2T}\sum_{t=0}^{T-1} \expect{\norm{\nabla F(\bar{\bz}^t)}^2} \\
&\overset{(b)}{\le}
\frac{3 L^2}{T}\sum_{t=0}^{T-1} \frac{1}{m} \sum_{i=1}^m
\expect{\norm{\x_i^t - \bz_i^t}^2}
+
\frac{3}{2T}\sum_{t=0}^{T-1} \expect{\norm{\nabla F(\bar{\bz}^t)}^2} \\
&\le
3
\pth{\variance}
\frac{\eta_l^2 \eta_g^2 s^2 L^2}{T}\sum_{t=0}^{T-1} \frac{1}{m} \sum_{i=1}^m
\expect{\norm{\nabla F_i(\bz_i^t)}^2}
+
\frac{3}{2T}\sum_{t=0}^{T-1} \expect{\norm{\nabla F(\bar{\bz}^t)}^2} 
,
\end{align*}
where inequality $(a)$ follows from~\prettyref{app: preliminaries} 2, 
inequality $(b)$ follows from Assumption \ref{ass: 2 smmothness}.

Further plug in~\prettyref{prop: average gradient to global gradient},
\begin{align*}
    \frac{1}{T}\sum_{t=0}^{T-1}\expect{\norm{\nabla F(\bar{\x}^t)}^2}
    &
    \le
    \frac{3}{2T}\sum_{t=0}^{T-1} \expect{\norm{\nabla F(\bar{\bz}^t)}^2} 
    +
    9 \eta_l^2 \eta_g^2 s^2 L^2
    \pth{\variance}
    \pth{\beta^2 + 1}
    \frac{1}{T}
    \sum_{t=0}^{T-1} 
    \expect{\norm{\nabla F(\bar{\bz}^t)}^2} \\
    &~~~+
    9 \eta_l^2 \eta_g^2 s^2 L^4
    \pth{\variance}
    \frac{1}{T}
    \sum_{t=0}^{T-1} 
    \frac{1}{m} \sum_{i=1}^m
    \expect{\norm{\bz_i^t - \bar{\bz}^t}^2}
    +
    9 \eta_l^2 \eta_g^2 s^2 L^2
    \pth{\variance}
    \zeta^2 
    . 
\end{align*}

Finally, plug in~\prettyref{lmm: consensus z without pseudo}.
\begin{align*}
    \frac{1}{T}\sum_{t=0}^{T-1}\expect{\norm{\nabla F(\bar{\x}^t)}^2}
    &\le
    \pth{
    \frac{3}{2} + 
    9 \eta_l^2 \eta_g^2 s^2 L^2
    \pth{\variance}
    \pth{\beta^2 + 1} 
    \frac{90}{80^2}
    }
    \frac{1}{T}\sum_{t=0}^{T-1} \expect{\norm{\nabla F(\bar{\bz}^t)}^2} 
     \\
    &~~~+
    \frac{9 \times 8}{80^2} \eta_l^2 \eta_g^2 s L^2
    \pth{\variance}
    \sigma^2 
    +
    9 \eta_l^2 \eta_g^2 s^2 L^2
    \pth{\variance}
    \zeta^2
    +
    \frac{9 \times 90}{200^2} \eta_l^2 \eta_g^2 s^2 L^2
    \zeta^2 \\
    &\le
    \frac{2}{T}\sum_{t=0}^{T-1} \expect{\norm{\nabla F(\bar{\bz}^t)}^2} 
    +
    \frac{s L^2 \eta_l^2 \eta_g^2}{\delta^2}
    \sigma^2 
    +
    \frac{9 \eta_l^2 \eta_g^2 s^2 L^2}{\delta^2}
    \zeta^2
    +
    s^2 L^2 \eta_l^2 \eta_g^2 
    \zeta^2  \\
    & \le
    \frac{12\pth{F(\bar{\bz}^0) - F^\star}}{\eta_l \eta_g s T}
    +\frac{48 \eta_l \eta_g L {\delta_{\max}} \sigma^2}{m \delta^2} 
    +
    \frac{31 \eta_l^2 \eta_g^2 s^2 L^2 \sigma^2}{(1 - \sqrt{\rho})^2 \delta^2}
    +
    \frac{
    5600 \eta_l^2 \eta_g^2 s^2 L^2
    \zeta^2}{(1 - \sqrt{\rho})^2 \delta^2},
\end{align*}
where the last inequality holds because $\rho < 1$.
In terms of asymptotics,
we have
\begin{align*}
    \frac{1}{T}\sum_{t=0}^{T-1}\expect{\norm{\nabla F(\bar{\x}^t)}^2}
    &\lesssim
    \frac{\pth{F(\bar{\x}^0) - F^\star}}{\eta_l \eta_g s T}
    +\frac{\eta_l \eta_g L \sigma^2}{ m} \frac{\delta_{\max}}{\delta^2} 
    +
    \eta_l^2 \eta_g^2 s^2 L^2
    \pth{\frac{\sigma^2 + \zeta^2}{\delta^2 (1 - \sqrt{\rho})^2}}
    ,
\end{align*}
where we use the convention that $\eta_g \ge 1$ for ease of presentation.
\end{proof}

\subsection{Convergence rate of~\prettyref{alg: fedpbc+}}
\begin{proof}[\bf Proof of~\prettyref{cor: x bar rate}]
Choose step-size as
$\eta_l = \frac{1}{\sqrt{T} s L}$,
$\eta_g = \sqrt{s \delta m}$
such that learning rate conditions in~\eqref{eq: lr condition main text} are met,
it holds that
\begin{align*}
    \frac{1}{T}\sum_{t=0}^{T-1}\expect{\norm{\nabla F(\bar{\x}^t)}^2}
    &\lesssim
    \frac{L \pth{F(\bar{\x}^0) - F^\star}}{\sqrt{s \delta mT}}
    +
    \frac{\delta_{\max}}{\delta^{\frac{3}{2}} \sqrt{s m T}} \sigma^2
    +
    \frac{s m}{T}
    \pth{\frac{\sigma^2 +\zeta^2}{\delta (1 - \sqrt{\rho})^2}}
    .
\end{align*}

\end{proof}

\newpage
\section{Additional Results and Interpretations}

\subsection{Consensus error of~\prettyref{alg: fedpbc+}}
\begin{corollary}[Consensus error of $\x_i^t$]
\label{cor: x consensus without pseudo}
Suppose learning rates conditions are met in~\eqref{eq: lr condition main text} for $\eta_l$ and $\eta_g$,
and Assumptions \ref{ass: prob lower bound}, \ref{ass: 2 smmothness}, \ref{ass: bounded variance client-wise} and \ref{ass: bounded similarity} hold
for $T \ge 1$,
it holds that
\begin{align*}
 \frac{1}{T} \sum_{t=0}^{T-1}
    \frac{1}{m} \sum_{i=1}^m 
    \expect{\norm{\x_i^t - \bar{\x}^t}^2}
    &\lesssim
    \frac{\pth{F(\bar{\x}^0) - F^\star}}{\eta_l \eta_g s T}
    +\frac{\eta_l \eta_g L \sigma^2}{ m} \frac{\delta_{\max}}{\delta^2} \\
    &~~~\qquad +
    \eta_l^2 \eta_g^2 s^2 L^2
    \pth{\frac{\sigma^2 + \zeta^2}{\delta^2}}
    \qth{
    1
    +
    \frac{\rho}{\pth{1 - \sqrt{\rho}}^2}
    }
,
\end{align*}
\end{corollary}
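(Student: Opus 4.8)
The plan is to reduce the real consensus error $\frac{1}{mT}\sum_{t}\sum_i\expect{\norm{\x_i^t-\bar{\x}^t}^2}$ to quantities already controlled in the analysis of the auxiliary sequence, and then substitute the rate of Theorem~\prettyref{thm: z bar rate}. The pivotal identity is $\bar{\x}^t-\bar{\bz}^t=\frac{1}{m}\sum_{i=1}^m(\x_i^t-\bz_i^t)$, which together with Young's and Jensen's inequalities gives the decomposition~\eqref{eq: x bar consensus}: the real consensus error is dominated by the average approximation error $\frac{1}{mT}\sum_t\sum_i\expect{\norm{\x_i^t-\bz_i^t}^2}$ plus the average auxiliary consensus error $\frac{1}{mT}\sum_t\sum_i\expect{\norm{\bz_i^t-\bar{\bz}^t}^2}$. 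Both have already been bounded — the first by Proposition~\prettyref{prop: client dis}, the second by Lemma~\prettyref{lmm: consensus z without pseudo} — so what remains is to combine these bounds and then unwind them against Theorem~\prettyref{thm: z bar rate}.

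\textbf{Steps.} First I would invoke~\eqref{eq: x bar consensus}. Second, I would insert the bound of Proposition~\prettyref{prop: client dis}; its right-hand side still contains the auxiliary consensus error, so I would then insert Lemma~\prettyref{lmm: consensus z without pseudo}, whose right-hand side contains the approximation error only through strictly lower-order terms. Under the learning-rate conditions~\eqref{eq: lr condition main text} the coefficient of the self-referential term is forced strictly below one, so this finite chain of substitutions closes and produces~\eqref{eq: approx and consensus for z}: up to the additive $\sigma^2/\delta^2$ and $\zeta^2$ contributions, both the approximation error and the auxiliary consensus error are $\asymp\frac{1}{T}\sum_{t=0}^{T-1}\expect{\norm{\nabla F(\bar{\bz}^t)}^2}$. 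Third, I would substitute Theorem~\prettyref{thm: z bar rate} to replace $\frac{1}{T}\sum_t\expect{\norm{\nabla F(\bar{\bz}^t)}^2}$ by its explicit upper bound. Finally, since $\bz_i^0=\x_i^0$ and hence $\bar{\bz}^0=\bar{\x}^0$, I would rewrite $F(\bar{\bz}^0)$ as $F(\bar{\x}^0)$ and group the resulting terms by asymptotic order, using the convention $\eta_g\ge 1$; this yields exactly the claimed bound. Equivalently — and this is the shortest phrasing — one may simply note that~\eqref{eq: x consensus} asserts $\frac{1}{mT}\sum_t\sum_i\expect{\norm{\x_i^t-\bar{\x}^t}^2}\asymp\frac{1}{T}\sum_t\expect{\norm{\nabla F(\bar{\x}^t)}^2}$, so the corollary is immediate from Corollary~\prettyref{cor: x bar without pseudo}.

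\textbf{Main obstacle.} Everything downstream of the decomposition is routine bookkeeping of constants; the one delicate point is Step two, where the approximation error and the auxiliary consensus error are mutually coupled. I expect the main care to lie in checking that the learning-rate conditions~\eqref{eq: lr condition main text} — in particular the $(1-\sqrt{\rho})$ factor and the $\sqrt{(\beta^2+1)(1+L^2)}$ factor — are tuned so that the round-trip coefficient, of order $(\eta_l^2\eta_g^2 s^2 L^2/\delta^2)\cdot(\rho/(1-\sqrt{\rho})^2)$ and its siblings, stays strictly below one (so the recursion is solvable), while still leaving the spectral-gap quantity $\rho/(1-\sqrt{\rho})^2$ and the constant $1+L^2$ visible in the final bound. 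Since precisely this calibration is already performed inside the proofs of Proposition~\prettyref{prop: client dis}, Lemma~\prettyref{lmm: consensus z without pseudo} and Theorem~\prettyref{thm: z bar rate}, the present proof only needs to assemble those pieces.
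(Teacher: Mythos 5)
Your proposal is correct and follows essentially the same route as the paper's proof: decompose $\norm{\x_i^t-\bar{\x}^t}^2$ via the auxiliary sequence into the approximation error $\norm{\x_i^t-\bz_i^t}^2$, the auxiliary consensus error $\norm{\bz_i^t-\bar{\bz}^t}^2$, and $\norm{\bar{\bz}^t-\bar{\x}^t}^2$ (the last absorbed into the first by Jensen), then plug in Proposition~\ref{prop: client dis}, Lemma~\ref{lmm: consensus z without pseudo}, and Theorem~\ref{thm: z bar rate}, closing the self-referential coupling with the learning-rate conditions~\eqref{eq: lr condition main text} and using $\bar{\bz}^0=\bar{\x}^0$. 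No gaps.
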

\begin{proof}[\bf Proof of \prettyref{cor: x consensus without pseudo}]
\begin{align*}
&\frac{1}{T} \sum_{t=0}^{T-1}
\frac{1}{m} \sum_{i=1}^m 
\norm{\x_i^t - \bar{\x}^t}^2
=
\frac{1}{T} \sum_{t=0}^{T-1}
\frac{1}{m} \sum_{i=1}^m 
\norm{\x_i^t - \bz_i^t + \bz_i^t - \bar{\bz}^t + \bar{\bz}^t - \bar{\x}^t}^2 \\
&\overset{(a)}{\le}
\frac{1}{T} \sum_{t=0}^{T-1}
\frac{3}{m} \sum_{i=1}^m 
\norm{\x_i^t - \bz_i^t}^2
+
\frac{1}{T} \sum_{t=0}^{T-1}
\frac{3}{m} \sum_{i=1}^m 
\norm{\bz_i^t - \bar{\bz}^t}^2
+
\frac{1}{T} \sum_{t=0}^{T-1}
3
\norm{\bar{\bz}^t - \bar{\x}^t}^2 \\
&\overset{(b)}{\le}
\frac{1}{T} \sum_{t=0}^{T-1}
\frac{3}{m} \sum_{i=1}^m 
\norm{\x_i^t - \bz_i^t}^2
+
\frac{1}{T} \sum_{t=0}^{T-1}
\frac{3}{m} \sum_{i=1}^m 
\norm{\bz_i^t - \bar{\bz}^t}^2
+
\frac{1}{T} \sum_{t=0}^{T-1}
\frac{3}{m} \sum_{i=1}^m 
\norm{\bz_i^t - \x_i^t}^2 \\
&=
\frac{1}{T} \sum_{t=0}^{T-1}
\frac{6}{m} \sum_{i=1}^m 
\norm{\x_i^t - \bz_i^t}^2
+
\frac{1}{T} \sum_{t=0}^{T-1}
\frac{3}{m} \sum_{i=1}^m 
\norm{\bz_i^t - \bar{\bz}^t}^2
,
\end{align*}
where inequalities $(a)$ and $(b)$ follow from Jensen's inequality.
Plug in Proposition \ref{prop: client dis} and take expectation over all the randomness, we get
\begin{align*}
&\frac{1}{T} \sum_{t=0}^{T-1}
\frac{1}{m} \sum_{i=1}^m 
\expect{\norm{\x_i^t - \bar{\x}^t}^2}
\le
\frac{36 \eta_l^2 \eta_g^2 s^2 }{\delta^2}
\pth{\beta^2 + 1}
\frac{1}{T}\sum_{t=0}^{T-1} 
\expect{\norm{\nabla F(\bar{\bz}^{t})}^2} \\
&~~~+
\frac{36 \eta_l^2 \eta_g^2 s^2 }{\delta^2}
\zeta^2
+
\pth{3 + \frac{36 \eta_l^2 \eta_g^2 s^2 
L^2}{\delta^2}} %
\frac{1}{m}
\sum_{i=1}^m \frac{1}{T}\sum_{t=0}^{T-1} 
\expect{\norm{\bz_i^{t} - \bar{\bz}^t}^2} \\
&\le
\frac{36 \eta_l^2 \eta_g^2 s^2 }{\delta^2}
\pth{\beta^2 + 1}
\frac{1}{T}\sum_{t=0}^{T-1} 
\expect{\norm{\nabla F(\bar{\bz}^{t})}^2}
+
\frac{36 \eta_l^2 \eta_g^2 s^2 }{\delta^2}
\zeta^2 \\
&~~~+
\frac{4}{m}
\sum_{i=1}^m \frac{1}{T}\sum_{t=0}^{T-1} 
\expect{\norm{\bz_i^{t} - \bar{\bz}^t}^2},
\end{align*}
where the last inequality holds because of learning rate condition in~\eqref{eq: lr condition main text}.
Next, plug in~\prettyref{lmm: consensus z without pseudo}:
\begin{align*}
&\frac{1}{T} \sum_{t=0}^{T-1}
\frac{1}{m} \sum_{i=1}^m 
\expect{\norm{\x_i^t - \bar{\x}^t}^2}
\le
\frac{36 \eta_l^2 \eta_g^2 s^2 }{\delta^2}
\pth{\beta^2 + 1}
\frac{1}{T}\sum_{t=0}^{T-1} 
\expect{\norm{\nabla F(\bar{\bz}^{t})}^2} \\
&~~~+
\frac{36 \eta_l^2 \eta_g^2 s^2 }{\delta^2}
\zeta^2
+
\frac{4}{m}
\sum_{i=1}^m \frac{1}{T}\sum_{t=0}^{T-1} 
\expect{\norm{\bz_i^{t} - \bar{\bz}^t}^2} \\
&\le
\frac{36 \eta_l^2 \eta_g^2 s^2 }{\delta^2}
\pth{\beta^2 + 1}
\frac{1}{T}\sum_{t=0}^{T-1} 
\expect{\norm{\nabla F(\bar{\bz}^{t})}^2}
+
\frac{1}{4 T}\sum_{t=0}^{T-1}\expect{\norm{\nabla F(\bar{\bz}^t)}^2} \\
&~~~
+
\frac{36 \eta_l^2 \eta_g^2 s^2 }{\delta^2}
\zeta^2
+
\frac{12 \rho s \eta_l^2 \eta_g^2}{(1 - \sqrt{\rho})^2 \delta^2} 
\sigma^2 
+
\frac{160 \rho s^2 \eta_l^2 \eta_g^2}{(1 - \sqrt{\rho})^2}
\zeta^2 \\
&\le
\frac{1}{2 T}\sum_{t=0}^{T-1} 
\expect{\norm{\nabla F(\bar{\bz}^{t})}^2}
+
\frac{12 \rho s \eta_l^2 \eta_g^2}{(1 - \sqrt{\rho})^2 \delta^2} 
\sigma^2 
+
\frac{36 \eta_l^2 \eta_g^2 s^2 }{\delta^2}
\zeta^2
+
\frac{160 \rho s^2 \eta_l^2 \eta_g^2}{(1 - \sqrt{\rho})^2}
\zeta^2
.
\end{align*}
Finally,
we plug in~\prettyref{thm: z bar rate}
\begin{align*}
    &\frac{1}{T} \sum_{t=0}^{T-1}
    \frac{1}{m} \sum_{i=1}^m 
    \expect{\norm{\x_i^t - \bar{\x}^t}^2}
    \le
    \frac{3\pth{F(\bar{\x}^0) - F^\star}}{\eta_l \eta_g s T} 
    +\frac{12 \eta_l \eta_g L {\delta_{\max}} \sigma^2}{m \delta^2} 
    +
    \frac{28 s^2 \eta_l^2 \eta_g^2 L^2}{\delta^2 (1 - \sqrt{\rho})^2} 
    \sigma^2
    + 
    \frac{
    1600
    \eta_l^2 \eta_g^2 s^2 L^2}{\delta^2 (1 - \sqrt{\rho})^2}
    \zeta^2
    ,
\end{align*}
where we use the fact that $\bar{\bz}^0 = \bar{\x}^0$ and $\rho < 1$, 
and the convention that $\eta_g \ge 1$ and $L \ge 1$ for ease of presentation.

In terms of asymptotics,
we have
\begin{align*}
    \frac{1}{T} \sum_{t=0}^{T-1}
    \frac{1}{m} \sum_{i=1}^m 
    \expect{\norm{\x_i^t - \bar{\x}^t}^2}
    &\lesssim
    \frac{\pth{F(\bar{\x}^0) - F^\star}}{\eta_l \eta_g s T}
    +\frac{\eta_l \eta_g L \sigma^2}{ m} \frac{\delta_{\max}}{\delta^2}
    +
    \eta_l^2 \eta_g^2 s^2 L^2
    \pth{\frac{\sigma^2 + \zeta^2}{\delta^2 (1 - \sqrt{\rho})^2}}
    .
\end{align*}
\end{proof}

\subsection{Orders of the asymptotic rates}
From~\prettyref{thm: z bar rate}, 
\prettyref{cor: x bar without pseudo},
\prettyref{cor: x consensus without pseudo},
it is easy to see from the theorem statements that they are all of the same asymptotic order, \ie,
\[
    \frac{1}{T}\sum_{t=0}^{T-1}\mathbb{E}[\| \nabla F (\bar{\x}^t)\|_2^2]
    \asymp
    \frac{1}{T}\sum_{t=0}^{T-1}
    \frac{1}{m} \sum_{i=1}^m
    \mathbb{E}[\| \x_i^t - \bar{\x}^t\|_2^2]
    \asymp
    \frac{1}{T}\sum_{t=0}^{T-1}\mathbb{E}[\| \nabla F (\bar{\bz}^t)\|_2^2].
\]
In addition,
by applying learning rate conditions in~\eqref{eq: lr condition main text} to~\prettyref{lmm: consensus z without pseudo} and~\prettyref{prop: client dis},
we can also see that
\[
    \frac{1}{T}\sum_{t=0}^{T-1}
    \frac{1}{m} \sum_{i=1}^m
    \mathbb{E}[\| \x_i^t - \bz_i^t\|_2^2]
    \asymp
    \frac{1}{T}\sum_{t=0}^{T-1}
    \frac{1}{m} \sum_{i=1}^m
    \mathbb{E}[\| \bz_i^t - \bar{\bz}^t\|_2^2]
    \asymp
    \frac{1}{T}\sum_{t=0}^{T-1}\mathbb{E}[\| \nabla F (\bar{\bz}^t)\|_2^2].
\]
Therefore,
we conclude that~\eqref{eq: approx and consensus for z},~\eqref{eq: x bar consensus} and~\eqref{eq: x bar relation} hold.

\newpage

\section{Numerical Experiments}
\label{app: numerical}
\subsection{Code}
The code for reproducing our experiments is available at \url{https://github.com/mingxiang12/FedAWE}.

\subsection{Experimental setups}
\label{app: numerical setup}
\begin{wrapfigure}[18]{r}{0.35\textwidth} 
\centering
\resizebox{\linewidth}{!}{
\includegraphics[width=\linewidth, trim=0.1cm 0.1cm 0 0.1cm, clip]{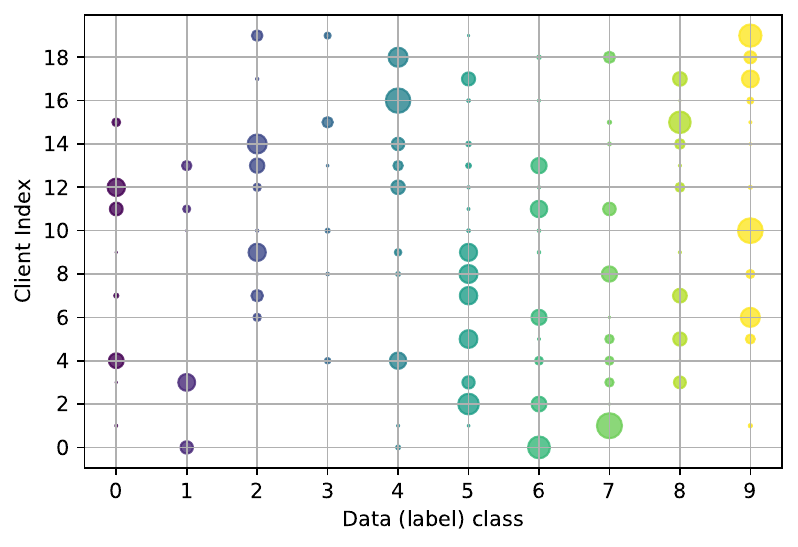}}
\caption{
\footnotesize
An example of data heterogeneity using $\mathsf{Dirichlet}(\alpha=0.1)$ distribution with $20$ clients.
$x$-axis denotes the categories of images, 
while $y$-axis denotes the client index.
The size of a circle refers to the proportion of pictures in a given class.
The color of a circle distinguishes images with different categories.
}
\centering
\label{fig: noniid 20 cifar10}
\end{wrapfigure}
\textbf{Hardware and Software Setups.}
\begin{itemize}[leftmargin=*]
\item{\bf Hardware.}
The simulations are performed on a private cluster with 64 CPUs, 500 GB RAM and 8 NVIDIA A5000 GPU cards.

\item {\bf Software.}
We code the experiments based on PyTorch 1.13.1 \cite{paszke2019pytorch} and Python 3.7.16.
\end{itemize}

\textbf{Neural Network and Hyper-parameter Specifications.}
\label{app: hyperparameter}
Table~\ref{tbl: cnn structures} specifies details of the structures of the convolutional neural network and training.
We initialize CNNs using the Kaiming initialization.
The initial local learning rate $\eta_0$ and the global learning rate $\eta_g$ are searched, 
based on the best performance after $500$ global rounds, 
over two grids $\sth{0.1, 0.05, 0.01, 0.005, 0.001, 0.0005}$ and $\sth{0.5,1,1.5,5,10,50}$, respectively.
The results are presented in Table~\ref{tbl: learning rate}.

The difference between~\FedAvg~over active clients and~\FedAvg~over all clients
is that the latter counts the contributions of unavailable clients as $\bm{0}$'s.
We set $\beta = 0.001$ for~\FAST~\cite{ribero2022federated}, 
which is tuned over a grid of $\sth{0.1, 0.05, 0.01, 0.005, 0.001, 0.0005}$.
In addition,
as recommended by \cite{wang2023lightweight}, 
we choose $K=50$ in~\FedAU~without further specification.
We train CNNs on all datasets for $2000$ rounds.
\prettyref{fig: motivating example non-stationary} adopts the same hyperparameter setups, yet with only $1000$ training rounds.
\begin{table}[!t]
\caption{
\footnotesize
Neural network architecture, loss function, learning rate scheduling, training steps and batch size specifications}
\label{tbl: cnn structures}
\resizebox{\linewidth}{!}{
\begin{tabular}{cccc}
\toprule
{\bf Datasets}& 
{\bf SVHN} & 
{\bf CIFAR-10} & 
{\bf CINIC-10} \\
\toprule
Neural network &
CNN &
CNN &
CNN \\
Model architecture$^*$ & 
\begin{tabular}{p{.18\textwidth}}
\centering
{\bf C}(3,32)
-- {\bf R}
-- {\bf M}
-- {\bf C}(32,32)
-- {\bf R}
-- {\bf M}
-- {\bf L}(128)
-- {\bf R}
-- {\bf L}(10)
\end{tabular}
&
\begin{tabular}{p{.18\textwidth}}
\centering
{\bf C}(3,32)
-- {\bf R}
-- {\bf M}
-- {\bf C}(32,32)
-- {\bf R}
-- {\bf M}
-- {\bf L}(256)
-- {\bf R}
-- {\bf L}(64)
-- {\bf R}
-- {\bf L}(10)
\end{tabular}
&
\begin{tabular}{p{.18\textwidth}}
\centering
{\bf C}(3,32)
-- {\bf R}
-- {\bf M}
-- {\bf C}(32,32)
-- {\bf R}
-- {\bf M}
-- {\bf D}
-- {\bf L}(512)
-- {\bf R}
-- {\bf D}
-- {\bf L}(256)
-- {\bf R}
-- {\bf D}
-- {\bf L}(10)
\end{tabular} \\
\midrule
Loss function &
\multicolumn{3}{c}{Cross-entropy loss} \\
\addlinespace[1ex]
\begin{tabular}{p{.25\textwidth}}
\centering
Local learning rate $\eta_l$ \\ scheduling 
\end{tabular}&
\multicolumn{3}{c}{$\eta_l = \frac{\eta_0}{\sqrt{t/10 + 1}}$, where $t$ denotes the global round.} \\
\addlinespace[1ex]
Number of local steps $s$ &
\multicolumn{3}{c}{10} \\
\addlinespace[1ex]
Number of global rounds $T$ &
\multicolumn{3}{c}{2000} 
\\
\midrule
Batch size &
\multicolumn{3}{c}{128} \\
\bottomrule
\end{tabular}}
\vskip.2\baselineskip
\begin{tabular}{p{.95\textwidth}}
$^*$
\begin{footnotesize}    
{\bf C}(\# in-channel, \# out-channel): a 2D convolution layer (kernel size 3, stride 1, padding 1);
{\bf R}: ReLU activation function;
{\bf M}: a 2D max-pool layer (kernel size 2, stride 2);
{\bf L}: (\# outputs): a fully-connected linear layer;
{\bf D}: a dropout layer (probability 0.2).
\end{footnotesize}
\end{tabular}
\end{table}
\begin{table}[!t]
\caption{Initial learning rate $\eta_0$ and global learning rate $\eta_g$}
\centering
\label{tbl: learning rate}
\resizebox{\linewidth}{!}{
\begin{tabular}{ccccccccccccccccc}
\toprule
\multicolumn{1}{c}{\bf Algorithms} &
\multicolumn{2}{c}{\begin{tabular}{c} \FedAvg \\ {\em active} \end{tabular}}&
\multicolumn{2}{c}{\begin{tabular}{c} \FedAvg \\ {\em known} \end{tabular}}&
\multicolumn{2}{c}{\begin{tabular}{c} \FedAvg \\ {\em all} \end{tabular}}&
\multicolumn{2}{c}{\FedAU}&
\multicolumn{2}{c}{\FAST}&
\multicolumn{2}{c}{\FedAPM}&
\multicolumn{2}{c}{\MIFA} &
\multicolumn{2}{c}{\FedVARP} 
\\
\midrule
\multirow{2}{*}{SVHN}&
$\eta_0$&
$\eta_g$&
$\eta_0$&
$\eta_g$&
$\eta_0$&
$\eta_g$&
$\eta_0$&
$\eta_g$&
$\eta_0$&
$\eta_g$&
$\eta_0$&
$\eta_g$&
$\eta_0$&
$\eta_g$&
$\eta_0$&
$\eta_g$ \\[2pt]
&
0.05 &
1.0 &
0.1 &
1.0 &
0.05 &
1.0 &
0.05 &
1.0 &
0.05 &
1.0 &
0.1 &
1.0 &
0.05 &
1.0 &
0.05 &
1.0 \\
\midrule
\multirow{2}{*}{CIFAR-10}&
$\eta_0$&
$\eta_g$&
$\eta_0$&
$\eta_g$&
$\eta_0$&
$\eta_g$&
$\eta_0$&
$\eta_g$&
$\eta_0$&
$\eta_g$&
$\eta_0$&
$\eta_g$&
$\eta_0$&
$\eta_g$&
$\eta_0$&
$\eta_g$ \\[2pt]
&
0.05 &
1.0 &
0.1 &
1.0 &
0.05 &
1.0 &
0.05 &
1.0 &
0.05 &
1.0 &
0.1 &
1.0 &
0.05 &
1.0 &
0.05 &
1.0 \\
\midrule
\multirow{2}{*}{CINIC-10}&
$\eta_0$&
$\eta_g$&
$\eta_0$&
$\eta_g$&
$\eta_0$&
$\eta_g$&
$\eta_0$&
$\eta_g$&
$\eta_0$&
$\eta_g$&
$\eta_0$&
$\eta_g$&
$\eta_0$&
$\eta_g$&
$\eta_0$&
$\eta_g$ \\[2pt]
&
0.05 &
1.0 &
0.1 &
1.0 &
0.05 &
1.0 &
0.05 &
1.0 &
0.05 &
1.0 &
0.1 &
1.0 &
0.05 &
1.0 &
0.05 &
1.0 \\
\bottomrule
\end{tabular}}
\end{table}

\textbf{Datasets and Data Heterogeneity.}

\textit{Datasets.}
All the datasets we evaluate contain 10 classes of images.
Some data enhancement tricks that are standard in training image classifiers are applied during training.
Specifically,
we apply random cropping and gradient clipping with a max norm of 0.5 to all dataset trainings.
Furthermore, random horizontal flipping is applied to CIFAR-10 and CINIC-10.

One full set of experiments 
takes about 6 hours on SVHN and CIFAR-10 datasets, 
while about 10 hours on CINIC-10 dataset.
\begin{itemize}[leftmargin=*]
\item 
{\bf SVHN \cite{netzer2011readingdigits}.}
The dataset contains 32$\times$32 colored images of 10 different number digits.
In total,
there are 73257 train images and 26032 test images.
\item
{\bf CIFAR-10 \cite{krizhevsky2009learning}.}
The dataset contains 32$\times$32 colored images of 10 different objects.
In total,
there are 50000 train images and 10000 test images.
\item
{\bf CINIC-10\cite{darlow2018cinic}.}
The dataset contains 32$\times$32 colored images of 10 different objects.
In total,
there are 90000 train images and 90000 test images.
\end{itemize}

\textit{Data heterogeneity.}
\prettyref{fig: noniid 20 cifar10} visualizes an example of 20 clients, 
the size of each circle corresponds to the relative proportion of images from a specific class. 
The larger the circle, the greater the share of images associated with that particular class.
Moreover,
$\alpha$ controls the heterogeneity of the data such that a greater $\alpha$ entails a more non-\iid local data distribution and vice versa.

\subsection{Non-stationary client unavailability dynamics}
\label{app: unavail dynamics}
\begin{wrapfigure}[11]{r}{0.4\textwidth} 
\vspace*{-\baselineskip}
\centering
\includegraphics[width=.9\linewidth, trim = .1cm .2cm 0 .1cm, clip]{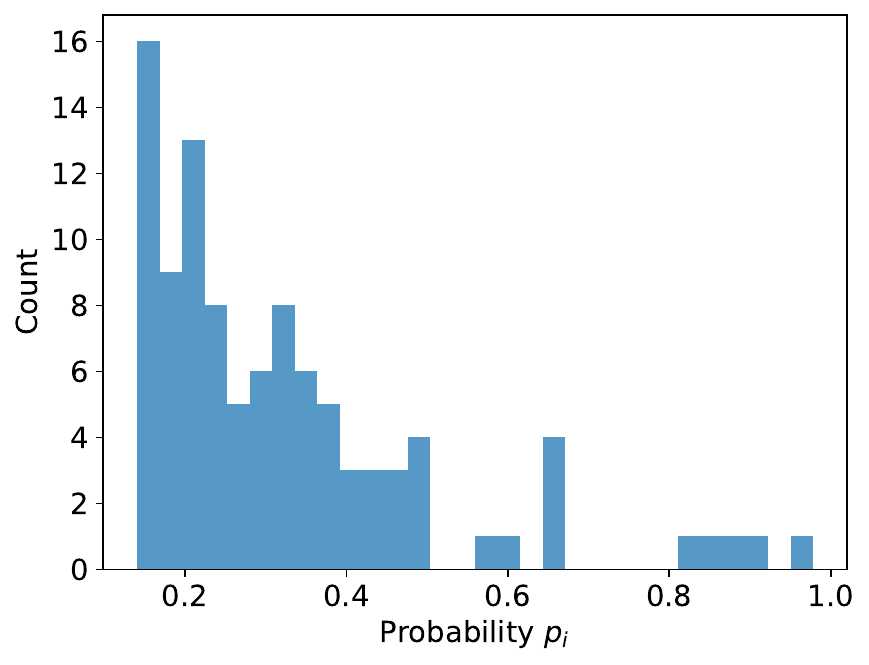}
\caption{
\footnotesize
A histogram of one generated $p_i$'s example
with a total of $m = 100$ clients.
It can be seen that the majority of $p_i$'s are below $0.5$.
}
\label{fig: heterogeneous pit example}%
\end{wrapfigure}
\textbf{Client unavailability dynamics and visualizations.}
As specified in~\prettyref{sec: numerical},
we consider a total of four client unavailable dynamics in the form of $p_i^t = p_i \cdot f_i(t)$,
where $p_i = \iprod{\nu_i}{\phi}$,
$\nu_i \sim \mathsf{Dirichlet} (\alpha)$
and $\phi$ is the distribution to characterize the uneven contributions of each image class.
In detail,
each element $[\phi]_c$ is drawn from a uniform distribution $\mathsf{Uniform} (0, \bm{\Phi}_c)$.
We set $\bm{\Phi}_c = 1$ for the first five image classes and $\bm{\Phi}_{c^\prime} = 0.5$ for the remaining five image classes.
\prettyref{fig: heterogeneous pit example} plots one resulting $p_i$'s example,
wherein $p_i$'s are heterogeneous across clients.

Next,
we formally introduce $f_i(t)$'s under each dynamic.
\begin{itemize}[leftmargin=*]
    \item Stationary: 
    $f_i(t) \triangleq 1$;
    \item Non-stationary with staircase trajectory:
    \[
        f_i(t) \triangleq 
        \indc{t \in [t_0, t_0+P/2)} 
        + 0.4 \cdot \indc{t \in [t_0+P/2, t_0+P)},
    \]
    where $P$ defines a period, $t_0 \in \{0, P, 2P, 3P, \ldots \}$.
    \item Non-stationary with sine trajectory: 
    \[
        f_i(t) \triangleq 
        \gamma \sin (2 \pi/ P \cdot t) + (1 - \gamma),
    \]
    where $\gamma$ signifies the degree of non-stationary.
    \item Non-stationary with interleaved sine trajectory:
    \[
        f_i(t) \triangleq 
        g_i(t) \cdot \indc{p_i \cdot g_i(t) \ge \delta_0},
    \]
    where $g_i(t) \triangleq \gamma \sin (2 \pi/ P \cdot t) + (1 - \gamma)$
    and $\delta_0 = 0.1$ defines a cutting-off lower bound.
    Specifically,
    $\delta_0$ cuts off the sine curve and brings in a period of zero-valued probabilities.
    As different clients have different $p_i$'s,
    the cut-off points are not synchronized among clients,
    leading to additional availability heterogeneity.
\end{itemize}
We choose $\gamma = 0.3$ and $P = 20$ for all non-stationary dynamics.
Next,
we visualize the probability trajectories along with sampled client availability in~\prettyref{fig: dynamics visualization}.
The plots confirm the intuition that
interleaved dynamics is the most difficult one,
\eg, 
no clients are available in the case of $0.1$ therein.

\begin{figure}
    \centering
    \begin{subfigure}[b]{0.48\textwidth}
    \includegraphics[width=\linewidth]{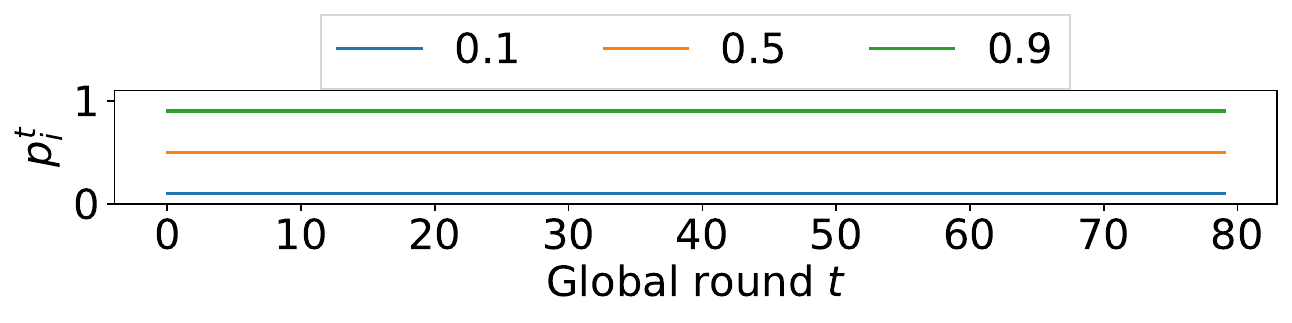}
    \includegraphics[width=\linewidth]{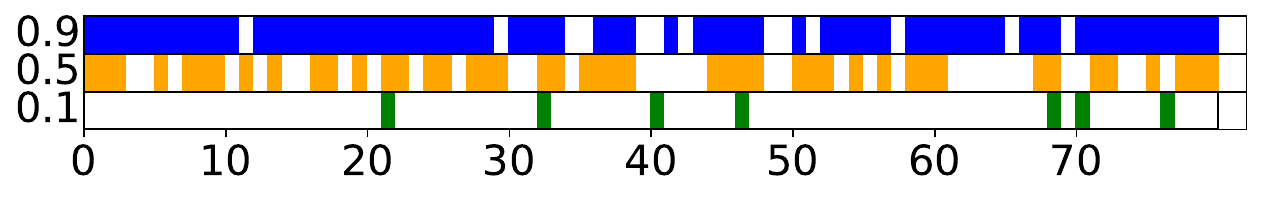}
    \caption{Stationary}
    \label{fig: constant dynamics}
    \end{subfigure}
    \begin{subfigure}[b]{0.48\textwidth}
    \includegraphics[width=\linewidth]{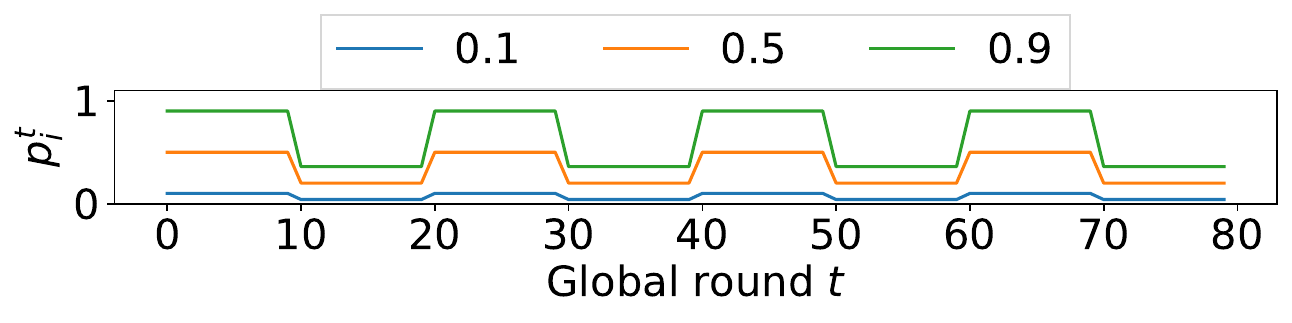}
    \includegraphics[width=\linewidth]{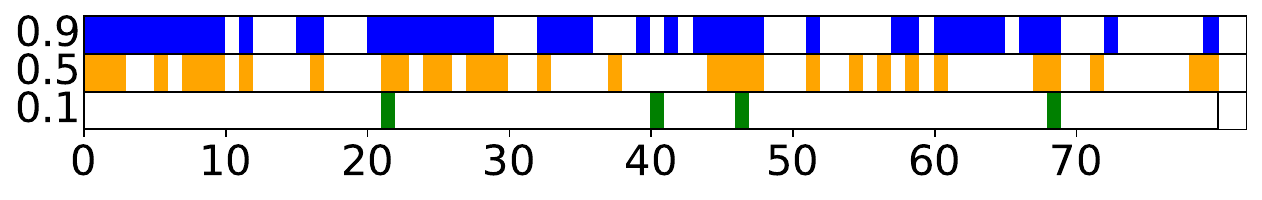}
    \caption{{\bf Non}-stationary with staircase trajectory}
    \label{fig: staircase dynamics}
    \end{subfigure}
    \begin{subfigure}[b]{0.48\textwidth}
    \includegraphics[width=\linewidth]{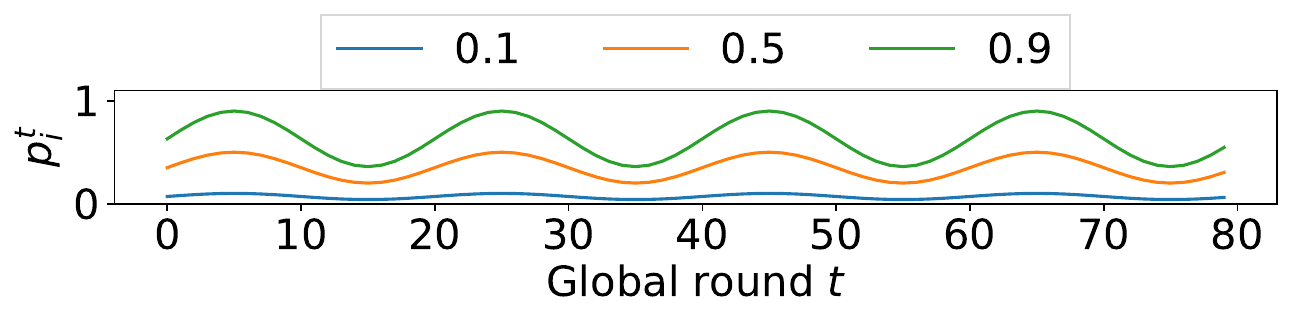}
    \includegraphics[width=\linewidth]{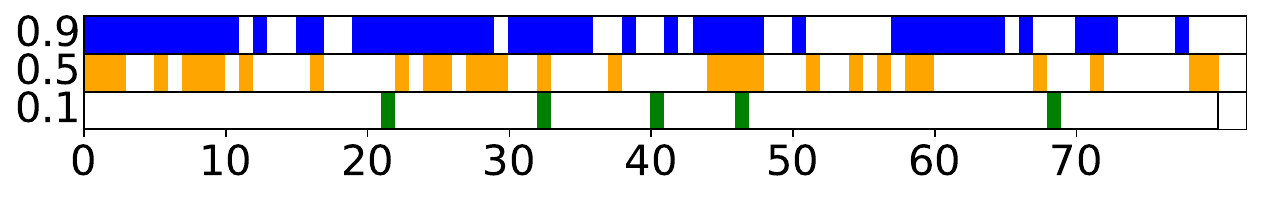}
    \caption{{\bf Non}-stationary with sine trajectory}
    \label{fig: sine dynamics}
    \end{subfigure}
    \begin{subfigure}[b]{0.48\textwidth}
    \includegraphics[width=\linewidth]{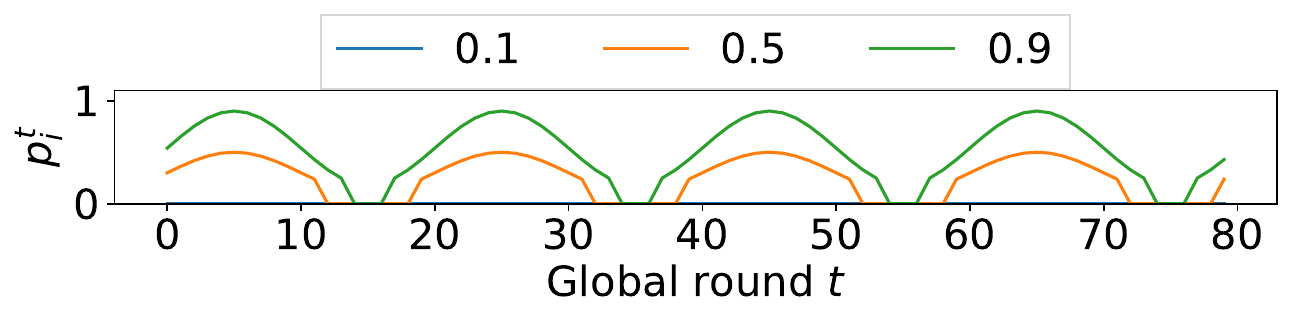}
    \includegraphics[width=\linewidth]{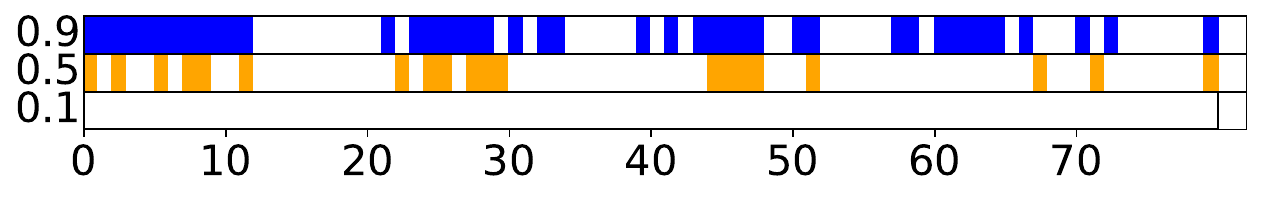}
    \caption{{\bf Non}-stationary with interleaved sine trajectory}
    \label{fig: int sine dynamics}
    \end{subfigure}
    \caption{
    Examples of client unavailability with probabilistic trajectories.
    The first row in each sub-figure plots the probabilistic trajectory of each dynamics.
    The second row visualizes the simulated client availability 
    by using a colored box to denote a client is available in that round.
    The y-axis is the base probability $p_i$ to construct $p_i^t$.
    In other words,
    more blank space means that a client is more scarcely available.
    We simulate the cases where $p_i \in \sth{0.1, 0.5, 0.9}$.
    The detailed construction of $p_i^t$ can be found in~\prettyref{app: unavail dynamics}}
    \label{fig: dynamics visualization}
\end{figure}

\subsection{Additional results}
\begin{table}[!t]
\centering
\caption{\footnotesize 
The first round to reach a targeted test accuracy under 
non-stationary of sine trajectory
over 3 random seeds.
We study the first round to reach $1/4$, $1/2$, $3/4$ and $1$ of the best test accuracy of each dataset in Table~\ref{tab: exp main text},
which is rounded up to the nearest $10\%$ below for ease of presentation.
In addition,
we sample the mean of test accuracy every 20 global rounds to mitigate %
noisy progress. 
Some algorithms may never attain the targeted accuracy
due to their inferior performance,
where we use ``--'' as a placeholder.
}
\label{tab: slowdown supp}
\resizebox{\linewidth}{!}{
\begin{tabular}{c|cccc|cccc|cccc}
\toprule
{\bf Datasets}&
\multicolumn{4}{|c|}{\bf
SVHN
}&
\multicolumn{4}{|c|}{\bf
CIFAR10
}
&
\multicolumn{4}{|c}{\bf
CINIC10
}
\\
\midrule
{\bf Quarters}&
{\bf 1/4}&
{\bf 1/2}&
{\bf 3/4}&
{\bf 1}&
{\bf 1/4}&
{\bf 1/2}&
{\bf 3/4}&
{\bf 1}&
{\bf 1/4}&
{\bf 1/2}&
{\bf 3/4}&
{\bf 1}
\\
\midrule
{\bf Test accuracy}&
{\bf $20\%$}&
{\bf $40\%$}&
{\bf $60\%$}&
{\bf $80\%$}&
{\bf $15\%$}&
{\bf $30\%$}&
{\bf $45\%$}&
{\bf $60\%$}&
{\bf $10\%$}&
{\bf $20\%$}&
{\bf $30\%$}&
{\bf $40\%$} \\
\midrule
\FedAPM~({\bf ours})& 
40 &
120 &
200 &
820&
20 &
60 &
200 &
1360 &
0 &
20 &
120 &
540 
\\
\FedAvg~over {\em active} clients& 
20 &
80 &
160 &
900 &
10 &
20 &
120 &
1060&
0 &
20 &
40 &
800
\\
\FedAvg~over {\em all} clients& 
100 &
420 &
960 &
-- &
20 &
60 &
520 &
--&
0 &
20 &
200 &
--
\\ 
\FedAU & 
60 &
100 &
160 &
840 &
10 &
20 &
100 &
960&
0 &
20 &
80 &
460
\\
\FAST & 
40 &
120 &
200 &
1080&
20 &
40 &
160 &
1300&
0 &
20 &
60 &
540
\\
\noalign{\vspace{.5mm}}
\midrule
\noalign{\vspace{.5mm}}
\FedAvg~with {\em known} $p_i^t$'s 
& 
20 &
40 &
100 &
320 &
10 &
20 &
140 &
620 &
0 &
20 &
40 &
400
\\
\MIFA~({\em memory aided}) & 
20 &
80 &
140&
600 &
10 &
20 &
80&
700 &
0 &
20 &
40&
240 
\\
\bottomrule
\end{tabular}}
\end{table}

\textbf{Staleness studies.}
\prettyref{tab: slowdown supp} 
illustrates the first round to reach a targeted test accuracy under non-stationary client availability with sine trajectory.
Specifications can be found in the caption.
It can be easily checked that,
during the initial stage (the first three quarters),
\FedAPM~slightly lags behind~\FedAvg~over active
clients.
However,
when reaching the final stage (the last quarter),
\FedAPM~attains the target accuracy in a comparable or lower number of rounds to~\FedAvg~over active clients in the evaluations on SVHN and CINIC-10 datasets.
The slowdown of~\FedAPM~on CIFAR-10 dataset is worth further investigation.
In general, we arrive numerically at the conclusion that
the staleness incurred by implicit gossiping in~\FedAPM~is mild.

\textbf{Training curves.}
In this part,
we show the training curves of~\FedAvg~over active clients,~\FedAPM~and~\MIFA.
In particular,
the presented results of~\FedAPM~are after
exponential moving average
\cite{busbridge2024scale} under a parameter $0.99$.
Note that this is to ease down the noisy progress,
and for a neat presentation only,
the reported results in the main text and ablation studies are all from raw data.
\prettyref{fig: svhn train details coarse} plots the train loss and test accuracy from raw data.
For example,
when compared with~\prettyref{fig: svhn train details refined},
EMA eases down the fluctuations but does not change either the trend or the order of algorithm performance results.
All train losses are plotted on a logarithmic scale.
The results are consistent with~\prettyref{tab: exp main text}.
\begin{figure}[!t]
    \centering
    \begin{subfigure}[b]{\linewidth}
        \includegraphics[width=.5\linewidth]{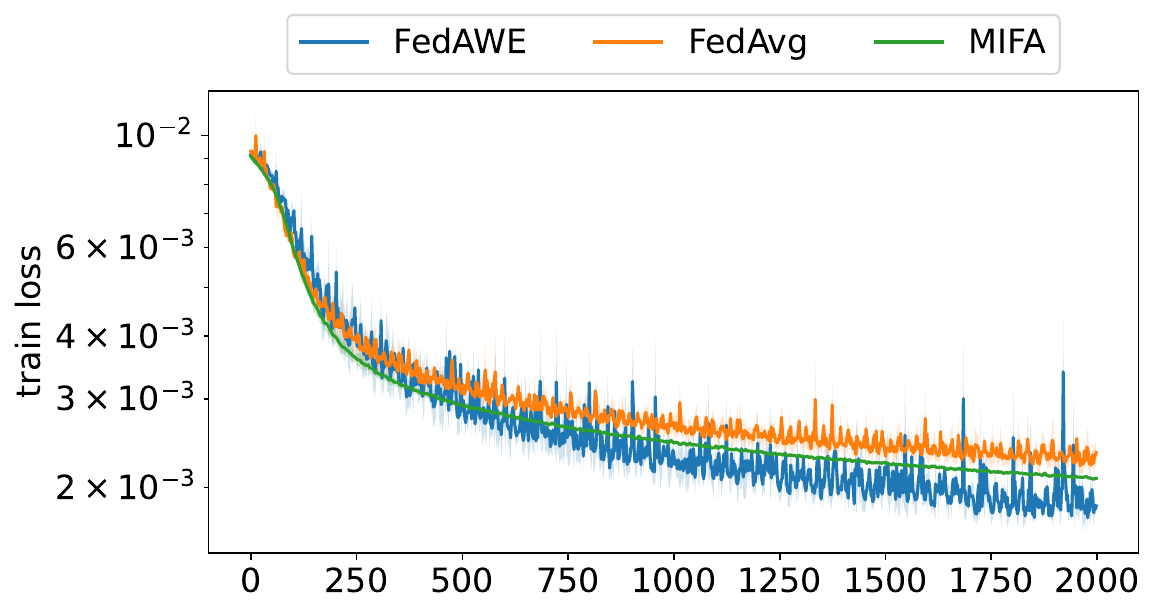}
        \includegraphics[width=.48\linewidth]{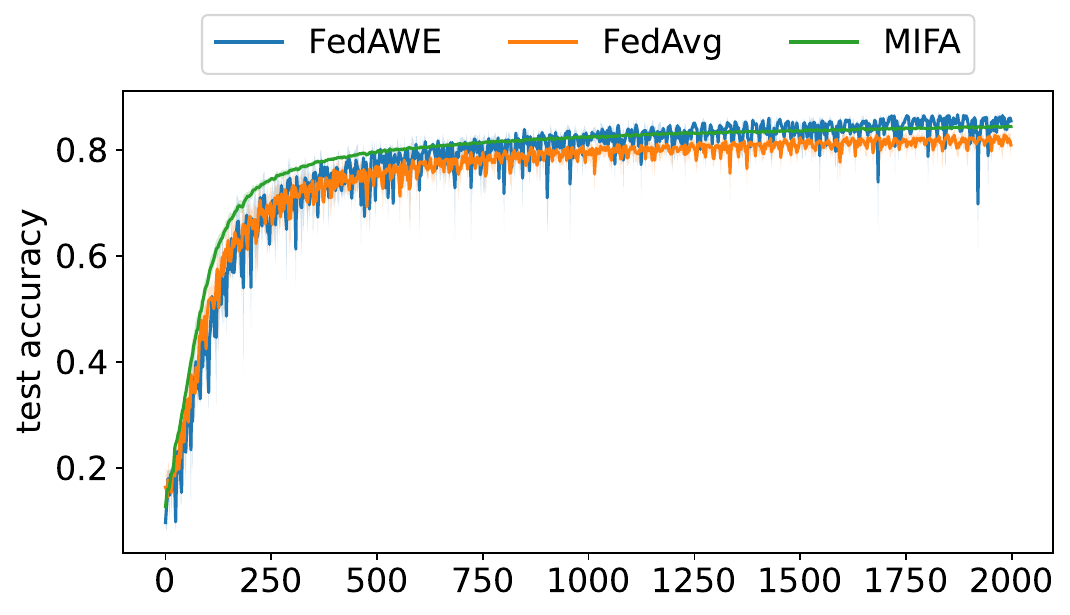}
        \caption{Evaluation results on SVHN dataset without exponential moving average}
        \label{fig: svhn train details coarse}
    \end{subfigure}
    \begin{subfigure}[b]{\linewidth}
        \includegraphics[width=.5\linewidth]{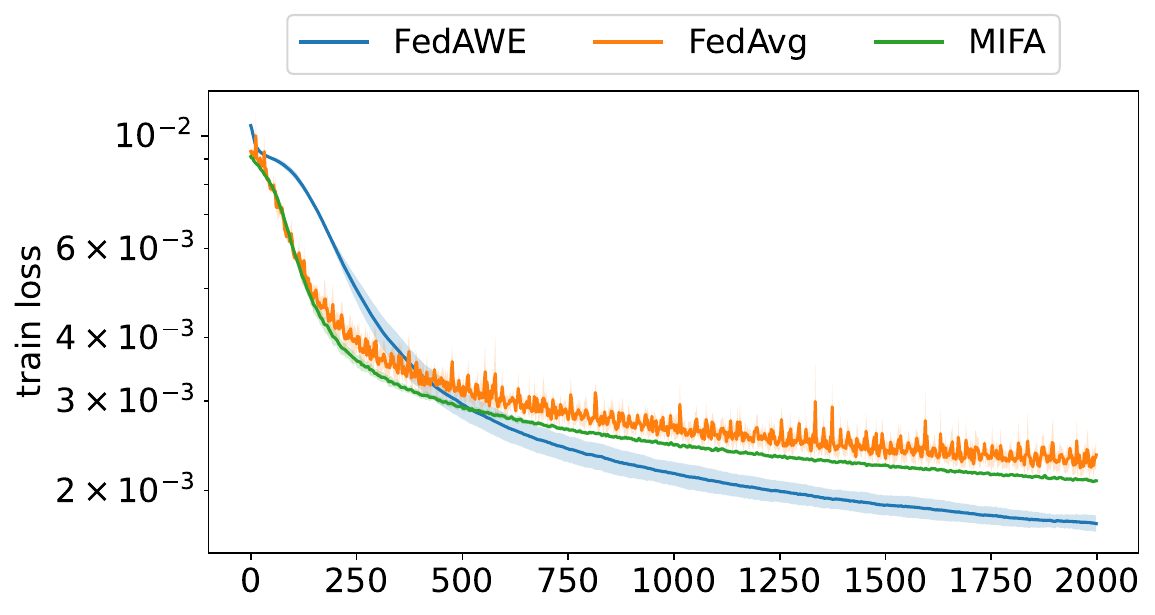}
        \includegraphics[width=.48\linewidth]{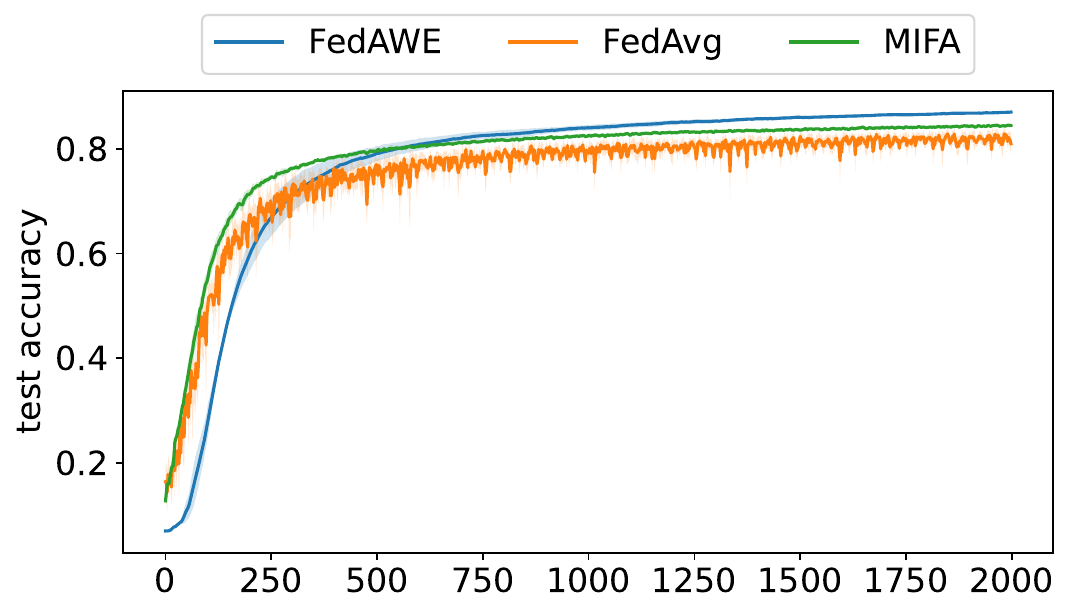}
        \caption{Evaluation results on SVHN dataset}
        \label{fig: svhn train details refined}
    \end{subfigure}
    \begin{subfigure}[b]{\linewidth}
        \includegraphics[width=.5\linewidth]{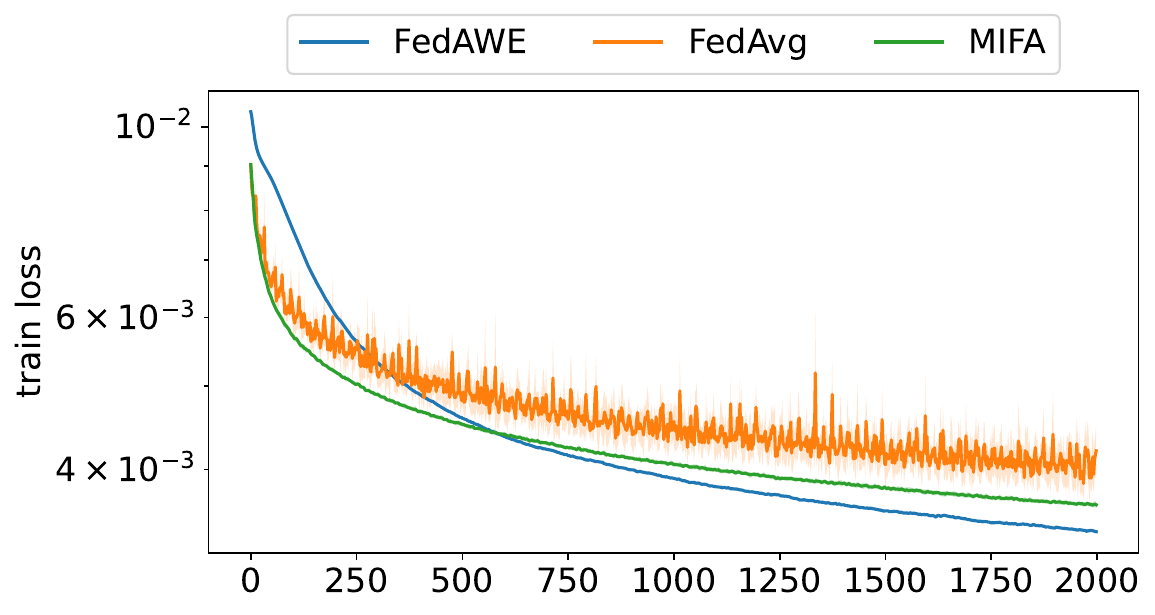}
        \includegraphics[width=.48\linewidth]{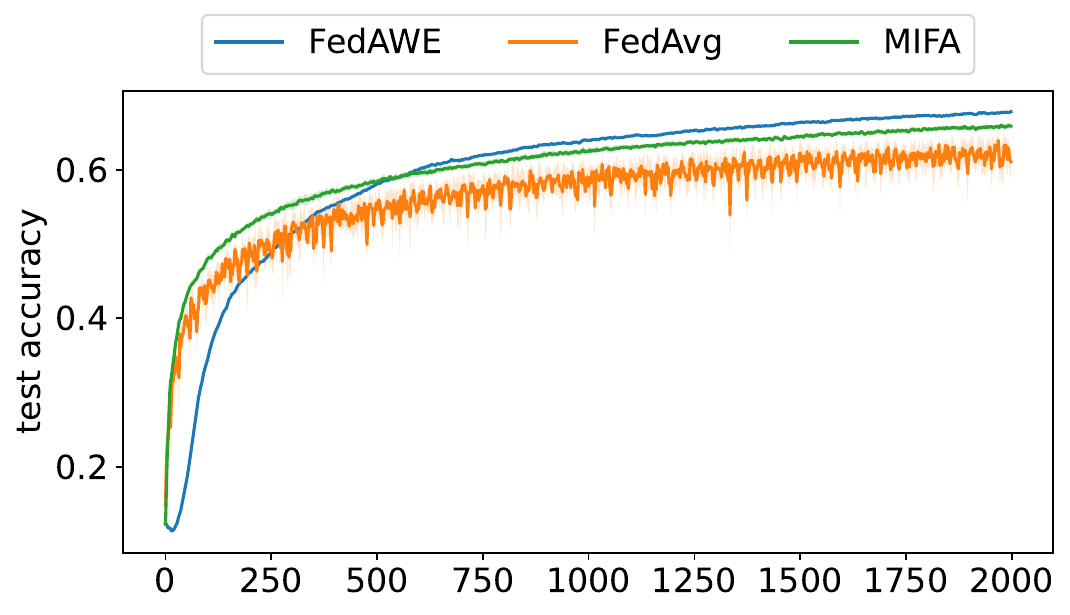}
        \caption{Evaluation results on CIFAR10 dataset}
        \label{fig: cifar10 train details refined}
    \end{subfigure}
    \begin{subfigure}[b]{\linewidth}
        \includegraphics[width=.5\linewidth]{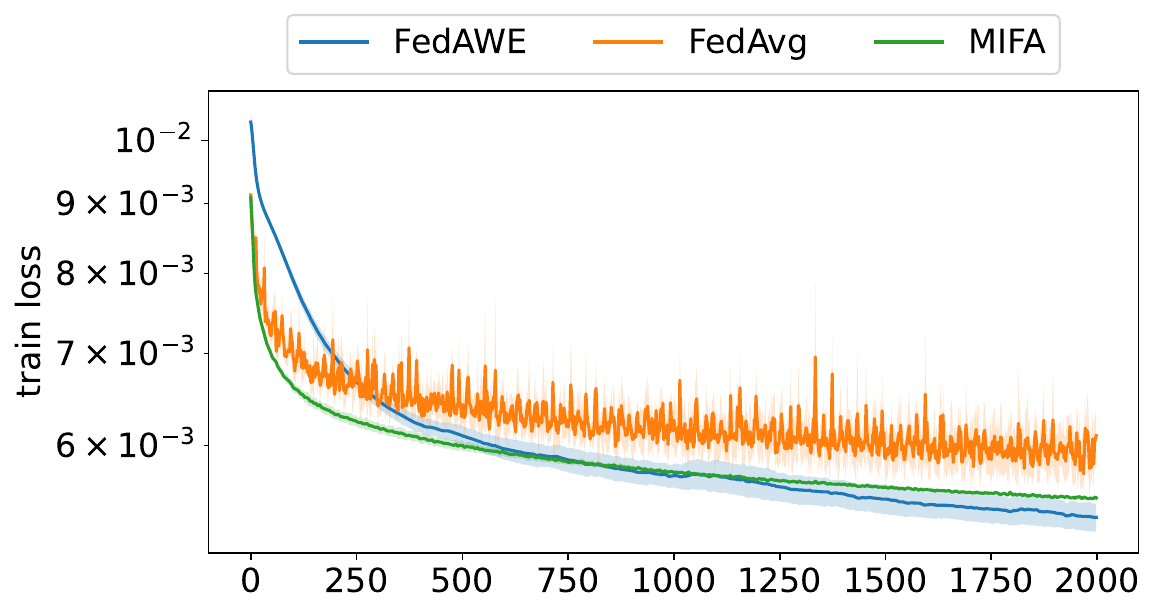}
        \includegraphics[width=.48\linewidth]{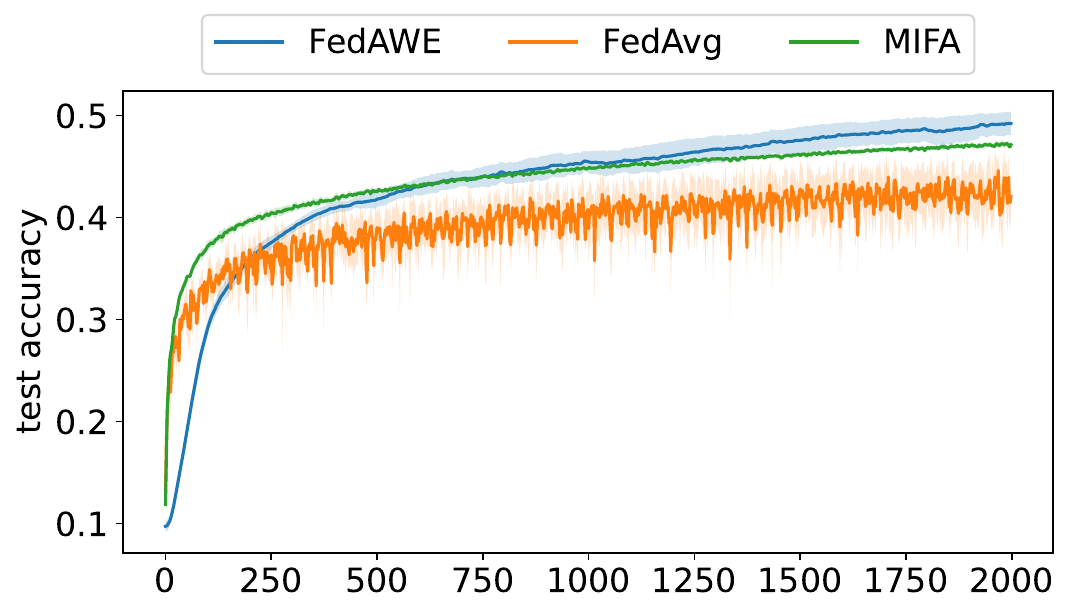}
        \caption{Evaluation results on CINIC10 dataset}
        \label{fig: cinic10 train details refined}
    \end{subfigure}
    \caption{Missing training curves under non-stationary client unavailability dynamics with sine curve}
    \label{fig: app train curves}
\end{figure}

\begin{table}[!t]
\centering
\caption{\footnotesize 
Results after different parameter $\gamma$.
$ p_i^t = p_i \cdot ( \gamma \sin(2 \pi / P  \cdot t) + (1 - \gamma))$.}
\label{tab: exp supp flu gamma}
\resizebox{\linewidth}{!}{
\begin{footnotesize}
\begin{tabular}{c|c|p{1.7cm} p{1.7cm}|p{1.7cm} p{1.7cm}|p{1.7cm} p{1.7cm}}
    \toprule
    \multirow{2}{*}{\begin{tabular}{@{}c@{}}{\bf Unavailable} \\ {\bf Dynamics} \end{tabular}} &
    {\bf Datasets} &
    \multicolumn{2}{c|}{\bf $\gamma = 0.3$} & 
    \multicolumn{2}{c|}{\bf $\gamma = 0.2$} & 
    \multicolumn{2}{c}{\bf $\gamma = 0.1$} \\
    \cline{2-8}
    & 
    {\bf Algorithms}&
    \multicolumn{1}{c}{\bf Train} &
    \multicolumn{1}{c|}{\bf Test} &
    \multicolumn{1}{c}{\bf Train} &
    \multicolumn{1}{c|}{\bf Test} &
    \multicolumn{1}{c}{\bf Train} &
    \multicolumn{1}{c}{\bf Test} \\
    \hline
    \multirow{6}{*}{
\begin{tabular}{@{}c@{}} 
\addlinespace[1ex]
{\bf Non}-stationary  \\ 
({\bf Sine})\\
\adjustbox{width=0.12\linewidth}{\includestandalone{elements/figure/table_fig/sine_curve}}
\end{tabular}} 
& 
\FedAPM~({\bf ours})& 
{\bf 85.7} $\pm$ 0.9 \%&
{\bf 85.6} $\pm$ 0.9 \%&
{\bf 85.7} $\pm$ 0.5 \%&
{\bf 85.7} $\pm$ 0.5 \%&
{\bf 85.8} $\pm$ 0.6 \%&
{\bf 85.7} $\pm$ 0.7 \%
\\
& 
\FedAvg~over {\em active} & 
82.1  $\pm$ 1.1 \%&
82.0  $\pm$ 1.3 \%&
82.0  $\pm$ 1.2 \%&
81.9  $\pm$ 1.2 \%&

82.3  $\pm$ 0.9 \%&
82.2  $\pm$ 1.0 \%
\\
& 
\FedAvg~over {\em all} & 
71.3 $\pm$ 2.5 \%&
71.3 $\pm$ 2.8 \%&
73.2  $\pm$ 2.5 \%&
73.2  $\pm$ 2.8 \%&
74.0  $\pm$ 2.1 \%&
74.9  $\pm$ 2.4 \%
\\
& 
\FedAU & 
\underline{82.5} $\pm$ 1.4 \%&
\underline{82.5} $\pm$ 1.3 \%&
\underline{83.5}  $\pm$ 0.3 \%&
\underline{83.4}  $\pm$ 0.4 \%&
\underline{83.7}  $\pm$ 0.3 \%&
\underline{83.6}  $\pm$ 0.3 \%
\\
& 
\FAST & 
82.3  $\pm$ 1.0 \%&
82.3  $\pm$ 1.0 \%&
82.3 $\pm$ 0.9 \%&
82.6  $\pm$ 0.8 \%&
82.9  $\pm$ 0.7 \%& 
82.9 $\pm$ 0.6 \%
\\
\noalign{\vspace{.5mm}}
\cline{2-8}
\noalign{\vspace{.5mm}}
& 
\FedAvg~with {\em known} $p_i^t$'s  & 
86.3  $\pm$ 1.0 \%&
86.0  $\pm$ 1.0 \%&
86.2  $\pm$ 1.2 \%&
86.0  $\pm$ 1.4 \%&
86.4  $\pm$ 0.9 \%&
86.0  $\pm$ 0.8 \%
\\
& 
\MIFA~({\em memory aided}) & 
84.2  $\pm$ 0.4 \%&
84.1  $\pm$ 0.4 \%&
84.6  $\pm$ 0.1 \%&
84.5  $\pm$ 0.1 \%&
84.6  $\pm$ 0.1 \%&
84.4  $\pm$ 0.1 \%
\\
\bottomrule
\end{tabular}
\end{footnotesize}
}
\end{table}
\begin{table}[!t]
\centering
\caption{\footnotesize 
Results after different $\mathsf{Dirichlet}$ parameter $\alpha$.
$ p_i^t = p_i (\gamma \sin(2 \pi / P  \cdot t) + (1 - \gamma))$.
}
\label{tab: exp supp dir alpha}
\resizebox{\linewidth}{!}{
\begin{footnotesize}
\begin{tabular}{c|c|p{1.7cm} p{1.7cm}|p{1.7cm} p{1.7cm}|p{1.7cm} p{1.7cm}}
    \toprule
    \multirow{2}{*}{\begin{tabular}{@{}c@{}}{\bf Unavailable} \\ {\bf Dynamics} \end{tabular}} &
    {\bf Datasets} &
    \multicolumn{2}{c|}{\bf $\alpha = 0.05$} & 
    \multicolumn{2}{c|}{\bf $\alpha = 0.1$} & 
    \multicolumn{2}{c}{\bf $\alpha = 1.0$} \\
    \cline{2-8}
    & 
    {\bf Algorithms}&
    \multicolumn{1}{c}{\bf Train} &
    \multicolumn{1}{c|}{\bf Test} &
    \multicolumn{1}{c}{\bf Train} &
    \multicolumn{1}{c|}{\bf Test} &
    \multicolumn{1}{c}{\bf Train} &
    \multicolumn{1}{c}{\bf Test} \\
    \hline
    \multirow{6}{*}{
\begin{tabular}{@{}c@{}} 
\addlinespace[1ex]
{\bf Non}-stationary  \\ 
({\bf Sine}) \\
\adjustbox{width=0.12\linewidth}{\includestandalone{elements/figure/table_fig/sine_curve}}
\end{tabular}} 
& 
\FedAPM~({\bf ours})& 
{\bf 82.5} $\pm$ 2.1 \%&
{\bf 82.5} $\pm$ 2.4 \%&

{\bf 85.7} $\pm$ 0.9 \%&
{\bf 85.6} $\pm$ 0.9 \%&

{\bf 90.6} $\pm$ 0.2 \%&
{\bf 89.7} $\pm$ 0.3 \%
\\
& 
\FedAvg~over {\em active} & 
78.9  $\pm$ 1.6 \%&
78.5  $\pm$ 1.8 \%&

82.1  $\pm$ 1.1 \%&
82.0  $\pm$ 1.3 \%&

88.3  $\pm$ 0.1 \%&
87.5  $\pm$ 0.1 \%
\\
& 
\FedAvg~over {\em all} & 
58.5  $\pm$ 3.0 \%&
58.5  $\pm$ 3.8 \%&

71.3 $\pm$ 2.5 \%&
71.3 $\pm$ 2.8 \%&

82.0  $\pm$ 0.7 \%&
81.9  $\pm$ 0.6 \%
\\
& 
\FedAU & 
\underline{79.5}  $\pm$ 1.6 \%&
\underline{79.5}  $\pm$ 1.7 \%&

\underline{82.5} $\pm$ 1.4 \%&
\underline{82.5} $\pm$ 1.3 \%&

\underline{88.4}  $\pm$ 0.1 \%&
\underline{87.6}  $\pm$ 0.2 \%
\\
& 
\FAST & 
78.9 $\pm$ 1.3 \%&
78.9 $\pm$ 1.3 \%&
82.3  $\pm$ 1.0 \%&
82.3  $\pm$ 1.0 \%&
87.6  $\pm$ 0.1 \%& 
87.0  $\pm$ 0.1 \%
\\
\noalign{\vspace{.5mm}}
\cline{2-8}
\noalign{\vspace{.5mm}}
& 
\FedAvg~with {\em known} $p_i^t$'s  & 
84.2  $\pm$ 1.0 \%&
83.5  $\pm$ 1.0 \%&
86.3  $\pm$ 1.0 \%&
86.0  $\pm$ 1.0 \%&
91.5  $\pm$ 0.3 \%&
90.5  $\pm$ 0.1 \%
\\
& 
\MIFA~({\em memory aided}) & 
82.6  $\pm$ 0.1 \%&
82.6  $\pm$ 0.0 \%&

84.2  $\pm$ 0.4 \%&
84.1  $\pm$ 0.4 \%&

88.4  $\pm$ 0.1 \%&
87.5  $\pm$ 0.1 \%
\\
\bottomrule
\end{tabular}
\end{footnotesize}
}
\end{table}
\textbf{Impact of system-design parameters.}
In this part,
we study the impact of system-design parameter
including 
the degree of non-stationarity $\gamma$
and data heterogeneity $\alpha$
under non-stationary with sine trajectory.
The results are in~\prettyref{tab: exp supp flu gamma} and~\prettyref{tab: exp supp dir alpha}.
Overall,~\FedAPM~keeps outperforming the algorithms not assisted by memories or known statistics.

In~\prettyref{tab: exp supp dir alpha},
clients' local data becomes more heterogeneous when $\alpha$ increases.
We can see a clear increase trend in accuracy.
However,
\FedAPM~remains to attain the best accuracies both train and test when compared to the algorithms not aided by memory or known statistics.
Moreover,
it outperforms~\MIFA,
which consumes a lot of storage space,
when $\alpha = 0.1$ and $1.0$.
The observations confirm the practicality of~\FedAPM.

\end{document}